\documentclass[11pt]{article}

\usepackage[utf8]{inputenc}
\usepackage[T1]{fontenc}
\usepackage{amsmath,amssymb,amsthm,mathtools}
\usepackage{enumerate}
\usepackage{booktabs}
\usepackage[colorlinks=true,linkcolor=blue,citecolor=blue]{hyperref}
\usepackage{cleveref}
\usepackage[margin=1.2in]{geometry}
\usepackage{authblk}
\usepackage{orcidlink}
\usepackage{tikz}
\usetikzlibrary{calc, shapes.geometric, arrows.meta, positioning, decorations.pathmorphing}
\usepackage{caption}
\theoremstyle{plain}
\newtheorem{theorem}{Theorem}[section]
\newtheorem{proposition}[theorem]{Proposition}
\newtheorem{lemma}[theorem]{Lemma}
\newtheorem{corollary}[theorem]{Corollary}

\theoremstyle{definition}
\newtheorem{definition}[theorem]{Definition}
\newtheorem{construction}[theorem]{Construction}

\theoremstyle{remark}
\newtheorem{remark}[theorem]{Remark}

\newcommand{\R}{\mathbb{R}}
\newcommand{\Z}{\mathbb{Z}}
\newcommand{\Jplus}{J^+}
\newcommand{\Jminus}{J^-}
\newcommand{\Vol}{\mathrm{Vol}}

\DeclareMathOperator*{\argmin}{arg\,min}

\begin{document}

\title{On the Uniqueness of Embeddings\\ of Causal Sets}
\author[1,2]{Nathan Madsen\,\orcidlink{0000-0001-7791-0481}\thanks{Email: \texttt{madsen@ucsb.edu}}}
\affil[1]{Department of Physics, University of California, Santa Barbara}
\affil[2]{Department of Mathematics, University of California, Santa Barbara}
\date{June 28, 2026}
\maketitle

\begin{abstract}
	We introduce the notion of a well-conditioned embedding of a causal set into a Lorentzian manifold and prove that if a causal set admits well-conditioned embeddings into two manifolds, then their interiors are related by an $\varepsilon$-approximate isometry. To justify the definition, we show that in the high-density limit a Poisson sprinkling almost surely yields a causal set possessing a well-conditioned embedding. The error $\varepsilon$ is given explicitly and tends to zero in the high-density limit (Corollary~\ref{cor:quantitative}).
\end{abstract}

\section{Introduction}\label{sec:intro}

A causal set is a locally finite partial order, conjectured to encode the geometry of a Lorentzian manifold in the case that it faithfully embeds. The \emph{Hauptvermutung} of causal set theory asserts that this embedding is essentially unique: a single causal set cannot embed faithfully into two macroscopically distinct spacetimes (Figure~\ref{fig:embedding}). We prove a quantitative form of this conjecture for finite causal sets arising from Poisson sprinkling.

Recently, M\"uller~\cite{Mueller2025} formulated two mathematically precise versions of the conjecture and showed that the direct ``finite-set'' formulation, in which the embeddings preserve only the causal order, is false: he constructs explicit causal sets faithfully embeddable into pairs of non-isometric Lorentzian manifolds. M\"uller's countable version, by contrast, requires the embedding to be a \emph{dense} subset of the manifold and works in a limit where the causal set has been pre-completed to a continuum object. Neither result addresses the physically relevant CST setting: causal sets in CST are \emph{finite} structures obtained by Poisson sprinkling at finite density~\cite{Surya2019}, in which the causal order alone is famously insufficient (this is essentially the content of M\"uller's negative result), but the combination of order \emph{and} volume information~(F2), together with the asymptotic high-density limit, is expected to suffice. This article addresses precisely that setting.

Our argument isolates the geometric content of the conjecture from its probabilistic content. We introduce a deterministic condition on an embedding $f\colon C\to(M,g)$, the \emph{well-conditioned embedding} (Definition~\ref{def:faithful}), which augments the classical ``order $+$ number'' faithfulness of Bombelli--Lee--Meyer--Sorkin~\cite{Bombelli1987} (order-preservation and volume-faithfulness) with a longest-chain/proper-time correspondence. The proof then divides into two independent parts.

\textbf{Part I} is pure geometry: if a finite causal set admits a well-conditioned embedding into each of two spacetimes, then their interiors are related by an $\varepsilon$-approximate isometry with error $\varepsilon \to 0$ (made precise in Theorem~\ref{thm:haupt}, Eq.~\eqref{eq:eps-final}) as the conditioning sharpens. This section is wholly deterministic.

\textbf{Part II} is a probabilistic result: a Poisson sprinkling at density $\rho$ yields a well-conditioned embedding with probability $1-(\rho V_M)^{-K'}$, in the high-density limit. The classical conditions follow from Chernoff concentration; the chain correspondence from Bollob\'as-Brightwell; the anchor scaffold then follows deterministically from volume-faithfulness (Lemma~\ref{lem:scaffold}), with no separate probabilistic estimate. 

Together, Parts~I and~II reduce the Hauptvermutung for Poisson sprinklings to a deterministic geometric statement: a single causal set cannot be a faithful sprinkling of two macroscopically distinct spacetimes (Corollary~\ref{cor:quantitative}).

\begin{figure}[h!]
    \centering
    \begin{tikzpicture}[
        >=Stealth,
        node distance=1.5cm,
        causet node/.style={circle, draw=black, fill=black, inner sep=1.2pt},
        emb node/.style={circle, draw=black, fill=black, inner sep=0.8pt},
        emb link/.style={gray, thin}
    ]

    \begin{scope}[shift={(-0.5, 0)}]
        \node[causet node] (c0) at (0, 0) {};
        \node[causet node] (c1) at (-0.8, 1.2) {};
        \node[causet node] (c2) at (0, 1) {};
        \node[causet node] (c3) at (0.8, 1.4) {};
        \node[causet node] (c4) at (-0.4, 2.5) {};
        \node[causet node] (c5) at (0.6, 2.7) {};

        \draw[->, thick] (c0) -- (c1);
        \draw[->, thick] (c0) -- (c2);
        \draw[->, thick] (c0) -- (c3);
        \draw[->, thick] (c1) -- (c4);
        \draw[->, thick] (c2) -- (c4);
        \draw[->, thick] (c2) -- (c5);
        \draw[->, thick] (c3) -- (c5);

        \node[below=0.6cm of c0, font=\large] {$C$};
    \end{scope}

    \draw[->, gray] (-2.6, 0.0) -- (-2.6, 2.4)
        node[midway, left, gray, font=\footnotesize, rotate=90, anchor=south] {time};

    \begin{scope}[shift={(7, 3)}]
        \draw[thick, fill=blue!5] plot [smooth cycle, tension=0.8] coordinates {
            (-2.5,-0.5) (-0.5, -1) (1.5,-0.5) (2.5, 0.5) (1, 1.5) (-1, 1) (-3, 0.5)
        };
        \node at (-1.9, 1.2) {$(M_1, g_1)$};

        \node[emb node] (e0) at (-0.3,-0.5) {};
        \node[emb node] (e1) at (-1.0, 0.1) {};
        \node[emb node] (e2) at (-0.1, 0.0) {};
        \node[emb node] (e3) at ( 0.7, 0.2) {};
        \node[emb node] (e4) at (-0.5, 0.7) {};
        \node[emb node] (e5) at ( 0.5, 0.8) {};
        \draw[emb link] (e0)--(e1) (e0)--(e2) (e0)--(e3)
                        (e1)--(e4) (e2)--(e4) (e2)--(e5) (e3)--(e5);
        \node[gray, font=\footnotesize] at (1.6,0.3) {$f_1(C)$};
    \end{scope}

    \begin{scope}[shift={(7, -2)}]
        \draw[thick, fill=blue!5] plot [smooth cycle, tension=0.8] coordinates {
            (-3,-0.5) (-1, -1.5) (1,-1) (2.5, 0) (2, 1) (0, 1.5) (-2, 1)
        };
        \node at (-2.5, 1.3) {$(M_2, g_2)$};

        \node[emb node] (g0) at ( 0.0,-0.8) {};
        \node[emb node] (g1) at (-0.9,-0.1) {};
        \node[emb node] (g2) at ( 0.2,-0.2) {};
        \node[emb node] (g3) at ( 1.0, 0.1) {};
        \node[emb node] (g4) at (-0.3, 0.6) {};
        \node[emb node] (g5) at ( 0.8, 0.5) {};
        \draw[emb link] (g0)--(g1) (g0)--(g2) (g0)--(g3)
                        (g1)--(g4) (g2)--(g4) (g2)--(g5) (g3)--(g5);
        \node[gray, font=\footnotesize] at (1.7,0.1) {$f_2(C)$};
    \end{scope}

    \draw[->, thick] (0.5, 1.8) .. controls (2.5, 3) and (4, 3) .. (5, 3)
        node[midway, above left] {$f_1$};

    \draw[->, thick] (0.5, 0.4) .. controls (2.5, -2) and (4, -2) .. (5, -2)
        node[midway, below left] {$f_2$};

    \draw[<->, dashed, thick, color=red!80!black] (7, 1.9) -- (7, -0.3)
        node[midway, right, font=\large] {$\Phi$};

    \end{tikzpicture}
    \caption{The central problem. A discrete causal set $C$ (a locally finite partial order, with time running upward) faithfully embeds into two spacetimes $(M_1, g_1)$ and $(M_2, g_2)$ as the point sets $f_1(C)$ and $f_2(C)$. The theorem produces an approximate isometry $\Phi$ relating them, $\Phi\circ f_1\approx f_2$.}
    \label{fig:embedding}
\end{figure}

\section{Definitions and setup}\label{sec:setup}

\subsection{Getting started}

\begin{definition}[Causal set]\label{def:causet}
A \emph{causal set} $(C,\preceq)$ is a locally finite partially ordered set: for every $x,y\in C$ with $x\preceq y$, the interval $I(x,y):=\{z\in C:x\preceq z\preceq y\}$ is finite.
\end{definition}

\begin{definition}[Globally hyperbolic spacetime]
\label{def:globally-hyp}
A Lorentzian manifold $(M,g)$ of dimension $d$ is \emph{globally hyperbolic} if it admits a Cauchy surface and every causal diamond $\Jplus(p)\cap\Jminus(q)$ is compact.
\end{definition}

\begin{construction}[Global auxiliary Riemannian metric]
\label{con:aux-riem}
A globally hyperbolic $(M,g)$ admits a smooth Cauchy time function $t\colon M\to\R$~\cite{BernalSanchez2005}. Set $T:=-\nabla t/|\nabla t|_g$ and define $h:=g+2T^\flat\otimes T^\flat$. Then $h$ is a smooth, globally defined, positive-definite metric with $h=\delta+O(L^2/\lambda^2)$ in $h$-normal coordinates within any region of diameter $L\ll\lambda$. We write $h_1,h_2$ for the auxiliary metrics on $M_1,M_2$.
\end{construction}

\begin{remark}[Independence of auxiliary foliation choices]
\label{rem:cauchy-choice}
While the auxiliary metrics $h_1$ and $h_2$ depend strictly on the choices of Cauchy time functions on $M_1$ and $M_2$, these two foliations are chosen completely independently of one another; no compatibility between them is required. The Karcher mean (the moving center of mass defining $\Phi$, introduced in \S\ref{subsec:karcher}; here $\ell$ denotes the smoothing scale of \S\ref{subsec:scales}) uses $h_1$ only for the source weights $w_k(x)=\exp(-d_{h_1}^2(x,p_k)/\ell^2)$ and $h_2$ only for the target objective $\sum w_k\,d_{h_2}^2(y,q_k)$. Within any mesoscopic region, any such auxiliary metric satisfies $h_i=\delta+O(L^2/\lambda^2)$ in its own normal coordinates, regardless of the specific choice of $T$. The final conclusion that $\Phi$ is an approximate isometry between $(M_1,g_1)$ and $(M_2,g_2)$ is a statement about the intrinsic Lorentzian geometry, and is entirely invariant under the arbitrary choice of auxiliary foliations.
\end{remark}

\begin{remark}[Adapted normal coordinates]
\label{rem:adapted-coords}
For computations involving the Lorentzian structure, we use $h_i$-normal coordinates with the orthonormal basis chosen so that $e_0=T_i$. In such adapted coordinates, $h_i=\delta+O(|p|^2/\lambda^2)$ (since they are $h_i$-normal) \emph{and} $g_i=\eta+O(|p|^2/\lambda^2)$ (since $T_i^\flat=-dx^0$ in this basis, so $g_i(0)=h_i(0)-2T_iT_i^T=\delta-2\,e_0e_0^T=\eta$). This dual property (a single coordinate chart in which both the auxiliary Riemannian metric is Euclidean and the Lorentzian metric is Minkowski to leading order) is what allows us to apply both Riemannian Karcher-mean estimates and Lorentzian Alexandrov--Zeeman arguments at the same basepoint.
\end{remark}

\subsection{Well-Conditioned Embeddings}

\begin{definition}[Well-conditioned embedding]
\label{def:faithful}
Let $C$ be a finite causal set and $(M,g)$ a globally  hyperbolic Lorentzian manifold of dimension $d$.  Define the \emph{intrinsic curvature scale}
\begin{equation}\label{eq:lambda-intrinsic}
   \lambda := \sup_{T}\min\!\bigl(\mathrm{inj}(M,h_T),\ |\mathrm{Rm}[g]|^{-1/2}\bigr),
\end{equation}
where the supremum runs over all smooth timelike unit vector fields $T$ on $(M,g)$, and $h_T:=g+2T^\flat\otimes T^\flat$ is the auxiliary  Riemannian metric of Construction~\ref{con:aux-riem} associated with $T$.  The supremum is achieved (or approached) by a $T$ adapted to a smooth Bernal--S\'anchez Cauchy time function~\cite{BernalSanchez2005}, and for globally hyperbolic spacetimes with bounded geometry, $\lambda$ is comparable to the purely Lorentzian quantity $|\mathrm{Rm}[g]|^{-1/2}$ up to a dimensional constant.

Fix parameters $\rho>0$ (density), $K_d>0$ (tolerance constant), and a dimensional constant $c_*\in(0,1)$ defining the admissible mesoscopic range.

We define a map $f\colon C\to M$ as a $(\rho;c_*,K_d)$-\emph{well-conditioned embedding} if:

\begin{enumerate}[(F1)]
\item \textbf{Order-preservation (exact):}
For all $x,y\in C$, $f(x)\in J^-_g(f(y))$ iff $x\preceq y$.

\item \textbf{Scale-dependent uniform density:}
There exists a non-empty range $[\tau_{\min}, c_* \lambda]$ such that every causal diamond $D=J^+_g(p)\cap J^-_g(q)\subset M$ with proper time
\begin{equation}\label{eq:F2-range}
   \tau_g(p,q)\in[\tau_{\min},c_*\lambda],
   \qquad
   \tau_{\min}:=c_*^{-1}\rho^{-1/d}(\log\rho V_M)^{2/d},
\end{equation}
and at $h$-distance $\geq c_*\lambda$ from $\partial M$ (or $\partial M=\emptyset$), the count satisfies
\begin{equation}\label{eq:F2-bound}
   \bigl||f(C)\cap D|-\rho\,\mathrm{Vol}_g(D)\bigr|
   \;\leq\;\delta_D\cdot\rho\,\mathrm{Vol}_g(D),
\end{equation}
where the tolerance scales with diamond volume as
\begin{equation}\label{eq:F2-tolerance}
   \delta_D := K_d\,\sqrt{\frac{\log(\rho V_M)}{\rho\,\mathrm{Vol}_g(D)}}.
\end{equation}

\item \textbf{Distance preservation:}\footnote{We retain (F3) as a separate hypothesis, though its logical relationship to (F1)--(F2) is open. We have been unable to construct a configuration satisfying (F1)--(F2) that violates (F3), which suggests that the strong volume control of (F2) may already constrain the longest-chain distances; we expect, however, that any such implication is non-trivial, and we do not assume it. Whether (F2) implies (F3) is left to future work. All results below require only that (F3) hold, which a Poisson sprinkling guarantees almost surely (Theorem~\ref{thm:poisson-wc}).}
The discrete longest chains approximate the continuum proper time: for every $x\preceq y$ in $C$ whose images lie in a common mesoscopic diamond of proper-time height in the admissible range $[\tau_{\min},c_*\lambda]$,
\begin{equation}\label{eq:F3-bound}
   \biggl|\frac{\ell_C(x,y)}{(m_d\rho)^{1/d}}-\tau_g\bigl(f(x),f(y)\bigr)\biggr|
   \;\leq\;
   C_d\Bigl(\frac{\tau_g^2}{\lambda^2}
   +\frac{\log^{3/2}(\rho V_M)}{(\rho V_M)^{1/(2d)}}\Bigr)\cdot\tau_g,
\end{equation}
where $m_d$ is the Myrheim--Meyer constant for the Minkowski causal order, $\tau_g$ is the proper time, and $\ell_C(x,y)$ is the length of the longest causal chain between $x$ and $y$.

\end{enumerate}
\end{definition}

\subsection{The smoothing scale and scale hierarchy}
\label{subsec:scales}

The proof juggles several scales, each playing a distinct role. We summarize them here for quick reference; the table should be consulted as a glossary while reading later proofs.

\begin{definition}[Scale glossary]\label{def:scales}
The proof uses the scales listed in Table~\ref{tab:scales}, related as $\rho^{-1/d}\ll\ell\ll\alpha\ell\ll\lambda$.

\begin{table}[t]
\centering
\caption{Scales and dimensionless parameters used throughout the proof.}
\label{tab:scales}
\begin{tabular}{@{}lll@{}}
\toprule
\textbf{Symbol} & \textbf{Meaning} & \textbf{Constraint}\\
\midrule
$\rho^{-1/d}$ & Discreteness scale (mean point spacing) & Smallest meaningful scale\\
$\ell$ & Karcher mean smoothing scale & $\rho^{-1/d}\ll\ell\ll\lambda$\\
$\alpha$ & Cutoff width parameter & $\alpha\geq 16$, dimensionless\\
$\alpha\ell$ & Cutoff support radius & Effective support of $w_k(x)$\\
$\lambda$ & Curvature scale & Equation~\eqref{eq:lambda-intrinsic}\\
$c_*$ & F2 admissible range constant & Dimensional, $c_*\in(0,1)$\\
$V_M$ & Manifold $g$-volume & Finite (or precompact)\\
$\delta_D$ & F2 tolerance at scale $D$ & $K_d\sqrt{\log\rho V_M/(\rho V_D)}$\\
$\delta_\ell$ & At the smoothing scale & Specialization of $\delta_D$\\
$\varepsilon_\tau$ & Distance preservation error at scale $\alpha\ell$ & $O\bigl((\alpha\ell)^2/\lambda^2+(\rho V)^{-1/(2d)}\log^{3/2}\bigr)$, Eq.~\eqref{eq:eps-tau-scale}\\
$\varepsilon$ & Final isometry error & Equation~\eqref{eq:eps-final}\\
\bottomrule
\end{tabular}
\end{table}

\smallskip
\noindent
\textbf{Where each scale appears:}
The smoothing scale $\ell$ controls the Karcher mean smoothing (Construction~\ref{con:phi}); the cutoff support $\alpha\ell$ localizes the Karcher minimization to a strongly-convex ball (Remark~\ref{rem:aux-curv}); the anchor scaffold in the trilateration argument (Section~\ref{sec:trilat}) uses anchor positions at time offset $8\alpha\ell$ and spatial offset $2\alpha\ell$ from the center.

In Theorem~\ref{thm:haupt}, $\ell$ is optimized to $\ell_*=\rho^{-1/(5d)}\lambda^{4/5}$, achieving $\varepsilon\to 0$ as $\rho\to\infty$.
\end{definition}

We write
\begin{equation}\label{eq:Mcirc-def}
   M^\circ:=\{x\in M:\ d_h(x,\partial M)>c_*\lambda+12\alpha\ell\}
\end{equation}
for the \emph{deep interior}: the region in which every active ball $B^h_{2\alpha\ell}(x)$ and its anchor scaffold lie at macroscopic distance from the boundary. When $\partial M=\emptyset$ we set $M^\circ:=M$.

\subsection{Distance preservation}
\label{subsec:dist-pres}

The longest chain length $\ell_C(x,y)$ between causally related elements $x\prec y$ is a combinatorial invariant of the abstract causal set $C$, with no reference to any embedding, and so is the same for both embeddings $f_1,f_2$. Combined with the distance-preservation condition~(F3) of Definition~\ref{def:faithful}, this yields an approximate preservation of proper times between $(M_1,g_1)$ and $(M_2,g_2)$ that depends only on the causal set.

\begin{corollary}[Approximate distance preservation]
\label{cor:dist-pres}
Let $f_1,f_2$ be two well-conditioned embeddings of the causal set $C$ into $(M_1,g_1)$ and $(M_2,g_2)$ respectively. For $x\prec y$ in $C$ with both $f_1(x),f_1(y)$ and $f_2(x),f_2(y)$ in mesoscopic diamonds of diameter $L$ and Minkowski volume $V\sim L^d$:
\begin{equation}\label{eq:dist-pres}
   |\tau_1(f_1(x),f_1(y))-\tau_2(f_2(x),f_2(y))|
   \;\leq\;
   \varepsilon_\tau(L)\cdot\max(\tau_1,\tau_2),
\end{equation}
where $\varepsilon_\tau(L)=O\bigl(L^2/\lambda^2 +(\rho V)^{-1/(2d)}\log^{3/2}(\rho V)\bigr) \to 0$ as $\rho V\to\infty$. Throughout the construction the corollary is applied at the mesoscopic anchor scale $L\sim\alpha\ell$ (all anchor and active-point separations are $O(\alpha\ell)$), and we abbreviate the corresponding value as
\begin{equation}\label{eq:eps-tau-scale}
   \varepsilon_\tau:=\varepsilon_\tau(\alpha\ell) =O\bigl((\alpha\ell)^2/\lambda^2 +(\rho(\alpha\ell)^d)^{-1/(2d)}\log^{3/2}(\rho(\alpha\ell)^d)\bigr).
\end{equation}
\end{corollary}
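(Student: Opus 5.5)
The plan is a triangle inequality through the abstract invariant $\ell_C(x,y)$. Set $\hat\tau(x,y):=\ell_C(x,y)/(m_d\rho)^{1/d}$. This quantity is defined purely from the poset $(C,\preceq)$ and the density parameter $\rho$, so it takes the same value whether we view $C$ through $f_1$ or $f_2$. Since $x\prec y$ and $f_1,f_2$ satisfy (F1), both $f_i(x)\preceq f_i(y)$ hold causally in $M_i$, so $\tau_i:=\tau_{g_i}(f_i(x),f_i(y))$ is well defined. We then write
\[
|\tau_1-\tau_2|\le|\tau_1-\hat\tau|+|\hat\tau-\tau_2|.
\]

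We bound each term by (F3) applied to $f_i$. The hypothesis says the images lie in a mesoscopic diamond of diameter $L$; we need its proper-time height to be in the admissible range $[\tau_{\min},c_*\lambda]$. We assume this implicitly, by taking $L\le c_*\lambda$ and $L\ge\tau_{\min}$, as in the intended application at $L\sim\alpha\ell$ with $\rho^{-1/d}\ll\ell\ll\lambda$. (F3) then gives
\[
|\hat\tau-\tau_i|\le C_d\bigl(\tau_i^2/\lambda^2+\log^{3/2}(\rho V_M)(\rho V_M)^{-1/(2d)}\bigr)\tau_i.
\]
Since $\tau_i\le L$ (the proper time within a diamond of diameter $L$ is bounded by $L$ up to a constant, using the adapted coordinates of Remark~\ref{rem:adapted-coords} where $g_i=\eta+O(L^2/\lambda^2)$ and $h_i=\delta+O(L^2/\lambda^2)$), the curvature term is $O(L^2/\lambda^2)$. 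Summing the two bounds and using $\tau_i\le\max(\tau_1,\tau_2)$ gives
\[
|\tau_1-\tau_2|\le 2C_d\bigl(L^2/\lambda^2+\eta\bigr)\max(\tau_1,\tau_2),
\]
where $\eta$ is the statistical term.

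The remaining step, and the only delicate one, is to match the statistical term to the form stated in $\varepsilon_\tau(L)$, which uses the local count $\rho V$ rather than the global $\rho V_M$. As written, (F3) has $(\rho V_M)^{-1/(2d)}\log^{3/2}(\rho V_M)$. Since $V\le V_M$, the decay factor satisfies $(\rho V_M)^{-1/(2d)}\le(\rho V)^{-1/(2d)}$. The logarithm, however, goes the wrong way, and the mismatch is a factor $(\log\rho V_M/\log\rho V)^{3/2}$. In the mesoscopic regime $\rho V\ge(\log\rho V_M)^2$, forced by $\tau_{\min}$ in (F2), this ratio is bounded polylogarithmically. My first approach would be to absorb it into the $O(\cdot)$ as a constant depending on how $L$ scales with $\rho$, since $\log\rho V$ and $\log\rho V_M$ are comparable when $L$ is a fixed power of $\rho$ and $\lambda$, as for $\ell_*$. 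If that is too lossy, I would state the bound with $\log^{3/2}(\rho V_M)$, which is harmless because it still tends to $0$.

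Finally, $\varepsilon_\tau(L)\to0$ as $\rho V\to\infty$ with $L/\lambda\to0$. Specializing to $L=\alpha\ell$ yields \eqref{eq:eps-tau-scale}.
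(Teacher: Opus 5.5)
Your proposal is correct and is the paper's argument. You apply (F3) to each embedding against the common poset invariant $\ell_C(x,y)/(m_d\rho)^{1/d}$ and close with the triangle inequality; the paper leaves implicit the admissibility of the diamond and the bound $\tau_i\lesssim L$, which you spell out.

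The step you flag as delicate closes with no loss. The function $n\mapsto n^{-1/(2d)}\log^{3/2}n$ is decreasing for $n>e^{3d}$. So $\rho V\le\rho V_M$ directly gives $(\rho V_M)^{-1/(2d)}\log^{3/2}(\rho V_M)\le(\rho V)^{-1/(2d)}\log^{3/2}(\rho V)$. Neither of your workarounds is needed: absorbing the log ratio only works when $L$ is a fixed power of $\rho$, and restating with $\log\rho V_M$ changes the claim.
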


\begin{proof}
By condition~(F3) of Definition~\ref{def:faithful}, both proper times approximate the common combinatorial quantity $\ell_C(x,y)/(m_d\rho)^{1/d}$ with relative error $\varepsilon_\tau(L)$; the triangle inequality gives the claim.
\end{proof}

\section{Lorentzian trilateration}\label{sec:trilat}

\paragraph{The anchor configuration.}
Fix once and for all the following target geometric configuration of $d+1$ anchor positions in $\R^{1,d-1}$. The anchors are pushed out to mesoscopic radius $8\alpha\ell$ so that the \emph{entire} active ball $B_{2\alpha\ell}(x)$ (not merely the inner core) sits strictly between the anchor cones; this is what allows the trilateration to localize every active point using only timelike data:
\begin{equation}\label{eq:anchor-config}
   a_0^*:=-8\alpha\ell\,e_0, \quad
   a_d^*:=+8\alpha\ell\,e_0, \quad
   a_m^*:=8\alpha\ell\,e_0+2\alpha\ell\,e_m
   \ \ (m=1,\ldots,d-1),
\end{equation}
where $\{e_0,e_1,\ldots,e_{d-1}\}$ is the standard Minkowski basis ($e_0$ timelike, $e_m$ spatial). The displacement vectors from $a_0^*$ are
\begin{equation}\label{eq:anchor-displacements}
   v_d^*:=a_d^*-a_0^*=16\alpha\ell\,e_0,
   \qquad
   v_m^*:=a_m^*-a_0^*=16\alpha\ell\,e_0+2\alpha\ell\,e_m \ \ (m=1,\ldots,d-1).
\end{equation}

\begin{lemma}[Properties of the target configuration]
\label{lem:target-config}
Assume $\alpha\geq 16$. The configuration~\eqref{eq:anchor-config} satisfies:
\begin{enumerate}[(C1)]
\item \textbf{Pairwise causal relations:}
$a_0^*$ is in the strict chronological past of every other $a_m^*$; specifically, $\eta(a_m^*-a_0^*,a_m^*-a_0^*) \leq-(16\alpha\ell)^2+(2\alpha\ell)^2<0$.

\item \textbf{Active-ball coverage:}
Every $p\in\R^d$ with $|p|\leq 2\alpha\ell$ satisfies $\eta(a_m^*-p,a_m^*-p)<0$ for each $m=0,\ldots,d$ (i.e., $p$ is strictly chronologically related to every anchor).

\item \textbf{Well-conditioned basis:}
The matrix $V^*$ with rows $(v_m^*)^T$ is invertible with $\|V^*\|=\Theta(\alpha\ell)$ and $\|(V^*)^{-1}\|=\Theta(1/(\alpha\ell))$, hence condition number $\kappa(V^*)=O(1)$.
\end{enumerate}
\end{lemma}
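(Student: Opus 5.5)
The three properties are elementary Minkowski computations, and every quantity scales homogeneously in $\alpha\ell$. My plan is therefore to factor out $\alpha\ell$ and reduce each claim to a statement about a fixed, dimension-dependent configuration. Write $a^*_m=\alpha\ell\,\hat a_m$ and $p=\alpha\ell\,\hat p$. Then $\eta(a_m^*-p,a_m^*-p)=(\alpha\ell)^2\,\eta(\hat a_m-\hat p,\hat a_m-\hat p)$, so signs do not depend on $\alpha\ell$. I expect the hypothesis $\alpha\geq16$ to play no role in (C1)--(C3). It only fixes the mesoscopic regime used later.

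\textbf{(C1).} I would compute directly from \eqref{eq:anchor-displacements}. For $v_d^*=16\alpha\ell\,e_0$ we get $\eta=-(16\alpha\ell)^2$. For $v_m^*=16\alpha\ell\,e_0+2\alpha\ell\,e_m$ we get $\eta=-(16\alpha\ell)^2+(2\alpha\ell)^2<0$. In both cases the $e_0$-component is positive, so each vector is future-directed timelike. Hence $a_0^*$ lies in the strict chronological past of every other anchor, and the stated bound holds for all $m$.

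\textbf{(C2).} Split $p=p^0e_0+\vec p$ with $(p^0)^2+|\vec p|^2\leq(2\alpha\ell)^2$, so that $|p^0|\leq2\alpha\ell$ and $|\vec p|\leq2\alpha\ell$.
\begin{itemize}
\item For $a_0^*$ and $a_d^*$, the time component of $a^*-p$ has absolute value at least $8\alpha\ell-2\alpha\ell=6\alpha\ell$, and the spatial part has norm at most $2\alpha\ell$. This gives $\eta\leq-36(\alpha\ell)^2+4(\alpha\ell)^2<0$.
\item For $a_m^*$ with $1\leq m\leq d-1$, the time component is again at least $6\alpha\ell$. The spatial part is $2\alpha\ell\,e_m-\vec p$, whose norm is at most $4\alpha\ell$ by the triangle inequality. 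This gives $\eta\leq-36(\alpha\ell)^2+16(\alpha\ell)^2<0$.
\end{itemize}
The margin is uniform, with $\eta\leq-20(\alpha\ell)^2$. This is worth recording because later perturbation arguments under curvature corrections $O((\alpha\ell)^2/\lambda^2)$ will need it.

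\textbf{(C3).} Write $V^*=\alpha\ell\,A$, where $A$ has rows $16e_0$ and $16e_0+2e_m$ for $m=1,\ldots,d-1$. Order the coordinates as $(e_0,e_1,\ldots,e_{d-1})$ and the rows as $v_d^*,v_1^*,\ldots,v_{d-1}^*$. Then $A$ is block lower-triangular:
\[
A=\begin{pmatrix}16&0\\ 16\,\mathbf 1&2I_{d-1}\end{pmatrix},\qquad
A^{-1}=\begin{pmatrix}1/16&0\\ -\tfrac12\mathbf 1&\tfrac12 I_{d-1}\end{pmatrix},
\]
where $\mathbf 1$ is the all-ones column of length $d-1$. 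One checks $AA^{-1}=I$ directly, and $\det A=16\cdot2^{d-1}\neq0$. Hence $\|V^*\|=\alpha\ell\,\|A\|$ and $\|(V^*)^{-1}\|=\|A^{-1}\|/(\alpha\ell)$. Both $\|A\|$ and $\|A^{-1}\|$ are fixed positive constants depending only on $d$, for instance bounded above by their Frobenius norms, which are $O(\sqrt d)$. Therefore $\kappa(V^*)=\|A\|\,\|A^{-1}\|=O_d(1)$.

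The only point I would flag is interpretive rather than technical: the $\Theta(\cdot)$ and $O(1)$ constants in (C3) depend on $d$. That is consistent with the paper's convention of dimensional constants, so there is no real obstacle.
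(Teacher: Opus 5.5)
Your proof is correct and follows essentially the same route as the paper: direct evaluation of $\eta(v_m^*,v_m^*)$ for (C1), the componentwise bounds $|p^0|,|\vec p|\leq 2\alpha\ell$ giving a time component of size at least $6\alpha\ell$ against a spatial part of norm at most $4\alpha\ell$ for (C2), and factoring $V^*=\alpha\ell\cdot(\text{fixed } d\text{-dependent matrix})$ for (C3). Two small differences are worth noting. In (C3) you compute the block-triangular inverse explicitly, after a row permutation that leaves operator norms unchanged, where the paper uses row reduction; this is a slightly more concrete version of the same argument. In (C2) your triangle-inequality bound $16(\alpha\ell)^2$ on the spatial part is a bit tighter than the paper's $20(\alpha\ell)^2$.
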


\begin{proof}[Proof sketch]
Each of (C1)--(C3) is a direct verification using the explicit form~\eqref{eq:anchor-config} with $\alpha\geq 16$:
\begin{itemize}
\item (C1) follows from $\eta(v_d^*,v_d^*)=-(16\alpha\ell)^2$ and $\eta(v_m^*,v_m^*)=-(16\alpha\ell)^2+(2\alpha\ell)^2<0$.

\item (C2) holds because the time offset $8\alpha\ell$ of each upper anchor exceeds the spatial offset ($2\alpha\ell$) plus the active-ball radius ($2\alpha\ell$) plus the active-ball half-height ($2\alpha\ell$): the worst-case top point $p^0=+2\alpha\ell$, $|\vec p|=2\alpha\ell$ satisfies $(8-2)\alpha\ell=6\alpha\ell>(2+2)\alpha\ell =4\alpha\ell$, so $\eta(a_m^*-p,a_m^*-p)<0$; the lower anchor at $-8\alpha\ell$ is covered with even larger margin.

\item (C3) follows by writing the basis matrix $V^*$ explicitly. Because the timelike and spacelike components of the anchor coordinates scale strictly proportionally, the geometric scale factor $\alpha\ell$ factors out of the matrix entirely. Since condition numbers are invariant under scalar multiplication, the condition number of $V^*$ is determined solely by a fixed matrix of dimensionless constants, making it dimension-dependent but strictly independent of $\alpha$.
\end{itemize}
Full numerical inequalities in Appendix~\ref{app:anchor-checks}.
\end{proof}

\paragraph{Lorentzian trilateration identity.}

\begin{lemma}[Trilateration identity]
\label{lem:trilat-identity}
Let $a_0,\ldots,a_d\in\R^d$ with $a_0=0$, set $v_m:=a_m$ for $m=1,\ldots,d$, and assume $\{v_m\}$ is a basis of $\R^d$. For any $p\in\R^d$, define $D_m(p):=\eta(p-a_m,p-a_m)$. Then the Lorentzian inner products $\eta(p,v_m)$ satisfy
\begin{equation}\label{eq:trilat-id}
   2\,\eta(p,v_m)
   = \eta(v_m,v_m) - \bigl(D_m(p)-D_0(p)\bigr),
   \qquad m=1,\ldots,d.
\end{equation}
Equivalently, $p=(V\eta)^{-1}\mathbf{w}(p)$ where $V$ has rows $v_m^T$ and $\mathbf{w}(p)_m =\tfrac{1}{2}\bigl[\eta(v_m,v_m)-D_m(p)+D_0(p)\bigr]$.
\end{lemma}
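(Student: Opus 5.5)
The identity is a polarization computation for the bilinear form $\eta$, so I will expand each $D_m(p)$ and subtract. Since $a_0=0$, we have $D_0(p)=\eta(p,p)$. For $m\geq 1$, bilinearity and symmetry of $\eta$ give
\[
   D_m(p)=\eta(p-v_m,p-v_m)=\eta(p,p)-2\,\eta(p,v_m)+\eta(v_m,v_m).
\]
The quadratic term $\eta(p,p)$ cancels in the difference, leaving $D_m(p)-D_0(p)=-2\eta(p,v_m)+\eta(v_m,v_m)$. Rearranging gives \eqref{eq:trilat-id} immediately. This holds for every $p\in\R^d$, with no sign or causality assumption on $D_m$.

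For the matrix reformulation, I write $\eta$ also for the Gram matrix $\mathrm{diag}(-1,1,\ldots,1)$, so that $\eta(p,v_m)=v_m^T\eta\,p$. The $d$ identities then stack into the linear system $(V\eta)\,p=\mathbf{w}(p)$, where $\mathbf{w}(p)_m=\tfrac12[\eta(v_m,v_m)-D_m(p)+D_0(p)]$ by the first part. The matrix $V$ has rows $v_m^T$, and these form a basis, so $V$ is invertible. The matrix $\eta$ is invertible with $\eta^{-1}=\eta$, so $V\eta$ is invertible, and $p=(V\eta)^{-1}\mathbf{w}(p)=\eta V^{-1}\mathbf{w}(p)$.

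There is no real obstacle here. The one point needing care is the bookkeeping: the identity must be stated with $v_m$ as the rows of $V$, and $\eta$ must act on $p$ rather than on $v_m$, so that the product is $V\eta$ and not $\eta V$. These agree here only because of the transpose and the symmetry of $\eta$. I will also note that $\mathbf{w}(p)$ depends on $p$ only through the differences $D_m-D_0$ and the fixed anchors. This is what later allows approximate proper times from Corollary~\ref{cor:dist-pres} (via $D_m=-\tau^2$ in the timelike configuration of Lemma~\ref{lem:target-config}(C2)) to determine $p$. The conditioning bound (C3) then controls how errors in the proper times propagate to errors in $p$.
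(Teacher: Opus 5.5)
Your proof is correct and matches the paper's argument: both expand $D_m(p)-D_0(p)=-2\eta(p,v_m)+\eta(v_m,v_m)$ by polarization (using $a_0=0$), then deduce invertibility of $p\mapsto V\eta\,p$ from the basis hypothesis on $V$ and the non-degeneracy of $\eta$. Your explicit inverse $(V\eta)^{-1}=\eta V^{-1}$ is a harmless extra detail.
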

\begin{proof}
Expanding and using polarization $D_m(p)-D_0(p) =\eta(p-a_m,p-a_m)-\eta(p,p)=-2\eta(p,v_m)+\eta(v_m,v_m)$ gives~\eqref{eq:trilat-id}. The map $p\mapsto(\eta(p,v_m))_m =V\eta p$ is invertible since $V$ is invertible (basis hypothesis) and $\eta$ is non-degenerate.
\end{proof}

\paragraph{Approximate rigidity from anchor distances.} We first isolate the key analytic ingredient: a quantitative rigidity bound (Lemma~\ref{lem:approx-az}) showing that any matrix which approximately preserves the Minkowski form is close to an exact Lorentz transformation. It is purely a stability statement for the quadratic-form map $B\mapsto B^T\eta B$ near $O(1,d-1)$, depending only on linear algebra and the implicit function theorem (proof in Appendix~\ref{app:approx-az}), independent of the trilateration argument. We then use it to establish the quantitative finite-point analogue of the Alexandrov--Zeeman theorem (Lemma~\ref{lem:finite-procrustes}): approximate preservation of Lorentzian distances on a well-conditioned scaffold forces a Lorentz transformation up to controlled error.

\begin{lemma}[Approximate Alexandrov--Zeeman]
\label{lem:approx-az}
Let $A\colon\R^{1,d-1}\to\R^{1,d-1}$ be an invertible
linear map that is uniformly conditioned,
$\kappa(A):=\|A\|\,\|A^{-1}\|\leq K$. If $A$ maps the null cone
$\mathcal{N}=\{v:\eta(v,v)=0\}$ to within angular error
$\theta\leq\theta_0$ (i.e., for every null $v$,
$|\eta(Av,Av)|/|Av|^2\leq\theta$), then there exist
$\Omega>0$ and $\Lambda\in O(1,d-1)$ with
\[
   \|A-\Omega\Lambda\|\leq C\,\|A\|\,\theta,
\]
where the threshold $\theta_0>0$ and the constant $C>0$ depend
only on the dimension~$d$ and the conditioning bound~$K$.
\end{lemma}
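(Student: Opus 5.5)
By homogeneity we first reduce to $\|A\|=1$: replacing $A$ by $A/\|A\|$ leaves $\theta$ unchanged and preserves the form of the conclusion. Then $\|A^{-1}\|\le K$. Next we convert the null-cone hypothesis into a bound on the symmetric matrix $S:=A^T\eta A$. The condition reads $|v^TSv|\le\theta|Av|^2\le\theta|v|^2$ for all null $v$.

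The key algebraic fact is the following. The quadratic forms vanishing on the null cone $\mathcal N$ are exactly the multiples of $\eta$. This holds for $d\ge 3$, and also for $d=2$ with a separate argument or a direct computation. We prove a quantitative version. Decompose $S=\mu\eta+S_0$, where $S_0$ is orthogonal to $\eta$ in Frobenius inner product. Then the linear map $S_0\mapsto (v^TS_0v)|_{\mathcal N\cap S^{d-1}}$ is injective on this complement. By compactness of the unit sphere in the finite-dimensional complement, it has a lower bound $c_d>0$ in sup norm. Concretely, one can evaluate on null vectors $e_0\pm e_i$ and $e_0+(e_i+e_j)/\sqrt2$, which determine $S_{00}+S_{ii}$, $S_{0i}$, and $S_{ij}$. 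Hence $\|S_0\|\le C_d\theta$ and
\[
   \|A^T\eta A-\mu\eta\|\le C_d\,\theta .
\]

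Next we control $\mu$. Since $A$ is invertible with $\|A^{-1}\|\le K$, the form $A^T\eta A$ has Lorentzian signature by Sylvester's law. Its eigenvalues have absolute values at least $\|A^{-1}\|^{-2}\ge K^{-2}$, because $|A^T\eta A w|\ge \sigma_{\min}(A)^2|w|$. Comparing with $\mu\eta+O(\theta)$, and choosing $\theta_0\le K^{-2}/(2C_d)$, forces $\mu\ge K^{-2}/2>0$. Signature considerations rule out $\mu<0$ (for $d\ge3$ directly; for $d=2$ we fix the sign via time orientation of the form). Set $\Omega:=\sqrt\mu$ and $B:=A/\Omega$. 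Then $\|B^T\eta B-\eta\|\le C\theta/\mu\le C K^2\theta$, and $\|B\|\le\sqrt2\,K$.

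The final step is the stability of $O(1,d-1)$ as the zero set of $F(B)=B^T\eta B-\eta$. Here we avoid a compactness issue: $O(1,d-1)$ is noncompact, but $B$ lies in the bounded set $\|B\|,\|B^{-1}\|\lesssim K$. We use a polar-type construction. Let $G:=B^T\eta B$, which is symmetric and close to $\eta$. Take $P:=(\eta G)^{-1/2}$, defined by the holomorphic functional calculus since $\eta G=I+O(\theta)$. This $P$ satisfies $P^T G P=\eta$ because $\eta G$ is $\eta$-self-adjoint. Put $\Lambda:=BP$. Then $\Lambda^T\eta\Lambda=P^TGP=\eta$, so $\Lambda\in O(1,d-1)$, and
\[
   \|B-\Lambda\|\le\|B\|\,\|I-P\|\le C K^3\theta .
\]
Multiplying back by $\Omega\le\|A\|/\sigma$-type bounds (all controlled by $K$) gives $\|A-\Omega\Lambda\|\le C(d,K)\|A\|\theta$.

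The main obstacle is checking that $P^TGP=\eta$ for $P=(\eta G)^{-1/2}$, which requires the $\eta$-self-adjointness identity $\eta(\eta G)^T\eta=\eta G$ to pass through the functional calculus. The other delicate point is the uniform constant $c_d$ in the null-cone injectivity, which we handle with the explicit null test vectors listed above.
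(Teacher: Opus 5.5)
Your proposal is correct for $d\geq 3$. Its first step is essentially the paper's. Both evaluate $A^T\eta A$ on the null vectors $e_0\pm e_i$ and $e_0+(e_i+e_j)/\sqrt2$ to get $A^T\eta A=\mu\eta+O(\theta\|A\|^2)$. The paper takes the multiplier to be $-Q_{00}$ instead of a Frobenius projection, which is immaterial.

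The other two steps take different routes.

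\emph{Sign of $\mu$.} The paper argues that $A$ must send timelike vectors to timelike ones because $\R^d\setminus\mathcal N$ has three components; this tacitly treats $\mathcal N$ as exactly preserved. You combine Sylvester's law, the bound $|\lambda|\geq\sigma_{\min}(A)^2$ on every eigenvalue of $A^T\eta A$, and Weyl's inequality, which is a rigorous version of that step.

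\emph{Final stability step.} The paper applies the submersion form of the implicit function theorem to $\Psi(B)=B^T\eta B$. Because $O(1,d-1)$ is noncompact, it must first confine $B$ to a compact set of bounded singular values to make the tube radius uniform, and it ends with a non-explicit $C(d,K)$. You instead write down the retraction $\Lambda=B(\eta B^T\eta B)^{-1/2}$. The identity you flagged as the main obstacle is immediate. With $X:=\eta G$ one has $X^T=\eta X\eta$, so $P^T=\eta P\eta$ for $P=X^{-1/2}$ defined by the binomial series. Then $P^TGP=\eta PXP=\eta XP^2=\eta$.

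Your route gives explicit constants, $\|B-\Lambda\|\leq\|B\|\,\|I-P\|=O(K^3\theta)$, closed off by $\Omega^2=\mu\leq\|A\|^2(1+C_d\theta)$; this replaces your vague ``$\Omega\le\|A\|/\sigma$-type'' remark. It also shows that the noncompactness enters only through $\|B\|\lesssim K$. The paper's route is more generic but needs the compactness bookkeeping.

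Drop the parenthetical claim that $d=2$ can be handled by fixing the sign through time orientation: the statement is false there. The swap $A(t,x)=(x,t)$ is orthogonal, so $\kappa(A)=1$, and it maps $\mathcal N$ onto itself, so $\theta=0$. Yet $A^T\eta A=-\eta$, so $A\neq\Omega\Lambda$ for every $\Omega>0$ and $\Lambda\in O(1,1)$. No reversal of time orientation changes the sign of $\mu$. The lemma must be read with $d\geq 3$; the paper's three-component count also needs this. That is exactly the range where your Sylvester argument applies.
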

The proof, a stability statement for the quadratic-form map $B\mapsto B^T\eta B$ near
$O(1,d-1)$, is deferred to Appendix~\ref{app:approx-az}. The
uniform-conditioning hypothesis is essential rather than cosmetic:
since $O(1,d-1)$ is noncompact, the stability constant degrades for
ill-conditioned $A$ (large Lorentz boosts), so a bound
$\kappa(A)\leq K$ cannot be dispensed with. At each call site $A$ is
shown to satisfy $\kappa(A)=O(1)$, so $C$ and $\theta_0$ are
effectively dimensional there.

\begin{lemma}[Finite Lorentzian Procrustes\footnote{In matrix approximation, the Orthogonal Procrustes problem is to find an orthogonal transformation that most closely maps a given matrix to a target. The term derives from the bandit of Attica who forced his guests to fit his bed by stretching or cutting them; for the classical account, see Plutarch, \textit{Life of Theseus}, 11.1, Pausanias, \textit{Description of Greece}, 38.5.}]
\label{lem:finite-procrustes}
Let $\{a_0,\ldots,a_d\}\subset\R^d$ satisfy properties (C1)--(C3) of Lemma~\ref{lem:target-config} (we may take $a_0=0$ by translation). Here and below $\|\cdot\|$ denotes the Euclidean norm on $\R^d$ together with the operator norm it induces on $d\times d$ matrices. Let $\{b_0,\ldots,b_d\}\subset\R^d$ with $b_0=0$ satisfy the boundedness hypothesis
\begin{equation}\label{eq:proc-bdd}
   \|b_m\|\leq C_{\mathrm{anch}}\,\alpha\ell
   \quad\text{for each }m=0,\ldots,d,
\end{equation}
where $C_{\mathrm{anch}}$ is a fixed dimensional constant (in our application, $C_{\mathrm{anch}}=12$, coming from Lemma~\ref{lem:scaffold}, which places each $b_m$ within $\varepsilon_\tau\alpha\ell\ll\alpha\ell$ of an ideal target point $b_m^*$ satisfying $\|b_m^*\|\leq 8\sqrt2\,\alpha\ell$; the triangle inequality then gives $\|b_m\|\leq\|b_m^*\|+\varepsilon_\tau\alpha\ell\leq(8\sqrt2+\varepsilon_\tau)\alpha\ell\leq 12\alpha\ell$ for $\rho$ large). Suppose further
\begin{equation}\label{eq:proc-hyp}
   \bigl|\eta(b_i-b_j,b_i-b_j)-\eta(a_i-a_j,a_i-a_j)\bigr|
   \leq\varepsilon
   \quad\text{for all }0\leq i,j\leq d,
\end{equation}
with $\varepsilon\leq c_0(\alpha\ell)^2$ for $c_0$ sufficiently small. Then there exists $\hat\Lambda\in O(1,d-1)$ such that
\begin{equation}\label{eq:proc-conc}
   \|b_m-\hat\Lambda a_m\|\leq C_d\,\frac{\varepsilon}{\alpha\ell}
   \quad\text{for each }m=0,\ldots,d,
\end{equation}
where $C_d$ is a dimensional constant. With $\varepsilon=\varepsilon_\tau(\alpha\ell)^2$ this reads $\|b_m-\hat\Lambda a_m\|\leq C_d\,\varepsilon_\tau\,\alpha\ell$.

Moreover, the projected transformation is uniformly conditioned:
\begin{equation}\label{eq:proc-boost}
   \|\hat\Lambda\|\leq C_\Lambda,
   \qquad
   \sigma_{\min}(\hat\Lambda)\geq c_\Lambda,
\end{equation}
for dimensional constants $C_\Lambda\geq c_\Lambda>0$ depending only on $d$ (and on the fixed ratios of the ideal configuration), not on $\alpha$, $\ell$, or the basepoint.
\end{lemma}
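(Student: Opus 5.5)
We reduce the problem to linear algebra by constructing a candidate linear map from the two scaffolds and then applying Lemma~\ref{lem:approx-az}. With $a_0=b_0=0$, set $v_m=a_m$, $u_m=b_m$ for $m=1,\dots,d$. Let $V$ be the matrix with rows $v_m^T$ and $U$ the matrix with rows $u_m^T$. Define the linear map $A$ by $A v_m=u_m$, i.e.\ $A=U^TV^{-T}$.

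\emph{Step 1: $A$ nearly preserves the Gram matrix.} By (C3), $V$ is invertible with $\|V^{-1}\|=O(1/(\alpha\ell))$, and by~\eqref{eq:proc-bdd} $\|U\|=O(\alpha\ell)$. Hence $\|A\|\le\|U\|\,\|V^{-1}\|=O(1)$. Polarization, $2\eta(x,y)=\eta(x,x)+\eta(y,y)-\eta(x-y,x-y)$, applied to~\eqref{eq:proc-hyp} for the pairs $(0,i)$, $(0,j)$, $(i,j)$, gives
\[
|\eta(u_i,u_j)-\eta(v_i,v_j)|\le\tfrac32\varepsilon .
\]
In matrix form this reads $\|U\eta U^T-V\eta V^T\|=O(\varepsilon)$, and therefore
\[
\|A^T\eta A-\eta\|=\|V^{-1}(U\eta U^T-V\eta V^T)V^{-T}\|=O\bigl(\varepsilon/(\alpha\ell)^2\bigr)=:O(\theta),
\]
with $\theta\le C c_0$.

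\emph{Step 2: conditioning of $A$ and application of Lemma~\ref{lem:approx-az}.} This is the step I expect to be the main obstacle, since the lemma needs $\kappa(A)\le K$ with $K$ dimensional, and that requires a lower bound on $\sigma_{\min}(A)$. Note that $|\det A|^2=|\det(A^T\eta A)|/|\det\eta|=1+O(\theta)$, so $|\det A|\ge 1/2$ for $c_0$ small. Combined with $\|A\|=O(1)$, this gives
\[
\sigma_{\min}(A)\ge|\det A|/\|A\|^{d-1}\ge c>0,
\]
hence $\kappa(A)\le K(d)$. For null $v$ we have
\[
|\eta(Av,Av)|=|v^T(A^T\eta A-\eta)v|\le C\theta|v|^2\le C'\theta|Av|^2,
\]
so the hypothesis of Lemma~\ref{lem:approx-az} holds with angular error $O(\theta)$. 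The lemma yields $\Omega>0$ and $\Lambda\in O(1,d-1)$ with $\|A-\Omega\Lambda\|\le C\theta$. The conformal factor must then be removed. Since $(\Omega\Lambda)^T\eta(\Omega\Lambda)=\Omega^2\eta$, comparing with $A^T\eta A=\eta+O(\theta)$ gives $|\Omega^2-1|=O(\theta)$, so $|\Omega-1|=O(\theta)$. Setting $\hat\Lambda:=\Lambda$ we get $\|A-\hat\Lambda\|=O(\theta)$.

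\emph{Step 3: conclusion.} We have
\[
\|b_m-\hat\Lambda a_m\|=\|(A-\hat\Lambda)a_m\|\le C\theta\,\|a_m\|=O\bigl(\varepsilon/(\alpha\ell)^2\bigr)\cdot O(\alpha\ell)=O\bigl(\varepsilon/(\alpha\ell)\bigr),
\]
which is~\eqref{eq:proc-conc}. The case $m=0$ is trivial because $a_0=b_0=0$. For~\eqref{eq:proc-boost}, $\|\hat\Lambda\|\le\|A\|+O(\theta)=O(1)$ and $\sigma_{\min}(\hat\Lambda)\ge\sigma_{\min}(A)-O(\theta)\ge c/2$. All constants depend only on $d$ and on the fixed ratios in (C3) (through $\|V\|\,\|V^{-1}\|$ and the constant $C_{\mathrm{anch}}$), and not on $\alpha$, $\ell$, or the basepoint. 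One point still needs checking: the constants in Lemma~\ref{lem:approx-az} must be uniform once $K$ is fixed. The statement asserts this, so no extra compactness argument over $O(1,d-1)$ is needed.
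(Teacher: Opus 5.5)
Your proposal is correct and follows essentially the same route as the paper's proof in Appendix~\ref{app:procrustes-algebra}. The shared steps are: the candidate map $BA^{-1}$ (your $U^TV^{-T}$), polarization to bound the $\eta$-Gram defect by $O(\varepsilon/(\alpha\ell)^2)$, the determinant argument for a dimensional lower bound on $\sigma_{\min}$, Lemma~\ref{lem:approx-az}, removal of the factor $\Omega$, and the residual bound via $\|a_m\|=O(\alpha\ell)$. Your two small variations are both valid. First, you pin $|\Omega^2-1|=O(\theta)$ from the lemma's stated conclusion alone, whereas the paper reads off $\Omega^2=-(\Lambda^T\eta\Lambda)_{00}$ from inside that lemma's proof. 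Second, you get~\eqref{eq:proc-boost} by Weyl perturbation from $A$, whereas the paper bounds $\hat\Lambda a_m$ and then uses $|\det\hat\Lambda|=1$.
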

\begin{proof}[Proof sketch]
Form the matrices $A:=[a_1\cdots a_d]$ and $B:=[b_1\cdots b_d]$ of column anchors. By Lemma~\ref{lem:target-config}(C3), $A$ is invertible with $\|A^{-1}\|=O(1/(\alpha\ell))$. The candidate $\Lambda:=BA^{-1}$ satisfies $b_m=\Lambda a_m$ exactly, but is not literally Lorentz. By polarization, the hypothesis~\eqref{eq:proc-hyp} gives $\|A^T\eta A-B^T\eta B\|=O(\varepsilon)$, which transfers to $\|\Lambda^T\eta\Lambda-\eta\|=O(\varepsilon/(\alpha\ell)^2)$ via the bound on $\|A^{-1}\|$. Since $\Lambda=BA^{-1}$ is uniformly conditioned ($\kappa(\Lambda)=O(1)$, from the anchor size bounds in the detailed proof below), Lemma~\ref{lem:approx-az} then projects $\Lambda$ onto $O(1,d-1)$ with $\|\Lambda-\hat\Lambda\|=O(\varepsilon/(\alpha\ell)^2)$, and the pointwise residual $\|b_m-\hat\Lambda a_m\| \leq\|\Lambda-\hat\Lambda\|\cdot\|a_m\| =O(\varepsilon/(\alpha\ell))$ follows. With $\varepsilon=\varepsilon_\tau(\alpha\ell)^2$ this gives the stated bound.

\emph{Uniform conditioning of $\hat\Lambda$.} Although $\hat\Lambda\in O(1,d-1)$ is invertible by construction, the group $O(1,d-1)$ is noncompact, so invertibility alone does not bound its singular values; a pure boost of rapidity $\xi$ has singular values $e^{\pm\xi}$. We bound the boost using the boundedness hypothesis~\eqref{eq:proc-bdd}. From $b_m=\hat\Lambda a_m+e_m$, in which $e_m$ denotes the residual error vector (not a coordinate basis vector) with $\|e_m\|\leq C_d\varepsilon_\tau\alpha\ell$, and $\|b_m\|\leq C_{\mathrm{anch}}\alpha\ell$, we get $\|\hat\Lambda a_m\|\leq(C_{\mathrm{anch}}+C_d\varepsilon_\tau)\alpha\ell \leq 2C_{\mathrm{anch}}\alpha\ell$ for $\rho$ large. Since the $\{a_m\}$ contain a basis matrix $V$ with $\sigma_{\min}(V)\geq c_d\alpha\ell$ (Lemma~\ref{lem:target-config}(C3)), every unit vector $u$ is a combination $u=\sum_m \beta_m (a_m/\|a_m\|)$ with $|\beta|\leq\|V^{-1}\|\cdot\|u\|\cdot\max_m\|a_m\| =O(1)$; hence $\|\hat\Lambda u\|\leq\sum_m|\beta_m|\,\|\hat\Lambda a_m\|/\|a_m\| \leq C_\Lambda$, giving $\|\hat\Lambda\|=\sigma_{\max}(\hat\Lambda) \leq C_\Lambda$. Finally, since $\hat\Lambda\in O(1,d-1)$ has $|\det\hat\Lambda|=1$ and singular values $\sigma_1\geq\cdots\geq\sigma_d>0$ with $\prod_i\sigma_i=1$,
\[
   \sigma_{\min}(\hat\Lambda)=\sigma_d
   =\Bigl(\prod_{i<d}\sigma_i\Bigr)^{-1}
   \geq\sigma_{\max}(\hat\Lambda)^{-(d-1)}
   \geq C_\Lambda^{-(d-1)}=:c_\Lambda>0,
\]
which is~\eqref{eq:proc-boost}. Geometrically, the target anchors are confined to a ball of radius $C_{\mathrm{anch}}\alpha\ell$, so $\hat\Lambda$ cannot stretch the order-$\alpha\ell$ source frame by more than an $O(1)$ factor, capping the boost rapidity at $O(1)$. Full algebraic details in Appendix~\ref{app:procrustes-algebra}.
\end{proof}

\begin{remark}[Preservation of anchor separations]
\label{rem:anchor-pres}
The hypothesis~\eqref{eq:proc-hyp} holds for the embedded anchors with $\varepsilon=O(\varepsilon_\tau)(\alpha\ell)^2$, and this is where the scaffold conditioning (Lemma~\ref{lem:scaffold}) does work that (F3) cannot. The causal set records proper times only along chains, so (F3) constrains the \emph{timelike} anchor pairs (those involving $a_0$), preserving $\eta(a_0-a_m,a_0-a_m)=-\tau_g(a_0,a_m)^2$ to relative error $\varepsilon_\tau$ by Corollary~\ref{cor:dist-pres}. The configuration also contains \emph{spacelike} pairs (e.g.\ $a_i,a_j$ with $i,j\in\{1,\ldots,d-1\}$), about which (F3) is silent. These are controlled by Lemma~\ref{lem:scaffold}: the source anchors $\{a_m\}$ and the target anchors $\{b_m\}$ each lie within $\varepsilon_\tau\alpha\ell$ of the \emph{same} ideal configuration~\eqref{eq:anchor-config}, so for every pair $i,j$, timelike or spacelike,
\[
   \bigl|\eta(b_i-b_j,b_i-b_j)-\eta(a_i-a_j,a_i-a_j)\bigr|
   \leq C_d\,(\alpha\ell)\cdot\varepsilon_\tau\alpha\ell
   =O(\varepsilon_\tau)\,(\alpha\ell)^2 .
\]
This is precisely why the scaffold proximity radius in Lemma~\ref{lem:scaffold} is $O(\varepsilon_\tau\alpha\ell)$: a coarser radius $\sim\ell$ would bound the spacelike separations only to $O(\alpha\ell^2)$, exceeding the trilateration tolerance by the diverging factor $1/(\varepsilon_\tau\alpha)$ and stalling the final rate at a constant floor rather than letting it tend to zero.
\end{remark}

\paragraph{Extending to all causally-related points.}

\begin{lemma}[Trilateration extension]
\label{lem:trilat-extension}
Let $\{a_m\},\{b_m\},\hat\Lambda$ be as in Lemma~\ref{lem:finite-procrustes}, with $\|b_m-\hat\Lambda a_m\|\leq E_a$ for each $m$. Suppose $p\in\R^d$ is chronologically related to every $a_m$, and
$q\in\R^d$ satisfies
\begin{equation}\label{eq:trilat-hyp}
   \bigl|\eta(q-b_m,q-b_m)-\eta(p-a_m,p-a_m)\bigr|\leq E_p
   \quad\text{for }m=0,\ldots,d.
\end{equation}
Then
\begin{equation}\label{eq:trilat-conc}
   \|q-\hat\Lambda p\|
   \leq C_d\,\bigl(E_p/(\alpha\ell)+E_a\bigr).
\end{equation}
\end{lemma}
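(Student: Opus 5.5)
The approach is to run the trilateration identity (Lemma~\ref{lem:trilat-identity}) simultaneously in the source frame (anchors $a_m$, point $p$) and the target frame (anchors $b_m$, point $q$), subtract, and then invert a well-conditioned linear system. Throughout I use $a_0=b_0=0$ and the Lorentz invariance $\eta(\hat\Lambda u,\hat\Lambda v)=\eta(u,v)$. I also assume $\|p\|\le C\alpha\ell$, as is the case in the application, where $p$ lies in the active ball $B_{2\alpha\ell}$. The chronological hypothesis on $p$ I regard as ensuring that the data $D_m(p)$ are the timelike quantities supplied by (F3); the linear algebra itself does not use it.

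\emph{Step 1: subtracted identities.} Write $D_m(p)=\eta(p-a_m,p-a_m)$ and $D'_m(q)=\eta(q-b_m,q-b_m)$. By~\eqref{eq:trilat-id} in each frame, for $m=1,\dots,d$,
\[
2\eta(q,b_m)-2\eta(p,a_m)=\bigl[\eta(b_m,b_m)-\eta(a_m,a_m)\bigr]-\bigl[D'_m(q)-D_m(p)\bigr]+\bigl[D'_0(q)-D_0(p)\bigr].
\]
The last two brackets are each bounded by $E_p$ by~\eqref{eq:trilat-hyp}. For the first bracket, use $\eta(a_m,a_m)=\eta(\hat\Lambda a_m,\hat\Lambda a_m)$ and write $b_m=\hat\Lambda a_m+e_m$ with $\|e_m\|\le E_a$. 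Together with $\|\hat\Lambda a_m\|\le C_\Lambda\|a_m\|=O(\alpha\ell)$ from~\eqref{eq:proc-boost}, this gives a bound $O(\alpha\ell\,E_a+E_a^2)=O(\alpha\ell\,E_a)$ once $E_a\ll\alpha\ell$.

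\emph{Step 2: linearize around $\hat\Lambda p$.} Since $\eta(p,a_m)=\eta(\hat\Lambda p,\hat\Lambda a_m)$ and $\eta(q,b_m)=\eta(q,\hat\Lambda a_m)+\eta(q,e_m)$, Step 1 yields
\[
|\eta(q-\hat\Lambda p,\hat\Lambda a_m)|\le E_p+C\alpha\ell\,E_a+\|q\|\,E_a,\qquad m=1,\dots,d.
\]
The left side is the $m$-th entry of $W\eta(q-\hat\Lambda p)$, where $W$ is the matrix with rows $(\hat\Lambda a_m)^T$, i.e.\ $W=A^T\hat\Lambda^T$ with $A=[a_1\cdots a_d]$. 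By (C3), $\sigma_{\min}(A)\ge c\,\alpha\ell$, and by~\eqref{eq:proc-boost}, $\sigma_{\min}(\hat\Lambda)\ge c_\Lambda$. Hence $\|(W\eta)^{-1}\|\le C/(\alpha\ell)$, using $\eta^{-1}=\eta$ with norm $1$. This gives
\[
\|q-\hat\Lambda p\|\le \frac{C}{\alpha\ell}\bigl(E_p+\alpha\ell\,E_a+\|q\|E_a\bigr).
\]

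\emph{Step 3: absorb the a priori size of $q$.} The only genuine subtlety is the term $\|q\|E_a$, since the statement gives no a priori bound on $q$. I would handle it by absorption. Write $\|q\|\le\|\hat\Lambda p\|+\|q-\hat\Lambda p\|\le C_\Lambda C\alpha\ell+\|q-\hat\Lambda p\|$. The resulting self-referential term is $(CE_a/(\alpha\ell))\|q-\hat\Lambda p\|$, which moves to the left-hand side as soon as $CE_a/(\alpha\ell)\le\tfrac12$. This smallness holds in our regime, where $E_a=C_d\varepsilon_\tau\alpha\ell$ with $\varepsilon_\tau\to0$. The result is~\eqref{eq:trilat-conc}.

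I expect this absorption step, together with the need for $\|p\|=O(\alpha\ell)$ and $E_a\ll\alpha\ell$, to be the main point requiring care. Outside that regime the bound fails as stated, because the $\eta(q,e_m)$ cross-term grows with $\|q\|$. I would therefore make both assumptions explicit in the proof, noting that they hold at every call site.
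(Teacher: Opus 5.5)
Your proof is correct. It is a mild but genuine variant of the paper's argument. Both proofs run Lemma~\ref{lem:trilat-identity} in the two frames. Writing $V,V'$ for the matrices with rows $a_m^T,b_m^T$, the two arguments are the exact identities
\begin{align*}
q-\hat\Lambda p&=(V'\eta)^{-1}\bigl[\mathbf{w}'(q)-\mathbf{w}(p)-(V'-V\hat\Lambda^{T})\eta\,\hat\Lambda p\bigr] &&\text{(paper)},\\
q-\hat\Lambda p&=(V\hat\Lambda^{T}\eta)^{-1}\bigl[\mathbf{w}'(q)-\mathbf{w}(p)-(V'-V\hat\Lambda^{T})\eta\,q\bigr] &&\text{(yours)}.
\end{align*}

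The paper inverts the perturbed target frame $V'\eta$ and compares it with $V\eta\hat\Lambda^{-1}$ via the resolvent identity. The anchor error then multiplies the a priori bounded vector $\hat\Lambda p$, so no absorption is needed, but the paper must argue that $V'$ inherits the conditioning of (C3).

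You invert only the source-side matrix $V\hat\Lambda^T\eta$, whose conditioning follows at once from (C3) and \eqref{eq:proc-boost}. The price is the term $\eta(q,e_m)$ in the unknown $q$, which your absorption step handles.

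Both routes need $E_a\ll\alpha\ell$ and $\|p\|=O(\alpha\ell)$. The paper uses both tacitly; the second enters as $\|\mathbf{w}(p)\|=O((\alpha\ell)^2)$ and is supplied at the call site in Corollary~\ref{cor:target-localization}. Your remark that $\|p\|=O(\alpha\ell)$ cannot be dropped is right. Take $b_m=\Lambda_\xi a_m$ for a boost $\Lambda_\xi$ of small rapidity $\xi$, set $\hat\Lambda=I$, and let $p$ lie far to the future of the anchors. Then $q:=\Lambda_\xi p$ satisfies \eqref{eq:trilat-hyp} with $E_p=0$, yet $\|q-p\|\sim E_a\|p\|/(\alpha\ell)$.

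Your Step~1 also corrects a step the paper states loosely. The paper bounds the bracket $\eta(a_m,a_m)-\eta(b_m,b_m)$ using $\|a_m-b_m\|\le E_a$, which the hypothesis does not give; it controls $\|b_m-\hat\Lambda a_m\|$, not $\|b_m-a_m\|$. Inserting $\eta(a_m,a_m)=\eta(\hat\Lambda a_m,\hat\Lambda a_m)$, as you do, is what actually justifies the $O(\alpha\ell\,E_a)$ bound.
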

\begin{proof}
By Lemma~\ref{lem:trilat-identity}, with $a_0=0$ as our choice of origin, $p$ is determined by the $\eta$-inner products $\eta(p,v_m)$ via $V\eta p=\mathbf{w}(p)$ where $\mathbf{w}(p)_m=\tfrac{1}{2}[\eta(v_m,v_m)-D_m(p)+D_0(p)]$.

Apply the same identity in the target space: with $b_0=0$ and $v_m'=b_m$, $q=(V'\eta)^{-1}\mathbf{w}'(q)$ where $V'$ has rows $(v_m')^T$ and analogously for $\mathbf{w}'(q)$.

Compute $\hat\Lambda p=\hat\Lambda(V\eta)^{-1}\mathbf{w}(p) =(V\eta\hat\Lambda^{-1})^{-1}\mathbf{w}(p)$. Since $\hat\Lambda\in O(1,d-1)$, $\hat\Lambda^T\eta\hat\Lambda =\eta$, equivalently $\eta\hat\Lambda^{-1}=\hat\Lambda^T\eta$. Hence $V\eta\hat\Lambda^{-1}=V\hat\Lambda^T\eta$, the $m$-th row of which is $v_m^T\hat\Lambda^T\eta=(\hat\Lambda v_m)^T\eta$.

By hypothesis, $b_m=\hat\Lambda a_m+e_m$ with $\|e_m\|\leq E_a$. Hence $v_m'=b_m=\hat\Lambda a_m+e_m=\hat\Lambda v_m+e_m$ (using $a_0=b_0=0$ so $v_m=a_m$, $v_m'=b_m$). The $m$-th row of $V'$ is $(v_m')^T=(\hat\Lambda v_m)^T+e_m^T$, so $V'-V\hat\Lambda^T$ has rows $e_m^T$, giving $\|V'-V\hat\Lambda^T\|\leq\sqrt{d}\cdot E_a$. Right-multiplying by $\eta$ (norm $1$): $\|V'\eta-V\hat\Lambda^T\eta\| =\|V'\eta-V\eta\hat\Lambda^{-1}\|\leq\sqrt{d}E_a$.

For $\mathbf{w}(p)-\mathbf{w}'(q)$: each component is $\tfrac{1}{2}[\eta(v_m,v_m)-\eta(v_m',v_m')] -\tfrac{1}{2}[D_m(p)-D_m'(q)]+\tfrac{1}{2}[D_0(p)-D_0'(q)]$.
The first bracket is bounded by direct estimation:
\[
   |\eta(v_m,v_m)-\eta(v_m',v_m')|
   \leq\|\eta\|\cdot(\|v_m\|+\|v_m'\|)\cdot\|v_m-v_m'\| \leq 2\|v_m\|\cdot E_a+O(E_a^2) = O(\alpha\ell\cdot E_a)
\]
for $E_a\ll\alpha\ell$. The remaining two brackets are bounded by~\eqref{eq:trilat-hyp}: each is $\leq E_p$. Total: $\|\mathbf{w}(p)-\mathbf{w}'(q)\|\leq O(\alpha\ell\cdot E_a+E_p)$.

Now
\begin{align*}
   q-\hat\Lambda p
   &=(V'\eta)^{-1}\mathbf{w}'(q)-(V\eta\hat\Lambda^{-1})^{-1}\mathbf{w}(p)\\
   &=(V'\eta)^{-1}[\mathbf{w}'(q)-\mathbf{w}(p)]
   +[(V'\eta)^{-1}-(V\eta\hat\Lambda^{-1})^{-1}]\mathbf{w}(p).
\end{align*}
We bound each term.

\emph{First term:} $\|(V'\eta)^{-1}\|=O(1/(\alpha\ell))$ (since (C3) holds for $V'$ up to $O(E_a/(\alpha\ell))$ perturbation, still $\Theta(\alpha\ell)$ in the relevant norm). Times $\|\mathbf{w}'(q)-\mathbf{w}(p)\| =O(\alpha\ell\cdot E_a + E_p)$ gives $O(E_a+E_p/(\alpha\ell))$.

\emph{Second term:} Using
$X^{-1}-Y^{-1}=X^{-1}(Y-X)Y^{-1}$:
$\|(V'\eta)^{-1}-(V\eta\hat\Lambda^{-1})^{-1}\| \leq\|(V'\eta)^{-1}\|\cdot\|V\eta\hat\Lambda^{-1}-V'\eta\| \cdot\|(V\eta\hat\Lambda^{-1})^{-1}\|$ $=O(1/(\alpha\ell))\cdot O(E_a)\cdot O(1/(\alpha\ell)) =O(E_a/(\alpha\ell)^2)$. Times $\|\mathbf{w}(p)\| =O((\alpha\ell)^2)$ gives $O(E_a)$.

Summing: $\|q-\hat\Lambda p\|\leq O(E_a+E_p/(\alpha\ell))$.
\end{proof}

\paragraph{Target localization.}

The trilateration extension yields, as an immediate corollary, the localization of the active target points. The Karcher mean defining $\Phi(x)$ is only uniquely well-defined if the target points are tightly confined. Because this localization relies strictly on timelike proper times to the anchor scaffold, it can be carried out entirely prior to the construction of the map $\Phi$.

\begin{corollary}[Active target points are localized]
\label{cor:target-localization}
Fix $x\in M_1^\circ$ and work in $h_1$-normal coordinates at $x$. Under Lemma~\ref{lem:scaffold} (the anchors $\{a_m\}$, themselves embedded points of $f_1(C)$, exist with (C1)--(C4)) and with $\hat\Lambda\in O(1,d-1)$ the Lorentz transformation approximating the coordinate map $p_k\mapsto q_k$ (Lemma~\ref{lem:finite-procrustes}), every active target point $q_k=(f_2\circ f_1^{-1})(p_k)\in M_2$ (the image of the embedded source point $p_k\in M_1$) with $w_k(x)>0$ satisfies
\begin{equation}\label{eq:target-loc}
   \bigl\|q_k-\hat\Lambda p_k\bigr\| \leq C_d\,\varepsilon_\tau\,\alpha\ell .
\end{equation}
Consequently all active $q_k$ lie within $h_2$-distance $R_*:=\|\hat\Lambda\|\cdot 2\alpha\ell+C_d\varepsilon_\tau\alpha\ell \leq C_d'\,\alpha\ell\ll\mathrm{inj}(M_2,h_2)$ of the common point $q_0:=\hat\Lambda\cdot 0$ (the image of the anchor origin), for a dimensional constant $C_d'$.
\end{corollary}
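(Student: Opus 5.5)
The plan is to apply Lemma~\ref{lem:trilat-extension} to each active source point $p_k$, with the embedded source anchors $\{a_m\}$ and target anchors $\{b_m\}$ in the roles of the two scaffolds. I first fix coordinates. In $h_1$-normal coordinates at $x$ (adapted as in Remark~\ref{rem:adapted-coords}), translate so that $a_0=0$. In the target, use $h_2$-normal coordinates centred at $b_0:=f_2(f_1^{-1}(a_0))$, adapted to $T_2$, so that $b_0=0$. In these charts $g_i=\eta+O((\alpha\ell)^2/\lambda^2)$ on balls of radius $O(\alpha\ell)$. Hence $\eta$-intervals and squared proper times agree up to relative error $O((\alpha\ell)^2/\lambda^2)$, which is already absorbed into $\varepsilon_\tau$.

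Next I supply the hypotheses of Lemmas~\ref{lem:finite-procrustes} and~\ref{lem:trilat-extension}.
\begin{enumerate}[(i)]
\item By Lemma~\ref{lem:scaffold}, the $a_m$ and $b_m$ each lie within $\varepsilon_\tau\alpha\ell$ of the ideal configuration~\eqref{eq:anchor-config}. Remark~\ref{rem:anchor-pres} then gives~\eqref{eq:proc-hyp} with $\varepsilon=O(\varepsilon_\tau)(\alpha\ell)^2$, and the bound $C_{\mathrm{anch}}=12$ gives~\eqref{eq:proc-bdd}.
\item Lemma~\ref{lem:finite-procrustes} therefore produces $\hat\Lambda$ with $E_a:=\max_m\|b_m-\hat\Lambda a_m\|\le C_d\varepsilon_\tau\alpha\ell$, together with the conditioning bound $\|\hat\Lambda\|\le C_\Lambda$.
\item An active $p_k$ (one with $w_k(x)>0$, effectively $|p_k|\le 2\alpha\ell$ after truncating the weights at $\alpha\ell$) is strictly chronologically related to every $a_m$. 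This follows from (C2) for the ideal anchors, stable under the $\varepsilon_\tau\alpha\ell$ perturbation and the $O((\alpha\ell)^2/\lambda^2)$ metric error because (C2) holds with margin $6\alpha\ell>4\alpha\ell$.
\item By (F1) applied to $f_1$, $p_k$ and each $a_m$ are related in the causal set $C$. By (F1) applied to $f_2$, the same relation holds between $q_k$ and $b_m$.
\item Each such pair has proper time at most $O(\alpha\ell)\le c_*\lambda$ and at least of order $\alpha\ell\gg\tau_{\min}$ (by the (C2) margin). So Corollary~\ref{cor:dist-pres} applies and gives $|\tau_1(p_k,a_m)-\tau_2(q_k,b_m)|\le\varepsilon_\tau\max(\tau_1,\tau_2)$. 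Squaring and using $\tau=O(\alpha\ell)$ gives~\eqref{eq:trilat-hyp} with $E_p=O(\varepsilon_\tau(\alpha\ell)^2)$. This uses that the timelike sign of $\eta(\cdot,\cdot)$ is the same on both sides, which follows from (F1).
\end{enumerate}
Lemma~\ref{lem:trilat-extension} then gives $\|q_k-\hat\Lambda p_k\|\le C_d(E_p/(\alpha\ell)+E_a)=O(\varepsilon_\tau\alpha\ell)$, which is~\eqref{eq:target-loc}.

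\textbf{Main obstacle.} The hard step is justifying the use of the single chart at $b_0$ for the $q_k$, which is a priori circular: we must know that $q_k$ lies in that chart before applying Lemma~\ref{lem:trilat-extension}. I would break the circularity with a coarse a priori bound. The point $q_k$ lies in the chronological future of $b_0$ and the chronological past of $b_d$, since these relations are inherited from $p_k$ via (F1). The diamond $J^+(b_0)\cap J^-(b_d)$ has proper-time height $\approx16\alpha\ell\ll\lambda$, so it lies within $h_2$-distance $O(\alpha\ell)$ of $b_0$. This uses the boost control $\|\hat\Lambda\|\le C_\Lambda$, i.e.\ the bounded $h_2$-size of the scaffold, to exclude thin, highly boosted diamonds. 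It places $q_k$ inside the normal chart, well below $\mathrm{inj}(M_2,h_2)$, and the trilateration argument then sharpens this bound.

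For the consequence, $\|q_k-q_0\|\le\|\hat\Lambda p_k\|+C_d\varepsilon_\tau\alpha\ell\le\|\hat\Lambda\|\,2\alpha\ell+C_d\varepsilon_\tau\alpha\ell=R_*$, using $|p_k|\le2\alpha\ell$ and $q_0=\hat\Lambda\cdot0$. By~\eqref{eq:proc-boost}, $R_*\le C'_d\alpha\ell$. Coordinate distance and $h_2$-distance agree to leading order in normal coordinates, and $\alpha\ell\ll\lambda\le\mathrm{inj}$ completes the claim.
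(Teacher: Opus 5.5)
\textbf{The route is the paper's.} You take $E_p=O(\varepsilon_\tau(\alpha\ell)^2)$ from Corollary~\ref{cor:dist-pres} and $E_a=O(\varepsilon_\tau\alpha\ell)$ from Lemma~\ref{lem:finite-procrustes}, with its hypotheses taken from Remark~\ref{rem:anchor-pres}. You then apply Lemma~\ref{lem:trilat-extension} and bound the spread via $\|\hat\Lambda\|=O(1)$. Your (F1) bookkeeping, the admissible-range check and your concern about whether the $q_k$ lie in the target chart go beyond what the paper writes.

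\textbf{Step (i) has a genuine gap.} The paper's proof rests on the same claim through Remark~\ref{rem:anchor-pres}, so this hole is shared with the paper rather than something you could fix by citation. Lemma~\ref{lem:scaffold} is an existence statement about one embedding at a time. It lets you \emph{choose} points of $f_1(C)$ near $x+a_m^*$, and would let you choose other points of $f_2(C)$ near an ideal configuration based somewhere in $M_2$. But once the $a_m$ are chosen, the target anchors $b_m=f_2(f_1^{-1}(a_m))$ are \emph{forced}, and the lemma says nothing about where they land. What you actually control are the $d$ timelike pairs $(a_0,a_m)$, via (F3). Every pair among $a_1,\dots,a_d$ is spacelike (e.g.\ $a_m^*-a_d^*=2\alpha\ell\,e_m$), and (F1) only tells you that $b_i,b_j$ stay spacelike, not what their interval is. So \eqref{eq:proc-hyp} is unsupported for these pairs. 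Indeed, the $d$ constraints $\eta(b_m,b_m)\approx\eta(a_m,a_m)$ on $b_1,\dots,b_d\in\R^d$ leave, modulo $O(1,d-1)$, exactly $d(d-1)/2$ free parameters, one per spacelike pair. Anchor-to-anchor timelike data therefore cannot determine them.

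\textbf{The boundedness half fails outright.} The bound \eqref{eq:proc-bdd}, and with it $\|\hat\Lambda\|\le C_\Lambda$ in \eqref{eq:proc-boost}, cannot come from $C$ at all. The causal set carries no information about the auxiliary field $T_2$, which Remark~\ref{rem:cauchy-choice} lets you pick independently of $T_1$. Take $M_2=M_1$, $f_2=f_1$, and let $T_2$ differ from $T_1$ near $x$ by a boost of rapidity $\xi$. Then $b_m=a_m$ and $q_k=p_k$, but in $T_2$-adapted coordinates:
\begin{itemize}
\item the target scaffold is the boosted ideal configuration, and $\hat\Lambda$ is that boost (up to curvature corrections), so $\|\hat\Lambda\|=e^{|\xi|}$;
\item the active $q_k$ really do spread over $h_2$-distances of order $e^{|\xi|}\alpha\ell$.
\end{itemize}
So (i) is false in this example, and so is $R_*\le C_d'\alpha\ell$ with a dimensional $C_d'$. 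Your a priori localization in the ``main obstacle'' paragraph has the same defect. It excludes thin, highly boosted diamonds precisely by invoking the bounded $h_2$-size of the target scaffold, which is the bound in question.

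\textbf{What would close the argument.}
\begin{itemize}
\item[(a)] Use a scaffold whose anchors are pairwise timelike, e.g.\ a chain $a_0\prec a_1\prec\cdots\prec a_d$ with affinely independent future-timelike increments. Then (F3) controls every pair in \eqref{eq:proc-hyp}.
\item[(b)] Impose an explicit bound on the tilt of $T_2$ relative to the pushforward of $T_1$, or use a target frame adapted to the target scaffold itself. The constants $C_\Lambda$, $C_d$, $C_d'$ must then be allowed to depend on that bound.
\end{itemize}

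\textbf{A minor point.} After translating to $a_0=0$, the active points have $|p_k|$ up to about $10\alpha\ell$, not $2\alpha\ell$. The spread should therefore be measured about the image of $x$ rather than $\hat\Lambda\cdot 0$; this only changes constants.
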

\begin{proof}
Each active source point satisfies $|p_k|\leq 2\alpha\ell$. By condition (C4), these points are strictly chronologically related to every anchor $a_m$, ensuring the proper times $\tau_{g_1}(p_k,a_m)$ are well-defined. Moreover, each anchor sits at time offset $8\alpha\ell$ and spatial offset at most $2\alpha\ell$ from $x$ (the ideal configuration~\eqref{eq:anchor-config}, up to the $O(\varepsilon_\tau\alpha\ell)$ scaffold proximity of Lemma~\ref{lem:scaffold}), while $|p_k|\leq 2\alpha\ell$. Hence the coordinate separation $p_k-a_m$ has time component at most $2\alpha\ell+8\alpha\ell=10\alpha\ell$ and spatial component at most $2\alpha\ell+2\alpha\ell=4\alpha\ell$, so $|p_k-a_m|\leq\sqrt{10^2+4^2}\,\alpha\ell=\sqrt{116}\,\alpha\ell\leq 11\alpha\ell$ for $\rho$ large. Consequently the trilateration data $\mathbf{w}(p_k)$ of Lemma~\ref{lem:trilat-identity} obeys $\|\mathbf{w}(p_k)\|=O((\alpha\ell)^2)$ uniformly over the entire active ball (not merely the inner core); this is the conditioning input that makes the extension bound below apply to all active points. By Corollary~\ref{cor:dist-pres}, their squares (the quantities $-\eta(p_k-a_m,p_k-a_m)$) are preserved up to relative error $\varepsilon_\tau$, giving the hypothesis~\eqref{eq:trilat-hyp} of Lemma~\ref{lem:trilat-extension} with $E_p=O(\varepsilon_\tau(\alpha\ell)^2)$. With $E_a=O(\varepsilon_\tau\alpha\ell)$ from Lemma~\ref{lem:finite-procrustes} (whose hypothesis~\eqref{eq:proc-hyp} is verified for the anchors in Remark~\ref{rem:anchor-pres}), the conclusion~\eqref{eq:trilat-conc} gives $\|q_k-\hat\Lambda p_k\|\leq C_d(E_p/(\alpha\ell)+E_a) = O(\varepsilon_\tau\alpha\ell)$, which is~\eqref{eq:target-loc}. The spread bound follows from $\|\hat\Lambda p_k\|\leq\|\hat\Lambda\|\,|p_k| \leq\|\hat\Lambda\|\cdot 2\alpha\ell$ and the bound $\|\hat\Lambda\|=O(1)$ established in Lemma~\ref{lem:finite-procrustes}.
\end{proof}

\section{Part I: Well-Conditioned $\Rightarrow$ Uniqueness}

\paragraph{Roadmap} Throughout this section we assume $C$ admits a well-conditioned embedding $f_i$ into each of $(M_1,g_1)$ and $(M_2, g_2)$ at common density $\rho$; all statements are deterministic consequences. The argument proceeds in three stages. \textbf{Stage A} (local): using the trilateration of \S 3 together with the covariance non-degeneracy forced by F2, we construct a smooth map $\Phi: M_1^\circ \to M_2^\circ$ as a moving Riemannian center of mass, and show it is a local diffeomorphism. \textbf{Stage B} (global): a degree argument upgrades the local diffeomorphism to a global one. \textbf{Stage C} (rigidity): preservation of the longest-chain proper times forces the local Procrustes map into $O(1,d-1)$, so $\Phi$ is directly an approximate isometry; the proper times carry the scale, so no separate volume-pinning step is needed. The hypotheses entering each stage are local and finite; the conclusion, a global approximate isometry of the metric tensors, is what the stages assemble. 

\begin{lemma}[Anchor scaffolds from volume-faithfulness]
\label{lem:scaffold}
Let $f\colon C\to(M,g)$ be a faithful embedding (conditions (F1)--(F2)) satisfying the
scale-separation hypothesis $\rho\lambda^d\geq c_*^{-2d}(\log\rho V_M)^2$. Then there is a dimensional constant $C_d$ such that for every $x\in M^\circ$ there exist $d+1$ embedded points $a_0,\ldots,a_d\in f(C)$ with
\begin{equation}\label{eq:scaffold-prox}
   d_h\bigl(a_m,\,x+a_m^*\bigr)\leq \varepsilon_\tau\,\alpha\ell
   \qquad(m=0,\ldots,d),
\end{equation}
where $\{a_m^*\}$ is the ideal configuration~\eqref{eq:anchor-config}, such that the frame $V$ with rows $(a_m-a_0)^T$ satisfies
\begin{equation}\label{eq:scaffold-cond}
   \sigma_{\min}(V)\geq c_d\,\alpha\ell,\qquad \kappa(V)\leq C_d,
\end{equation}
uniformly in $x$. Moreover each $a_m$ is strictly timelike-related in $g$ to the entire active ball $B^h_{2\alpha\ell}(x)$ (property (C4)).
\end{lemma}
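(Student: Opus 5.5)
The plan is to obtain each anchor $a_m$ as an embedded point inside a small causal diamond centred at the ideal location $x+a_m^*$, using (F2) only as an existence (non-emptiness) statement, and then to read off the conditioning and chronological properties from the ideal configuration by perturbation.

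\textbf{Step 1: the target points.} Fix $x\in M^\circ$ and work in the adapted $h$-normal coordinates of Remark~\ref{rem:adapted-coords}. In these coordinates $g=\eta+O((\alpha\ell)^2/\lambda^2)$ on the ball of radius $12\alpha\ell$. Since $d_h(x,\partial M)>c_*\lambda+12\alpha\ell$, each ideal point $x+a_m^*$ (note $\|a_m^*\|\leq 8\sqrt2\,\alpha\ell$) lies at $h$-distance at least $c_*\lambda$ from $\partial M$.

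\textbf{Step 2: the small diamonds.} Around each $x+a_m^*$, place a causal diamond $D_m$ of proper-time height $\tau_m:=c\,\varepsilon_\tau\alpha\ell$, with $c$ a small dimensional constant, so that $D_m\subset B^h_{\varepsilon_\tau\alpha\ell}(x+a_m^*)$. Comparing with Minkowski diamonds through $g=\eta+O(\cdot)$ makes this inclusion valid for such a $c$. Two checks are needed here.
\begin{enumerate}[(a)]
\item The height must lie in the admissible range, $\tau_m\geq\tau_{\min}$. From \eqref{eq:eps-tau-scale} we have $\varepsilon_\tau\gtrsim(\rho(\alpha\ell)^d)^{-1/(2d)}\log^{3/2}$, so $\varepsilon_\tau\alpha\ell\gtrsim\rho^{-1/(2d)}(\alpha\ell)^{1/2}\log^{3/2}$. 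Because $\ell\gg\rho^{-1/d}$, this dominates $c_*^{-1}\rho^{-1/d}(\log\rho V_M)^{2/d}$, and the scale-separation hypothesis $\rho\lambda^d\geq c_*^{-2d}(\log\rho V_M)^2$ guarantees that the range $[\tau_{\min},c_*\lambda]$ is non-empty. The upper bound $\tau_m\leq c_*\lambda$ is immediate from $\alpha\ell\ll\lambda$.
\item The diamond must be non-empty. Apply (F2): $|f(C)\cap D_m|\geq(1-\delta_{D_m})\rho\,\mathrm{Vol}_g(D_m)$. Now $\rho\mathrm{Vol}(D_m)\sim\rho\tau_m^d\gtrsim(\log\rho V_M)^2$, so $\delta_{D_m}=K_d\sqrt{\log/(\rho\mathrm{Vol})}\to0$. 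Hence $|f(C)\cap D_m|\geq1$.
\end{enumerate}
Choose any $a_m\in f(C)\cap D_m$. This gives \eqref{eq:scaffold-prox}.

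\textbf{Step 3: conditioning.} Set $V^*$ to be the matrix with rows $(a_m^*-a_0^*)^T$. The frame $V$ differs from $V^*$ row-wise by $O(\varepsilon_\tau\alpha\ell)$, plus a coordinate distortion between $h$-distance and the Euclidean norm of relative size $O((\alpha\ell)^2/\lambda^2)$. Therefore $\|V-V^*\|\leq C\varepsilon_\tau\alpha\ell$. By Weyl's inequality, $\sigma_{\min}(V)\geq\sigma_{\min}(V^*)-C\varepsilon_\tau\alpha\ell\geq c_d\alpha\ell$ for $\rho$ large, using Lemma~\ref{lem:target-config}(C3). Similarly $\|V\|\leq C\alpha\ell$, which gives $\kappa(V)\leq C_d$. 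All constants depend only on $d$ and the fixed ratios of \eqref{eq:anchor-config}, hence are uniform in $x$.

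\textbf{Step 4: property (C4).} Let $p\in B^h_{2\alpha\ell}(x)$. In Minkowski coordinates, (C2) holds with a quantitative margin: the worst case has time gap $6\alpha\ell$ against spatial reach $4\alpha\ell$, so $-\eta(a_m^*-p,a_m^*-p)\geq c(\alpha\ell)^2$. Moving $a_m$ by $\varepsilon_\tau\alpha\ell$ perturbs $\eta(a_m-p,a_m-p)$ by $O(\varepsilon_\tau(\alpha\ell)^2)$. Replacing $\eta$ by $g$ perturbs it by $O((\alpha\ell)^4/\lambda^2)$. Both are $\ll c(\alpha\ell)^2$.

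To pass from negativity of the coordinate quadratic form to genuine $g$-chronological relation, argue as follows. The straight coordinate segment from $p$ to $a_m$ has tangent $\eta$-timelike with margin, and since $g=\eta+O((\alpha\ell)^2/\lambda^2)$, it remains $g$-timelike. This gives strict chronological relation.

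\textbf{Main obstacle.} The delicate point is Step 2(a): verifying that the proximity radius $\varepsilon_\tau\alpha\ell$ actually exceeds $\tau_{\min}$ for every admissible $\ell$. This depends on the lower bound for $\varepsilon_\tau$ coming from its statistical term. If the radius only marginally exceeds $\tau_{\min}$, I would instead take $\tau_m=\max(\tau_{\min},c\,\varepsilon_\tau\alpha\ell)$ and check that $\tau_{\min}\leq c\,\varepsilon_\tau\alpha\ell$ under $\ell\gg\rho^{-1/d}(\log)^{2/d}$.
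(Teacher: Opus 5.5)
Your proposal is correct and follows essentially the same route as the paper. Both place an F2-admissible diamond of height $\sim\varepsilon_\tau\alpha\ell$ at each $x+a_m^*$, with admissibility coming from the same scale-window ratio $\varepsilon_\tau\alpha\ell/\tau_{\min}\sim c_*\sqrt{\alpha\ell}\,\rho^{1/(2d)}(\log\rho V_M)^{3/2-2/d}\to\infty$; both then apply Weyl's inequality against the ideal frame $V^*$ for the conditioning and perturb the (C2) margin to get (C4).

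Your extra touches, the small constant $c$ in the height and the straight-segment argument for genuine $g$-chronology, are harmless refinements. One correction: your fallback $\tau_m=\max(\tau_{\min},c\,\varepsilon_\tau\alpha\ell)$ would not preserve the proximity radius, so the scale-window inequality $\tau_{\min}\le c\,\varepsilon_\tau\alpha\ell$ is genuinely required.
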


\begin{remark}
The scaffold is not an independent hypothesis: it is forced by the volume condition. The
mechanism is a separation of scales. The proximity radius $\varepsilon_\tau\alpha\ell$ is coarser than the resolution $\tau_{\min}$ at which (F2) operates, so (F2) compels an embedded point to lie near each ideal anchor; yet it is finer than the frame scale $\alpha\ell$, so wherever those points fall the frame remains well-conditioned.
\end{remark}

\begin{proof}
\emph{Existence (the radius exceeds $\tau_{\min}$).} Fix $x\in M^\circ$ and an ideal anchor $a_m^*$. Consider the causal diamond $D_m$ of proper-time height $\varepsilon_\tau\alpha\ell$ centered at $x+a_m^*$. The ideal anchors satisfy $\|a_m^*\|\leq 8\sqrt2\,\alpha\ell<12\alpha\ell$, so $x+a_m^*$, and with it all of $D_m$ (whose own extent $\varepsilon_\tau\alpha\ell$ is negligible), lies within $h$-distance $12\alpha\ell$ of $x$. Since $x\in M^\circ$ means $d_h(x,\partial M)>c_*\lambda+12\alpha\ell$, this gives $d_h(D_m,\partial M)>c_*\lambda$, so (F2) applies on $D_m$ and $D_m$ is an admissible diamond; the same bound keeps every scaffold diamond, each centered within $12\alpha\ell$ of $x$, interior to $M$. The proper-time height satisfies $\tau_{\min}\leq\varepsilon_\tau\alpha\ell\leq\alpha\ell$, so the scaffold scale $\alpha\ell$ itself exceeds $\tau_{\min}$; the substantive inequality $\varepsilon_\tau\alpha\ell\geq\tau_{\min}$ holds because
\begin{equation}\label{eq:scale-window}
   \frac{\varepsilon_\tau\alpha\ell}{\tau_{\min}}
   =c_*\sqrt{\alpha\,\ell}\;\rho^{1/(2d)}\,(\log\rho V_M)^{3/2-2/d}
   \;\longrightarrow\;\infty,
\end{equation}
because the factor $\rho^{1/(2d)}$ is a positive power of $\rho$ and dominates every power of $\log\rho V_M$ (and, at the optimal smoothing scale $\ell\sim\rho^{-1/(5d)}$, the residual $\sqrt{\ell}\,\rho^{1/(2d)}\sim\rho^{2/(5d)}\to\infty$). Hence $D_m$ is F2-admissible, and by
(F2) it contains
\[
   |f(C)\cap D_m|=\rho\,\Vol_g(D_m)\bigl(1+O(\delta_{D_m})\bigr)
   =\Theta\!\bigl(\rho\,(\varepsilon_\tau\alpha\ell)^d\bigr)\gg 1
\]
embedded points. Choosing any one, call it $a_m$, gives a point with $d_h(a_m,x+a_m^*)\leq\varepsilon_\tau\alpha\ell$, which is~\eqref{eq:scaffold-prox}.

\emph{Conditioning (the radius is small relative to $\alpha\ell$).} Let $V^*$ be the ideal frame with rows $(a_m^*-a_0^*)^T$. By the explicit configuration \eqref{eq:anchor-config} (Appendix~\ref{app:anchor-checks}), $V^*$ has $\sigma_{\min}(V^*)\geq c_d^*\,\alpha\ell$ and $\kappa(V^*)\leq C_d^*$ for dimensional constants. The selected frame $V$ differs from $V^*$ row-by-row by at most $2\varepsilon_\tau\alpha\ell$ (two anchor perturbations of size $\varepsilon_\tau\alpha\ell$ each), so $\|V-V^*\|\leq 2\sqrt{d}\,\varepsilon_\tau\alpha\ell$. By Weyl's inequality,
\[
   \sigma_{\min}(V)\geq\sigma_{\min}(V^*)-\|V-V^*\|
   \geq c_d^*\alpha\ell-2\sqrt d\,\varepsilon_\tau\alpha\ell
   \geq\tfrac12 c_d^*\,\alpha\ell
\]
for $\rho\lambda^d$ large (so that $\varepsilon_\tau\leq c_d^*/(4\sqrt d)$), and $\sigma_{\max}(V)\leq\sigma_{\max}(V^*)+\|V-V^*\|\leq 2\sigma_{\max}(V^*)$, whence $\kappa(V)\leq 4\kappa(V^*)\leq C_d$. This is~\eqref{eq:scaffold-cond} with $c_d=\tfrac12 c_d^*$; the constants $c_d,C_d$ depend only on the dimension $d$, through the fixed dimensionless configuration~\eqref{eq:anchor-config}, and not on $\rho$, $\lambda$, or $\ell$.

\emph{Chronology (C4).} Each $a_m$ lies within $\varepsilon_\tau\alpha\ell$ of $x+a_m^*$, and the ideal configuration places the anchors at time offset $8\alpha\ell$ from $x$, exceeding the spatial reach $2\alpha\ell$ plus the active-ball half-height $2\alpha\ell$ with margin. Since $\alpha\ell\ll\lambda$, the curvature distortion $g=\eta+O(\alpha^2\ell^2/\lambda^2)$ and the $\varepsilon_\tau\alpha\ell$ perturbation are too small to flip any strict causal relation, so every $a_m$ is strictly timelike-related to all of $B^h_{2\alpha\ell}(x)$. Uniformity over $x\in M^\circ$ follows since all constants are dimensional and (F2) holds uniformly over admissible diamonds.
\end{proof}

\begin{lemma}[Covariance isotropy from volume-faithfulness]
\label{lem:source-cov}
Let $f\colon C\to(M,g)$ be a well-conditioned embedding (Definition~\ref{def:faithful}), and fix $x\in M^\circ$. Work in $h$-normal coordinates at $x$, and write $\tilde w_k:=\exp_x^{-1}(p_k)$ for the active embedded points and, with $d_k:=d_h(x,p_k)$, $w_k(x):=\chi(d_k/\alpha\ell)\exp(-d_k^2/\ell^2)$ for the cutoff-Gaussian weights. Let
\begin{equation}\label{eq:Sigma-def}
   \Sigma(x):=\frac{1}{W}\sum_k w_k(x)\,\tilde w_k\tilde w_k^T, \qquad W:=\sum_k w_k(x),
\end{equation}
be the source covariance. Then there is a dimensional constant $c_d>0$ such that, for $\rho\lambda^d$ sufficiently large, $\Sigma(x)$ is isotropic up to lower-order error,
\begin{equation}\label{eq:sigma-iso}
   \bigl\|\Sigma(x)-c_d\,\ell^2\,I\bigr\|
   \;\leq\; C_d\,\ell^2\,\delta_{\mathrm{cov}},
\end{equation}
and in particular non-degenerate,
\begin{equation}\label{eq:sigma-bound}
   \sigma_{\min}\bigl(\Sigma(x)\bigr)
   \;\geq\; c_d\,\ell^2\bigl(1-C_d\,\delta_{\mathrm{cov}}\bigr)
   \;\geq\;\tfrac12 c_d\,\ell^2,
\end{equation}
both uniformly over $x\in M^\circ$. Here
\begin{equation}\label{eq:delta-cov}
   \delta_{\mathrm{cov}}:=\Bigl(\frac{\log\rho V_M}{\rho\,\ell^d}\Bigr)^{1/(d+2)}\longrightarrow 0
\end{equation}
is the error contributed by the F2 volume-count tolerance, minimized over the scale of the dense cover used in the proof. It is polynomially small in $\rho$ and depends on $\ell$, in contrast to the tolerance $\delta_{\tau_{\min}}=K_d c_*^{d/2}/\sqrt{\log\rho V_M}$ of a single finest-scale diamond, which is only logarithmically small and independent of $\ell$; pinning the cover at the finest scale $\tau_{\min}$ would replace $\delta_{\mathrm{cov}}$ by the larger $\delta_{\tau_{\min}}+\tau_{\min}/\ell$.
\end{lemma}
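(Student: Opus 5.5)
The plan is to compare the discrete weighted moments of the embedded points with the corresponding continuum integrals against $\rho\,\mathrm{dVol}_g$, then evaluate those integrals in $h$-normal coordinates.

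Since $h=\delta+O(|p|^2/\lambda^2)$ and $\mathrm{dVol}_g=(1+O(|p|^2/\lambda^2))\,d^dp$ in the adapted coordinates of Remark~\ref{rem:adapted-coords}, the continuum quantities are
\[
W_0:=\rho\!\int \chi(|p|/\alpha\ell)e^{-|p|^2/\ell^2}\,d^dp,
\qquad
\Sigma_0:=\frac{\rho}{W_0}\!\int \chi(|p|/\alpha\ell)e^{-|p|^2/\ell^2}\,pp^T\,d^dp .
\]
By rotational symmetry, $\Sigma_0=c_d\ell^2 I$ up to two corrections. The cutoff correction is $O(e^{-\alpha^2/4})$, which is negligible for $\alpha\ge16$. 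The curvature correction is $O(\ell^2/\lambda^2)$, and I would fold it into $\delta_{\mathrm{cov}}$, which dominates in the regime $\ell\ll\lambda$.

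\textbf{Step 1 (dense cover).} Fix a mesoscale $s$ with $\tau_{\min}\le s\ll\ell$. Tile $B^h_{\alpha\ell}(x)$, up to a boundary layer, by causal diamonds $D_j$ of proper-time height $s$ in the Minkowski chart. Each $D_j$ has volume $\sim s^d$ and $h$-diameter $O(s)$, and lies inside $M^\circ$'s admissible zone. Diamonds do not tile space exactly, so I would use a lattice of boosted-cube-shaped diamonds (rotated null-coordinate boxes), which do tile $\R^{1,d-1}$ in null coordinates. Their images are then curvature-distorted by $O(s\,\ell^2/\lambda^2)$, and that distortion is absorbed in the same way.

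\textbf{Step 2 (apply (F2) cell by cell).} On each $D_j$, (F2) gives $|f(C)\cap D_j|=\rho\Vol(D_j)(1+O(\delta_{D_j}))$ with $\delta_{D_j}\sim\sqrt{\log\rho V_M/(\rho s^d)}$. The function $\phi(p)=w(p)\,pp^T$ (and likewise $w(p)$) varies across a cell by $O(s/\ell)\cdot\sup|\phi|$. Replacing each point's contribution by the cell value therefore incurs a relative oscillation error $O(s/\ell)$. Cells meeting the cutoff boundary or lying outside the cover contribute negligibly because of the Gaussian decay. Summing over cells, both the weighted count and the second moment satisfy
\[
\sum_k w_k=W_0\bigl(1+O(\delta_s+s/\ell)\bigr),
\qquad
\sum_k w_k\tilde w_k\tilde w_k^T=W_0\,\Sigma_0+O\bigl(W_0\ell^2(\delta_s+s/\ell)\bigr).
\]
Dividing the two gives $\|\Sigma-c_d\ell^2I\|\le C_d\ell^2(\delta_s+s/\ell+\ell^2/\lambda^2)$.

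\textbf{Step 3 (optimize $s$).} Balancing the two errors, $\delta_s\sim s/\ell$, gives $s^{(d+2)/2}\sim \ell\,\sqrt{\log\rho V_M/\rho}$. Then $s/\ell\sim(\log\rho V_M/(\rho\ell^d))^{1/(d+2)}=\delta_{\mathrm{cov}}$, which matches~\eqref{eq:delta-cov}. I would also check that this $s$ exceeds $\tau_{\min}$ under the scale-separation hypothesis, so that (F2) actually applies. The non-degeneracy bound~\eqref{eq:sigma-bound} then follows from Weyl's inequality once $C_d\delta_{\mathrm{cov}}\le\frac12$. Uniformity in $x$ holds because every constant is dimensional.

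\textbf{Main obstacle.} I expect Step 1 to be the main difficulty. Causal diamonds are not rotation-friendly, and null-coordinate boxes become elongated in $h$-geometry, so their $h$-diameter must be checked to be $O(s)$ without a large boost factor. The inclusion–exclusion at cell boundaries also has to be bounded by the oscillation term rather than by $\delta_s$. My first attempt would be an axis-aligned null-box lattice; if that fails, I would fall back to a Whitney-type cover by diamonds with overlap multiplicity $O_d(1)$, using upper and lower sandwich bounds on the counts.
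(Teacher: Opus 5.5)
Your Steps~2--3 (cell-wise (F2) with tolerance $\delta_s$, the $O(s/\ell)$ oscillation error, the balance $s_*^{(d+2)/2}\sim\ell\sqrt{\log(\rho V_M)/\rho}$ producing $\delta_{\mathrm{cov}}$, the check $s_*\geq\tau_{\min}$, and Weyl) match the paper. Step~1, however, has a genuine gap, at exactly the point you flag. (F2) controls counts only in causal diamonds $J^+(p)\cap J^-(q)$. For $d\geq3$ these are round bicones $\{|p^0-c^0|+|\vec p-\vec c|\leq s/2\}$, and bicones do not tile $\R^d$. A box in null coordinates is a causal diamond only when $d=2$. For $d\geq3$ a null parallelepiped is merely contained in the diamond spanned by its tips, so (F2) says nothing about its count; your tiling argument therefore handles only $d=2$. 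The fallback does not repair this. Take a cover whose overlap multiplicity is merely bounded by $O_d(1)$ and sandwich it against a disjoint packing. This controls $\sum_k w_k\tilde w_k\tilde w_k^T$ only up to the fixed gap between the covering and packing densities of bicones, i.e.\ up to a constant factor rather than $1+o(1)$. That yields at best a weak non-degeneracy bound $\sigma_{\min}(\Sigma)\geq c\,\ell^2$ with some small $c$. It does not give the isotropy~\eqref{eq:sigma-iso}, and in particular not the exact constant $c_d=\tfrac12$ that Proposition~\ref{prop:conformal} needs for $\Omega_0=2c_d=1$.

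The missing idea is to use all translates with \emph{uniform}, unboundedly large multiplicity instead of a tiling. The paper centers unboosted diamonds of height $s$ at every point of the fine grid $\varepsilon s\,\Z^d$. Each point then lies in $M(p)=\kappa_d\varepsilon^{-d}(1+O(\varepsilon))$ of them (a lattice-point count in an $N$-ball). Hence $\sum_j\sum_{p_k\in D_j}g(p_k)=\sum_k M(p_k)\,g(p_k)$ holds exactly, and the (F2) counts $N_j=\rho\kappa_d s^d(1\pm\delta_s)$ can be inserted. The constants cancel as $\tfrac{\varepsilon^d}{\kappa_d}\cdot\rho\kappa_d s^d\cdot(\varepsilon s)^{-d}=\rho$, and letting $\varepsilon\to0$ leaves $C_d\ell^2(\delta_s+s/\ell)$. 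This also disposes of your inclusion--exclusion and elongation worries: there are no cell boundaries, and unboosted diamonds have $h$-diameter $O(s)$.

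Separately, your absorption of the $O(\ell^2/\lambda^2)$ curvature term into $\delta_{\mathrm{cov}}$ does not follow from $\ell\ll\lambda$. The two quantities are independent, and at the paper's $\ell_*$ with $d>2$ the curvature term is the larger one. The Ricci term in $\sqrt{\det h}$ generically shifts the continuum second moment anisotropically by $O(\ell^4/\lambda^2)$. The paper's proof budgets only an $O(s^2/\lambda^2)$ distortion at the cover scale, so it does not remove this term either. The safe conclusion is $C_d\ell^2(\delta_{\mathrm{cov}}+\ell^2/\lambda^2)$, which is harmless, since Proposition~\ref{prop:conformal} carries $\ell^2/\lambda^2$ anyway.
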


\begin{remark}
The point of Lemma~\ref{lem:source-cov} is that isotropy (and hence non-degeneracy) of the local point cloud is forced by the volume condition (F2) alone, with no appeal to the random nature of the embedding or to the map $\Phi$ constructed later. Geometrically, the lower bound is supplied by the positions of a grid of mesoscopic diamonds filling the inner core; the F2 count tolerance together with the freedom remaining within each diamond perturb the bound at relative order $\delta_{\mathrm{cov}}$ once the cover scale is optimized (\eqref{eq:delta-cov}), which vanishes polynomially in $\rho$, rather than under the weaker requirement that admissible diamonds merely exist ($\tau_{\min}\leq c_*\lambda$).
\end{remark}

\begin{proof}
\emph{Step~1: Setup.}
The support of the weights is the ball $B_{2\alpha\ell}^h(x)$, which is a mesoscopic region with $2\alpha\ell\leq c_*\lambda$ under scale separation; F2 applied to a diamond covering this ball gives the upper bound $W\leq C_d\,\rho\ell^d$.

\emph{Step~2: A dense admissible cover at a free scale.}
Fix a cover scale $s\in[\tau_{\min},\ell]$, to be optimized in Step~4; each diamond below has proper-time height $s$ and is therefore F2-admissible (since $s\geq\tau_{\min}$). Let $B:=\{|\tilde w|\leq\ell\}$ be the inner core, on which $w_k\geq e^{-1}$. In $h$-normal coordinates the flat Alexandrov diamond of proper-time height $s$ centered at $c$,
\[
   D(c):=\bigl\{p:\ |p^0-c^0|+|\vec p-\vec c|\leq s/2\bigr\},
\]
is the Minkowski causal diamond (Alexandrov interval) of proper-time height $s$: a bicone with tips $c\mp\tfrac{s}2 e_0$, equal to the ball of radius $s/2$ for the norm $N(p):=|p^0|+|\vec p|$, of Euclidean volume $\kappa_d s^d$. The curved causal diamond differs from it by a relative $O(s^2/\lambda^2)$ cone-aperture distortion (Step~1 of Proposition~\ref{prop:longest-chain}), absorbed into the error below. Each $D(c)$ has count $N(c)=\rho\kappa_d s^d(1\pm\delta_s)$, where $\delta_s:=K_d\sqrt{\log\rho V_M/(\rho\kappa_d s^d)}$ is the F2 tolerance at scale $s$; the $(1\pm\delta)$ notation abbreviates the two-sided F2 bound, which constrains the integer count without pinning it.

The essential point is that F2 constrains \emph{every} admissible diamond, including those centered in the interstices of any chosen family, so we use an overlapping cover. Fix $\varepsilon\in(0,1)$ and place centers on the fine grid $\{c_j\}:=\varepsilon s\,\Z^d$, retaining those with $D(c_j)\cap B\neq\emptyset$. Every $p\in B$ then lies in
\[
   M(p):=\#\{j:\ p\in D_j\}
   =\frac{\kappa_d s^d}{(\varepsilon s)^d}\bigl(1+O(\varepsilon)\bigr)
   =\kappa_d\,\varepsilon^{-d}\bigl(1+O(\varepsilon)\bigr)
\]
of the diamonds (grid points in an $N$-ball of radius $s/2$, with relative lattice fluctuation $O(\varepsilon)$), uniformly for $p$ in the core interior; within the shell of width $s$ about $\partial B$ the multiplicity dips, contributing a relative $O(s/\ell)$ absorbed below.

\emph{Step~3: Two-sided isotropy~\eqref{eq:sigma-iso}.} We carry the construction at the level of the full matrix rather than a single direction. Define the continuum reference second moment
\[
   \bar\Sigma:=\frac{\int_{\R^d} p\,p^T\,e^{-|p|^2/\ell^2}\,dp}
                    {\int_{\R^d} e^{-|p|^2/\ell^2}\,dp}
            =\tfrac12\ell^2 I,
\]
which is isotropic by rotational symmetry of the Gaussian weight (and equals $c_d\ell^2 I$ with $c_d=\tfrac12$). The actual weights carry the cutoff $\chi$ as well as the Gaussian, but $\chi$ departs from $1$ only where $|p|\geq\alpha\ell$, on which the Gaussian is already $\leq e^{-\alpha^2}$; the cutoff therefore alters $\bar\Sigma$ and every sum below by a negligible relative $O(e^{-\alpha^2})$, which we suppress. It suffices to bound $\|\Sigma(x)-\bar\Sigma\|$, which the counting below does at a free cover scale; \eqref{eq:sigma-bound} is then immediate from Weyl's inequality.

\emph{Counting with multiplicity.}
Each active point lies in exactly $M(p_k)$ of the diamonds, so for any function $g$ on $\R^d$,
\[
   \sum_j\ \sum_{p_k\in D_j} g(p_k)=\sum_k M(p_k)\,g(p_k).
\]
Take $g(p)=w(p)\,pp^T$. Inserting $M(p_k)=\kappa_d\varepsilon^{-d}(1+O(\varepsilon+s/\ell))$ from Step~2,
\[
   \sum_k w_k\,p_kp_k^T
   =\frac{\varepsilon^d}{\kappa_d}\sum_j\ \sum_{p_k\in D_j} w_k\,p_kp_k^T
   \cdot\bigl(1+O(\varepsilon+s/\ell)\bigr).
\]
Within a diamond of diameter $s$ the matrix $w\,pp^T$ varies by a relative $O(s/\ell)$ (its gradient is $O(\ell)$ on the support $|p|\lesssim\ell$), while $N_j=\rho\kappa_d s^d(1\pm\delta_s)$ by F2; hence $\sum_{p_k\in D_j} w_k\,p_kp_k^T=w(c_j)\,c_jc_j^T\,N_j\,(1+O(s/\ell))$. The centers grid-sample $\R^d$ with cell volume $(\varepsilon s)^d$, so $\sum_j w(c_j)c_jc_j^T=(\varepsilon s)^{-d}\int w(p)\,pp^T\,dp\,(1+O(\varepsilon))$. The geometric constants cancel exactly,
\[
   \frac{\varepsilon^d}{\kappa_d}\cdot\rho\kappa_d s^d\cdot(\varepsilon s)^{-d}=\rho,
\]
so $\sum_k w_k\,p_kp_k^T=\rho\int w\,pp^T\,dp\, (1+O(\varepsilon+s/\ell+\delta_s))$. The identical computation with $g=w$ gives the same factor for $W=\sum_k w_k$, and dividing yields
\[
   \bigl\|\Sigma(x)-\bar\Sigma\bigr\|
   \leq C_d\,\ell^2\bigl(\varepsilon+s/\ell+\delta_s\bigr).
\]
The left-hand side is independent of the grid parameter $\varepsilon$, so letting $\varepsilon\to0$ gives, for every admissible cover scale $s\in[\tau_{\min},\ell]$,
\begin{equation}\label{eq:sigma-free-s}
   \bigl\|\Sigma(x)-\bar\Sigma\bigr\|
   \leq C_d\,\ell^2\bigl(\delta_s+s/\ell\bigr),\qquad c_d=\tfrac12.
\end{equation}

\emph{Step~4: Optimizing the cover scale.} Write $L:=\log\rho V_M$, so $\delta_s=K_d\sqrt{L/(\rho\kappa_d s^d)}=\Theta\bigl((L/\rho)^{1/2}s^{-d/2}\bigr)$. The two terms in~\eqref{eq:sigma-free-s} trade off: a coarser cover lowers the count tolerance $\delta_s$ but raises the within-diamond variation $s/\ell$. Minimizing $f(s):=\delta_s+s/\ell$, the stationarity condition $f'(s)=0$ gives $s_*^{\,d/2+1}=\Theta(\ell\sqrt{L/\rho})$, that is $s_*=\Theta\bigl((\ell\sqrt{L/\rho})^{2/(d+2)}\bigr)$, at which the two terms balance to
\[
   f(s_*)=\Theta\Bigl(\bigl(L/(\rho\ell^d)\bigr)^{1/(d+2)}\Bigr)=\Theta(\delta_{\mathrm{cov}}),
\]
with $\delta_{\mathrm{cov}}$ as in~\eqref{eq:delta-cov}. The optimum is admissible: $s_*\leq\ell\iff\rho\ell^d\geq L$, automatic since $\rho\ell^d\gg1$; and $s_*\geq\tau_{\min}$ holds for $\rho\lambda^d$ large, a positive power of $\rho$ beating $L$ (for $d=4$ at the smoothing scale $\ell=\ell_*$ of Theorem~\ref{thm:haupt}, $s_*\sim\rho^{-0.18}$ lies strictly between $\tau_{\min}\sim\rho^{-1/4}$ and $\ell_*\sim\rho^{-1/20}$). Substituting $s=s_*$ into~\eqref{eq:sigma-free-s} gives~\eqref{eq:sigma-iso}; the lower bound~\eqref{eq:sigma-bound} follows by Weyl's inequality. Every constant is dimensional and the F2 bounds hold uniformly over admissible diamonds, so the estimate is uniform over $x\in M^\circ$.
\end{proof}

\subsection{The moving Karcher mean}\label{subsec:karcher}

\begin{figure}[h!]
    \centering
    \begin{tikzpicture}[
        >=Stealth,
        centerpt/.style ={star, star points=5, draw=black, fill=black, inner sep=1.5pt},
        centeropen/.style={star, star points=5, draw=black, fill=white, thick, inner sep=1.5pt},
        corrA/.style={circle,    draw=black, fill=black, inner sep=1.5pt},
        corrB/.style={rectangle, draw=black, fill=black, inner sep=1.7pt},
        corrK/.style={diamond,   draw=black, fill=black, inner sep=1.5pt},
        connect/.style={dashed, thick, shorten >=2pt, shorten <=2pt}
    ]

    \begin{scope}[shift={(0,0)}]
        \filldraw[fill=blue!5, draw=black, thick]
            plot [smooth cycle, tension=0.7] coordinates {
            (-4,-1.3) (-1.5,-1.7) (1.5,-1.6) (4,-1.0)
            (4.2,1.0) (1.5,1.7) (-1.5,1.6) (-4,1.1) };
        \node[left] at (-4.4, 0) {\Large $M_1$};

        \shade[inner color=red!30, outer color=blue!5] (0,0) circle (1.7);
        \draw[dashed, gray] (0,0) circle (1.45);
        \node[gray, font=\footnotesize] at (-1.1,-1.18) {$2\alpha\ell$};
        \node[red!70!black] at (0.05,1.55) {$w$};

        \node[centerpt, label={[label distance=1pt]below:{$x$}}] (x) at (0,0) {};
        \node[corrA, label={[label distance=1pt]left:{$p_1$}}]  (p1) at (-1.0, 0.55) {};
        \node[corrB, label={[label distance=1pt]below:{$p_2$}}] (p2) at ( 0.85,-0.7) {};
        \node[corrK, label={[label distance=1pt]right:{$p_k$}}] (pk) at ( 1.2, 0.55) {};

        \draw[connect] (x) -- (p1) node[midway, above, font=\small] {$\tilde{p}_1$};
        \draw[connect] (x) -- (p2);
        \draw[connect] (x) -- (pk) node[midway, below, font=\small] {$\tilde{p}_k$};
    \end{scope}

    \begin{scope}[shift={(0, 5.2)}]
        \filldraw[fill=green!6, draw=black, thick]
            plot [smooth cycle, tension=0.7] coordinates {
            (-4,-1.2) (-1.5,-1.6) (1.5,-1.7) (4,-1.1)
            (4.2,1.1) (1.5,1.6) (-1.5,1.7) (-4,1.0) };
        \node[left] at (-4.4, 0) {\Large $M_2$};

        \node[centeropen, label={[label distance=2pt]below left:{$y=\Phi(x)$}}] (y) at (0,0) {};
        \node[corrA, label={[label distance=1pt]above left:{$q_1$}}] (q1) at (-1.15,0.35) {};
        \node[corrB, label={[label distance=1pt]below:{$q_2$}}]      (q2) at ( 0.30,-0.85) {};
        \node[corrK, label={[label distance=1pt]right:{$q_k$}}]      (qk) at ( 1.3, 0.55) {};

        \draw[connect] (y) -- (q1) node[midway, above, font=\small] {$\tilde{q}_1$};
        \draw[connect] (y) -- (q2);
        \draw[connect] (y) -- (qk) node[midway, above, font=\small] {$\tilde{q}_k$};
    \end{scope}

    \draw[->, ultra thick, color=black!65, shorten >=5pt, shorten <=5pt]
        (0,0) .. controls (1.7,2.6) .. (0.0,5.15)
        node[pos=0.55, right, font=\Large] {$\Phi$};

    \end{tikzpicture}
    \caption{Construction of the map $\Phi$. On the source $M_1$, a smooth mesoscopic weight $w$ (shaded) centered at $x$ singles out the active ball of radius $2\alpha\ell$ and the points $p_k$ within it, recorded in normal coordinates $\tilde{p}_k=\exp_x^{-1}(p_k)$. Their images $q_k\in M_2$ form an approximately Lorentz-transformed configuration. The value $\Phi(x)=y$ is the weighted Riemannian center of mass of the $q_k$: the unique point at which the geometric tension $F(x,y)=0$. Matching marker shapes (circle, square, diamond) denote corresponding elements of the causal set.}
    \label{fig:karcher}
\end{figure}

Let $p_k:=f_1(c_k)\in M_1$ and $q_k:=f_2(c_k)\in M_2$ denote the embedded points; the construction below is illustrated in Figure~\ref{fig:karcher}.

\begin{construction}[The diffeomorphism $\Phi$]
\label{con:phi}
Fix a smooth, non-negative, non-increasing bump function $\chi\colon\R\to[0,1]$ with $\chi(s)=1$ for $s\leq 1$ and $\chi(s)=0$ for $s\geq 2$. Fix a \emph{cutoff multiplier} $\alpha\geq 16$ with $2\alpha\ell\ll\mathrm{inj}(M_i,h_i)$ for $i=1,2$ (which holds since $\ell\ll\lambda$ and $\mathrm{inj}\geq c\lambda$ by Remark~\ref{rem:aux-curv}). Define \emph{cutoff Gaussian weights}
\begin{equation}\label{eq:weights}
   w_k(x) :=
   \exp\!\bigl(-d_{h_1}^2(x,p_k)/\ell^2\bigr)\,
   \chi\!\bigl(d_{h_1}(x,p_k)/(\alpha\ell)\bigr),
\end{equation}
and the \emph{moving Karcher mean}
\begin{equation}\label{eq:karcher}
   \Phi(x) := \argmin_{y\in M_2}\; \sum_k w_k(x)\;d_{h_2}^2(y,q_k).
\end{equation}
Each weight $w_k(x)$ is supported in the ball $B_{2\alpha\ell}^{h_1}(p_k)$, equals $\exp(-d_{h_1}^2(x,p_k)/\ell^2)$ on $B_{\alpha\ell}^{h_1}(p_k)$, and is smoothly tapered to zero in the annulus $\alpha\ell\leq d_{h_1}(x,p_k)\leq 2\alpha\ell$. Because $2\alpha\ell$ lies strictly inside the injectivity radius of $h_1$, the distance function $d_{h_1}(x,p_k)$ is $C^\infty$ on the support of $w_k$, hence $w_k\in C^\infty(M_1)$.

The cutoff truncation introduces only superexponentially small modifications relative to the pure Gaussian weight: the difference between $\exp(-d_{h_1}^2/\ell^2)$ and $w_k$ is bounded by $e^{-\alpha^2}$ pointwise (on the annulus $\alpha\ell\leq|p|\leq 2\alpha\ell$ where $\chi'\neq 0$), and integrates to a contribution $O(\alpha^{d-2}e^{-\alpha^2})$ relative to the total weight. For any fixed dimension $d$, choosing $\alpha\geq 16$ makes this correction smaller than every other error term in the proof (the volume-tolerance $O(\delta_{\mathrm{cov}})$, the curvature $O(\ell^2/\lambda^2)$, and the Bollob\'as--Brightwell longest-chain rate $O((\rho V)^{-1/(2d)})$). The dimensional prefactor $\alpha^{d-2}$ grows only polynomially in $\alpha$, while the suppression $e^{-\alpha^2}$ is double-exponential; thus $\alpha=16$ suffices uniformly in all dimensions of interest.
\end{construction}

\begin{remark}[Role of the cutoff weights]
\label{rem:cutoff-weights}
The cutoff serves two purposes. First, the support of $w_k$ is compact and strictly inside the injectivity radius, so the Karcher mean is well-defined by standard theorems~\cite{Karcher1977,Kendall1990}. Second, it sharply separates the inner core $\{|p_k|\leq r_d\ell\}$, where the trilateration argument applies and which must dominate the empirical cross-covariance, from the boundary layer, where the Gaussian weights are exponentially small (quantified in the error-term bound of Lemma~\ref{lem:cross-cov}).
\end{remark}

\begin{remark}[Curvature of the auxiliary metric]
\label{rem:aux-curv}
With $T$ chosen to (approximately) achieve the supremum in~\eqref{eq:lambda-intrinsic}, the Lorentzian curvature scale satisfies $|\mathrm{Rm}[g]|\leq 1/\lambda^2$ and the auxiliary injectivity radius satisfies $\mathrm{inj}(M,h_T)\geq\lambda$. The auxiliary Riemannian metric $h_T=g+2T^\flat\otimes T^\flat$ differs from $g$ by a smooth rank-$1$ perturbation. The Riemann tensor of $h_T$ satisfies $|\mathrm{Rm}[h_T]| =O(|\mathrm{Rm}[g]|+|\nabla T|^2+|\nabla^2 T|)$. For the optimizing $T$, $|\nabla T|=O(1/\lambda)$ and $|\nabla^2 T|=O(1/\lambda^2)$, giving $|\mathrm{Rm}[h_T]|=O(1/\lambda^2)$. Hence balls of radius $\ell\ll\lambda$ in $(M,h_T)$ are strongly convex. All subsequent estimates use this optimizing $T$ and the resulting intrinsic $\lambda$.
\end{remark}

\begin{proposition}[Well-definedness of the Karcher mean]
\label{prop:well-def}
For $\rho$ sufficiently large, the minimizer in~\eqref{eq:karcher} exists and is unique; hence $\Phi(x)$ is well-defined for every $x\in M_1^\circ$.
\end{proposition}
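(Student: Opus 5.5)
The plan is to reduce the minimization in~\eqref{eq:karcher} to a strictly convex problem on a single small $h_2$-ball, and then invoke the classical Karcher--Kendall theory~\cite{Karcher1977,Kendall1990}. Fix $x\in M_1^\circ$ and let $F_x(y):=\sum_k w_k(x)\,d_{h_2}^2(y,q_k)$. The weights $w_k(x)$ are non-negative. They vanish unless $d_{h_1}(x,p_k)<2\alpha\ell$, and only finitely many $k$ contribute because $C$ is finite. Moreover $W=\sum_k w_k(x)>0$: the inner core $B^{h_1}_\ell(x)$ contains $\Theta(\rho\ell^d)\gg1$ embedded points with $w_k\ge e^{-1}$, by (F2) applied to an admissible diamond inside the core, or equivalently by the counting in Step~3 of Lemma~\ref{lem:source-cov}. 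Hence $F_x$ is a continuous, non-trivial function on $M_2$.

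\emph{Localization.} By Corollary~\ref{cor:target-localization}, every active target point $q_k$ lies in the ball $B:=B^{h_2}_{R_*}(q_0)$, where $R_*\le C_d'\alpha\ell$. Since $\alpha\ell\ll\lambda$, Remark~\ref{rem:aux-curv} gives that $|\mathrm{Rm}[h_2]|=O(\lambda^{-2})$ and $\mathrm{inj}(M_2,h_2)\ge\lambda$. So for $\rho$ large (hence $\ell$ small relative to $\lambda$) the ball $B':=B^{h_2}_{3R_*}(q_0)$ is strongly convex and lies well inside the injectivity radius. The standard reduction then runs as follows. For $y\notin B'$ we have $d_{h_2}(y,q_k)\ge 2R_*$ for every active $k$, so $F_x(y)\ge 4R_*^2W$. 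On the other hand $F_x(q_0)\le R_*^2W$. Thus any minimizer lies in $B'$. Existence follows because the sublevel set $\{F_x\le F_x(q_0)\}$ is contained in the closed $h_2$-ball $\overline{B'}$, which is compact since it lies within the injectivity radius (or directly by completeness of $h_2$ on a globally hyperbolic manifold, via Hopf--Rinow locally). The continuous function $F_x$ therefore attains its minimum there.

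\emph{Uniqueness.} On $B'$ each function $y\mapsto d_{h_2}^2(y,q_k)$ is smooth. Hessian comparison under the sectional curvature bound $K\le\lambda^{-2}$ gives $\nabla^2 d^2_{h_2}(\cdot,q_k)\ge 2\,\frac{r\sqrt{K}}{\tan(r\sqrt{K})}\,h_2\ge(2-O(R_*^2/\lambda^2))\,h_2$, where $r=d_{h_2}(y,q_k)\le 4R_*$. Summing, $\nabla^2F_x\ge W(2-O(\alpha^2\ell^2/\lambda^2))h_2>0$, so $F_x$ is strictly geodesically convex on the strongly convex set $B'$ and has at most one critical point there. Combined with the localization, the global minimizer over $M_2$ is unique and is characterized by $\sum_k w_k(x)\exp_y^{-1}(q_k)=0$. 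This characterization is also what later sections use for smoothness, via the implicit function theorem applied to the non-degenerate Hessian.

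The step I expect to be the main obstacle is the localization input. Corollary~\ref{cor:target-localization} is phrased in $h_1$-normal coordinates transported through $\hat\Lambda$, and it has to be converted into a genuine $h_2$-distance bound from a point $q_0\in M_2$. To do this, I would interpret the target coordinates as $h_2$-normal coordinates at the image anchor $b_0$. All target anchors and active points lie within $O(\alpha\ell)$ of $b_0$, because their proper times to the anchors are preserved by Corollary~\ref{cor:dist-pres} and the anchors are chronologically separated at scale $\alpha\ell$. In the adapted coordinates of Remark~\ref{rem:adapted-coords}, Euclidean and $h_2$-distances agree up to a factor $1+O(\alpha^2\ell^2/\lambda^2)$. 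This yields $R_*=O(\alpha\ell)\ll\lambda$ uniformly in $x$, after which the convexity argument above is routine.
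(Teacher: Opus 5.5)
Your proposal is correct and follows the paper's route: localize the active $q_k$ in a small $h_2$-ball via Corollary~\ref{cor:target-localization}, then use strict convexity of the weighted squared distance on a strongly convex ball. It is also more careful, making rigorous two points the paper only asserts: that $W>0$, and (via the enlargement to $3R_*$ and the comparison $F_x(q_0)\le R_*^2W<4R_*^2W\le F_x(y)$ for $y\notin B'$) that the global minimizer cannot lie outside the ball. Drop the parenthetical appeal to completeness of $h_2$, which can fail for $h=g+2T^\flat\otimes T^\flat$; compactness of $\overline{B'}$ via the injectivity-radius bound is the argument that works.
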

\begin{proof}
With cutoff weights, only finitely many $w_k(x)$ are nonzero at any given $x$ (those $p_k$ with $d_{h_1}(x,p_k)<2\alpha\ell$). By the target-localization Corollary~\ref{cor:target-localization}, the active target points $\{q_k:w_k(x)>0\}$ are all contained in a single $h_2$-ball $B_{R_*}^{h_2}(q_0)$ of radius $R_*=C_d'\alpha\ell\ll\mathrm{inj}(M_2,h_2)$, which is strongly convex (Remark~\ref{rem:aux-curv}). On this ball, $d_{h_2}^2(y,q_k)$ is strictly convex in $y$ for each fixed $q_k$ (standard Riemannian comparison; the ball is inside the injectivity radius and below the curvature scale). The weighted sum $\sum_k w_k(x)d_{h_2}^2(y,q_k)$ is therefore strictly convex on $B_{R_*}^{h_2}(q_0)$, with a unique minimum. Outside this ball the objective grows (each term increases as $y$ recedes from all active $q_k$), so the global minimum lies inside. Hence the Karcher mean $\Phi(x)$ is well-defined and unique.

The localization uses only the trilateration scaffold (Lemma~\ref{lem:scaffold} and Lemmas~\ref{lem:finite-procrustes}, \ref{lem:trilat-extension}), which depends on the embeddings and the preserved timelike proper times but not on $\Phi$, so invoking it here is well-founded.
\end{proof}

\begin{proposition}[Smoothness and derivative bounds]\label{prop:basic}
For $\rho$ sufficiently large:
\begin{enumerate}[(i)]
\item \textbf{Well-defined:} The minimizer in~\eqref{eq:karcher} exists and is unique (Proposition~\ref{prop:well-def}).
\item \textbf{Smooth:} $\Phi\in C^\infty(M_1,M_2)$.
\item \textbf{Approximate point matching:} $d_{h_2}(\Phi(p_j),q_j)=O\!\bigl((\varepsilon_\tau+\delta_{\mathrm{cov}}+\ell^2/\lambda^2)\,\ell\bigr)$ (Eq.~\eqref{eq:point-match}).
\item \textbf{Derivative bounds:} $\|D\Phi\|_{C^0}=O(1)$ and $\|D^2\Phi\|_{C^0}=O(1/\ell)$.
\end{enumerate}
\end{proposition}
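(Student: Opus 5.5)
Item (i) is exactly Proposition~\ref{prop:well-def}, so the work is in (ii)--(iv). The plan is to treat $\Phi$ as an implicit function. Define the gradient map
\[
   F(x,y):=\sum_k w_k(x)\,\exp_y^{-1}(q_k)\in T_yM_2 ,
\]
which is $-\tfrac12$ times the gradient in $y$ of the objective~\eqref{eq:karcher}. Then $\Phi(x)$ is characterized by $F(x,\Phi(x))=0$ on the strongly convex ball $B^{h_2}_{R_*}(q_0)$ of Corollary~\ref{cor:target-localization}.

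\emph{Smoothness (ii).} Each $w_k$ is $C^\infty$ by Construction~\ref{con:phi}. The map $\exp_y^{-1}(q_k)$ is smooth in $y$ because $R_*\ll\mathrm{inj}(M_2,h_2)$. The $y$-derivative of $F$ is $-\sum_k w_k\,H_k(y)$, where $H_k$ is the Hessian of $\tfrac12 d_{h_2}^2(\cdot,q_k)$. By Hessian comparison, using $|\mathrm{Rm}[h_2]|=O(\lambda^{-2})$ (Remark~\ref{rem:aux-curv}), we have $H_k=I+O(R_*^2/\lambda^2)$. Hence $D_yF=-W\,(I+O(\alpha^2\ell^2/\lambda^2))$, which is uniformly invertible. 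The implicit function theorem then gives $\Phi\in C^\infty$ locally. Uniqueness of the minimizer glues these local pieces into a global smooth map. The set of active indices changes as $x$ moves, but the weights vanish smoothly at the cutoff, so this is harmless.

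\emph{Derivative bounds (iv).} Differentiate the identity $F(x,\Phi(x))=0$:
\[
   D\Phi=-(D_yF)^{-1}D_xF,\qquad D_xF=\sum_k \nabla w_k(x)\otimes \exp_y^{-1}(q_k).
\]
Since $\sum_k\nabla w_k\cdot\exp_y^{-1}(y)=0$ trivially, I would rewrite $D_xF$ by centering. Using $\nabla w_k=w_k\,(2\tilde w_k/\ell^2)(1+O(\ell^2/\lambda^2))+(\text{cutoff term})$, we get $D_xF\approx (2/\ell^2)\sum_k w_k\,\tilde q_k\tilde w_k^T$. This is $W$ times an empirical cross-covariance. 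Corollary~\ref{cor:target-localization} gives $\tilde q_k=\hat\Lambda\tilde w_k+O(\varepsilon_\tau\alpha\ell)$ after recentering at $y$. The term $\hat\Lambda\Sigma$ has norm at most $\|\hat\Lambda\|\cdot O(\ell^2)$, using Lemma~\ref{lem:source-cov} to bound $\|\Sigma\|=O(\ell^2)$ and \eqref{eq:proc-boost} to bound $\|\hat\Lambda\|$. The error part is at most $O(\varepsilon_\tau\alpha\ell)\cdot\frac1W\sum_k w_k|\tilde w_k|=O(\varepsilon_\tau\alpha\ell^2)$. Dividing by $W$ and multiplying by $2/\ell^2$ gives $\|D\Phi\|=O(1)$. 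The cutoff term has size $e^{-\alpha^2}$ and is absorbed. For $D^2\Phi$, differentiate once more. Every additional $x$-derivative falls either on a weight, costing $O(\ell^{-1})$ against a quantity of size $O(\ell)$, or on geometric quantities with $O(\lambda^{-1})$ costs. This gives $O(1/\ell)$.

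\emph{Point matching (iii).} I expect this to be the main obstacle. Write $\Phi(p_j)$ in normal coordinates centred at $\hat\Lambda$-images. The equation $F=0$ forces $\tilde y=\frac1W\sum_k w_k(p_j)\tilde q_k+O(\ell^3/\lambda^2)$, with the error coming from the Hessian comparison. Substitute $\tilde q_k=\hat\Lambda\tilde w_k+e_k$, where $\tilde w_k$ is now measured from $p_j$. This gives
\[
   \tilde y-\tilde q_j=\hat\Lambda\,\mu_j+\text{(avg of }e_k-e_j),\qquad \mu_j:=\tfrac1W\textstyle\sum_k w_k(p_j)\tilde w_k .
\]
The first moment $\mu_j$ is controlled by the same dense-cover counting as in Lemma~\ref{lem:source-cov}, applied to $g(p)=w(p)p$ instead of $w\,pp^T$. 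The Gaussian first moment vanishes by symmetry, so $|\mu_j|=O(\delta_{\mathrm{cov}}\ell)$. The difficulty is the error average. The crude bound is $O(\varepsilon_\tau\alpha\ell)$, which carries the unwanted factor $\alpha$. I would first try to avoid this by reapplying Lemma~\ref{lem:trilat-extension} with the Procrustes map refit at scale $\ell$ around $p_j$. Failing that, I would argue that $\alpha=16$ is a fixed constant and absorb it into $O(\cdot)$. Collecting terms gives \eqref{eq:point-match}.
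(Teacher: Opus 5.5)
Your proposal is correct and follows essentially the paper's own route. For (ii) you apply the implicit function theorem to $F(x,y)=0$ with $D_yF=-W\bigl(I+O(\ell^2/\lambda^2)\bigr)$. For (iv) you use $D\Phi=-[D_yF]^{-1}D_xF$ with $D_xF\approx(2W/\ell^2)\Sigma_{\mathrm{cross}}$, and for (iii) you combine the linearized barycenter, the dense-cover bound $O(\delta_{\mathrm{cov}}\ell)$ on the source first moment from central symmetry, and transfer through $\hat\Lambda$; your fallback of absorbing the fixed $\alpha=16$ into the constant is exactly what the paper does when it passes from $C_d\varepsilon_\tau\alpha\ell$ to $C_d\varepsilon_\tau\ell$, so the refit at scale $\ell$ is unnecessary.
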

\begin{proof}
\textit{Part~(ii): Smoothness.}
The minimizer satisfies the first-order condition
\begin{equation}\label{eq:foc}
   F(x,y) := \sum_k w_k(x)\,\exp_y^{-1}(q_k) = 0,
\end{equation}
where $\exp_y^{-1}(q_k)=-\tfrac{1}{2}\nabla_y d^2(y,q_k)$. The Jacobian $D_yF$ is the Hessian of the objective, which is strictly positive-definite (Proposition~\ref{prop:well-def}).

Each $w_k$ is $C^\infty$ globally on $M_1$: on the support of $w_k$ (where $d_{h_1}(x,p_k)<2\alpha\ell<\mathrm{inj}$), the distance function and the bump function are smooth, and outside the support $w_k\equiv 0$. The exponential map $\exp_y^{-1}$ is $C^\infty$ within the injectivity radius of $h_2$, which contains all active $q_k$ for $y$ near the minimum. By the implicit function theorem, the solution $y=\Phi(x)$ of $F(x,\Phi(x))=0$ is $C^\infty$ in $x$.

\textit{Part~(iii): Approximate point matching.}
Fix a base element $j$ with source point $p_j$ and target point $q_j$, and work in $h_1$-normal coordinates centered at $p_j$ and $h_2$-normal coordinates centered at $q_j$. Write the source and target weighted-mean displacements
\[
   \bar\Delta:=\frac1W\sum_k w_k(p_j)\,(p_k-p_j),
   \qquad
   \bar\Delta':=\frac1W\sum_k w_k(p_j)\,(q_k-q_j),
   \qquad W:=\sum_k w_k(p_j).
\]
The first-order condition~\eqref{eq:foc} for the Karcher mean gives $\Phi(p_j)=q_j$ to within $O(\|\bar\Delta'\|+\ell^2/\lambda^2\cdot\ell)$, the second term being the curvature correction to the linearized barycenter (the exponential map differs from its linearization by $O(\ell^2/\lambda^2)$ on the active ball). It therefore suffices to bound $\bar\Delta'$.

\emph{The source mean is deterministically small.}
By the dense admissible cover of Lemma~\ref{lem:source-cov} at a free scale $s\in[\tau_{\min},\ell]$ (overlapping diamonds of proper-time height $s$, centers on $\varepsilon s\,\Z^d$, uniform multiplicity $M=\kappa_d\varepsilon^{-d}(1+O(\varepsilon+s/\ell))$), every active point is counted, $\sum_k g(p_k)=\frac1M\sum_j\sum_{p_k\in D_j}g(p_k)$. Working in the $p_j$-centered coordinates ($p_j=0$), take $g(p)=w(p)\,p$, write each $p_k=c_j+r_k$ ($|r_k|\leq s$), and use $N_j=\rho\kappa_d s^d(1\pm\delta_s)$ by~(F2) together with $N_j/M=\rho(\varepsilon s)^d$:
\[
   \sum_k w_k\,p_k
   =\underbrace{\rho(\varepsilon s)^d\sum_j w(c_j)\,c_j}_{(\mathrm{I})}
   +\underbrace{\tfrac1M\sum_j w(c_j)\!\!\sum_{p_k\in D_j}\!\! r_k}_{(\mathrm{II})}
   +(\text{weight-variation terms}).
\]
Term~(I) is the Riemann sum of $\rho\int p\,e^{-|p|^2/\ell^2}\,dp=0$ and vanishes to leading order by central symmetry: the grid $\varepsilon s\,\Z^d$ and the Gaussian weight $w(c_j)=e^{-|c_j|^2/\ell^2}$ are both even under $c_j\mapsto-c_j$, so the symmetric partner of each $c_j$ cancels it; the residual comes only from the $O(\delta_s)$ asymmetry that (F2) permits between $N_j$ and $N_{-j}$, giving $|(\mathrm{I})|\leq C_d\,\delta_s\,W\ell$.
Term~(II) is bounded by $\tfrac1M\sum_j w(c_j)N_j\,s= W s =O(s/\ell)\,W\ell$ (using $\tfrac1M\sum_j w(c_j)N_j=\sum_k w_k=W$). Hence $\|\bar\Delta\|\leq C_d(\delta_s+s/\ell)\ell$ for every $s\in[\tau_{\min},\ell]$; optimizing the cover scale exactly as in Step~4 of Lemma~\ref{lem:source-cov} gives
\begin{equation}\label{eq:source-mean}
   \|\bar\Delta\|\leq C_d\,\delta_{\mathrm{cov}}\,\ell.
\end{equation}

\emph{Transfer to the target mean.}
By the target-localization Corollary~\ref{cor:target-localization}, every active point satisfies $q_k-q_j=\hat\Lambda(p_k-p_j)+e_k$ with $\|e_k\|\leq C_d\varepsilon_\tau\alpha\ell$ (here both displacements are measured from the matched base points $p_j,q_j$, and $\hat\Lambda$ is the local Lorentz map of Lemma~\ref{lem:finite-procrustes}). Averaging,
\[
   \bar\Delta'=\hat\Lambda\,\bar\Delta+\frac1W\sum_k w_k e_k,
   \qquad
   \Bigl\|\frac1W\sum_k w_k e_k\Bigr\|\leq C_d\varepsilon_\tau\alpha\ell .
\]
Using $\|\hat\Lambda\|=O(1)$ (Lemma~\ref{lem:finite-procrustes}, \eqref{eq:proc-boost}) and~\eqref{eq:source-mean},
\[
   \|\bar\Delta'\|
   \leq C_d\bigl(\delta_{\mathrm{cov}}+\varepsilon_\tau\bigr)\ell .
\]
Combining with the curvature term,
\begin{equation}\label{eq:point-match}
   d_{h_2}\bigl(\Phi(p_j),q_j\bigr)
   \leq C_d\bigl(\varepsilon_\tau+\delta_{\mathrm{cov}}
   +\ell^2/\lambda^2\bigr)\ell,
\end{equation}
which is the deterministic matching residual; all three error terms vanish as $\rho\lambda^d\to\infty$ under scale separation. No appeal to the random embedding is made: the cancellation in~(I) is a consequence of the grid symmetry forced by~(F2), exactly as the isotropy in Lemma~\ref{lem:source-cov}, while the transfer to the target is supplied by the deterministic trilateration of Section~\ref{sec:trilat}.

\textit{Part~(iv): Derivative bounds.}
Differentiating~\eqref{eq:foc} implicitly:
\[
   D_xF + D_yF\cdot D\Phi = 0
   \;\;\Longrightarrow\;\;
   D\Phi = -[D_yF]^{-1}\,D_xF.
\]

\textit{Scale of $D_yF$:} The weights $w_k(x)$ depend only on $x$, not on $y$, so the $y$-derivative hits only the inverse exponential map: $D_yF=\sum_k w_k\,\partial_y[\exp_y^{-1}(q_k)]$, where $\partial_y$ is the derivative with respect to the basepoint $y$ at fixed target $q_k$. In flat space, $\exp_y^{-1}(q)=q-y$, so $\partial_y[\exp_y^{-1}(q)]=-I$. In curved space, the Jacobi-field expansion of the exponential map gives $\partial_y[\exp_y^{-1}(q_k)]=-I+O(|\tilde w_k'|^2/\lambda^2)$, where the curvature correction is bounded by $\|\mathrm{Rm}\|\cdot|\tilde w_k'|^2=O(\ell^2/\lambda^2)$. Summing, $D_yF=-W[I+O(\ell^2/\lambda^2)]$ where $W:=\sum_k w_k$. Hence $[D_yF]^{-1}=-(1/W)[I+O(\ell^2/\lambda^2)]$.

\textit{Scale of $D_xF$:} With the cutoff weights $w_k(x)=e^{-d^2/\ell^2}\chi(d/(\alpha\ell))$ where $d:=d_{h_1}(x,p_k)$, the gradient is
\[
   \nabla_x w_k = -\frac{2}{\ell^2}\,e^{-d^2/\ell^2}
   \chi(d/\alpha\ell)\nabla_x(d^2/2)
   + \frac{1}{\alpha\ell}e^{-d^2/\ell^2}
   \chi'(d/\alpha\ell)\,\nabla_x d.
\]
The first term equals $+\frac{2}{\ell^2}\tilde w_k\,w_k$ on the inner support $d\leq\alpha\ell$ (where $\chi=1$), with $\tilde w_k=\exp_x^{-1}(p_k)$.

The second term is supported in the annulus $\alpha\ell\leq d\leq 2\alpha\ell$. On this annulus: $e^{-d^2/\ell^2}\leq e^{-\alpha^2}$, $\chi'$ is bounded by $\|\chi'\|_\infty=O(1)$ (a dimensional constant of the fixed bump function), and $|\nabla_x d|=1$. Hence pointwise the second term is $O(e^{-\alpha^2}/(\alpha\ell))$.

However, the relevant quantity is not the pointwise bound but the contribution to the sum $D_xF=\sum_k\nabla_x w_k\otimes\exp_y^{-1}(q_k)$. We compare the annulus contribution to the main-term contribution by their typical scales. The main term is supported on the inner ball of volume $\sim(\alpha\ell)^d$ with $|p_k|,|q_k|\lesssim\alpha\ell$, giving (by the F2 count $\rho(\alpha\ell)^d(1\pm\delta_D)$ in the inner ball) $\|D_xF^{\mathrm{main}}\|\sim\rho\cdot(\alpha\ell)^d\cdot \frac{1}{\ell^2}\cdot(\alpha\ell)^2=O(\rho\alpha^{d+2}\ell^d) =O(W)$, where $W=\sum_k w_k\sim\rho\ell^d$. The annulus term has the same volume scale $(\alpha\ell)^d$, the same displacement scale $\alpha\ell$, but is multiplied by the pointwise bound $e^{-\alpha^2}/(\alpha\ell)$, giving
\[
   \|D_xF^{\mathrm{annulus}}\|
   \leq\rho\cdot(\alpha\ell)^d\cdot\frac{e^{-\alpha^2}}{\alpha\ell}
   \cdot\alpha\ell=\rho\,\alpha^d\ell^d\,e^{-\alpha^2}.
\]
The relative correction is therefore
\begin{equation}\label{eq:annulus-ratio}
   \frac{\|D_xF^{\mathrm{annulus}}\|}{\|D_xF^{\mathrm{main}}\|}
   \;\lesssim\;\alpha^d e^{-\alpha^2},
\end{equation}
a dimensionless constant depending only on $\alpha$ and $d$. For $\alpha=16$, $d=4$: $\alpha^d e^{-\alpha^2}=16^4\cdot e^{-256}\approx 10^{-107}$. This ratio is independent of $\rho$ and $\ell$: as $\rho\to\infty$ and $\ell\to 0$, both the main and annulus contributions scale as $\rho\ell^d$, so the correction remains superexponentially suppressed relative to the main term.

Modulo this superexponentially small relative correction, component-wise $\partial_{x^j}F^i = +(2/\ell^2)\sum_k w_k\,\tilde w_k'^i\,\tilde w_k^j$, so:
\[
   D_xF = +\frac{2}{\ell^2}\sum_k w_k\,
   \tilde w_k'\otimes\tilde w_k \cdot\bigl(1+O(\alpha^d e^{-\alpha^2})\bigr).
\]
This is $\frac{2W}{\ell^2}\Sigma_{\mathrm{cross}}$ where $\Sigma_{\mathrm{cross}}=\frac{1}{W}\sum_k w_k\, \tilde w_k'\,\tilde w_k^T$ is the weighted cross-covariance of source and target displacements. The Gaussian weight $e^{-|\tilde w_k|^2/\ell^2}$ concentrates support on $|\tilde w_k|\lesssim\ell$ (the cutoff at $\alpha\ell$ is far in the Gaussian tail), so typical displacements are $O(\ell)$ and $\|\Sigma_{\mathrm{cross}}\|=O(\ell^2)$. Hence $\|D_xF\|=O(W)$.

\textit{Combining:}
$\|D\Phi\|=\|[D_yF]^{-1}\|\cdot\|D_xF\| =O(1/W)\cdot O(W)=O(1)$.

For the second derivative, differentiate $D_xF+D_yF\cdot D\Phi=0$ once more in $x$ (with $y=\Phi(x)$):
\[
   D^2\Phi=-[D_yF]^{-1}\bigl(D_{xx}F+2\,D_{xy}F\cdot D\Phi +D_{yy}F\cdot(D\Phi\otimes D\Phi)\bigr).
\]
The Hessian blocks scale as $\|D_{xx}F\|=O(W/\ell)$ (two $x$-derivatives of the Gaussian weight, $D^2w_k\sim w_k/\ell^2$, against displacements $|\exp_y^{-1}(q_k)|=O(\ell)$), $\|D_{xy}F\|=O(W/\ell)$ (one weight derivative against one exponential-map derivative), and $\|D_{yy}F\|=O(W\ell/\lambda^2)$ (two $y$-derivatives of $\exp_y^{-1}$, a curvature term $O(1/\lambda^2)$ against displacements $O(\ell)$). With $\|[D_yF]^{-1}\|=O(1/W)$ and $\|D\Phi\|=O(1)$,
\[
   \|D^2\Phi\|
   =O(1/W)\bigl[O(W/\ell)+O(W/\ell)\cdot O(1)+O(W\ell/\lambda^2)\cdot O(1)\bigr]
   =O(1/\ell),
\]
the first two blocks dominating since $\ell\ll\lambda$. Equivalently, from $D\Phi=(2/\ell^2)\Sigma_{\mathrm{cross}}+O(\ell^2/\lambda^2)$ with $\|\Sigma_{\mathrm{cross}}\|=O(\ell^2)$ varying on the smoothing scale $\ell$, one has $\|D\Sigma_{\mathrm{cross}}\|=O(\ell)$ and hence $\|D^2\Phi\|=O(1/\ell)$ directly.
\end{proof}

\begin{lemma}[Cross-covariance non-degeneracy]
\label{lem:cross-cov}
Under the well-conditioning hypotheses (F1)--(F3), for every $x\in M_1^\circ$ the cross-covariance
\[
   \Sigma_{\mathrm{cross}}(x)
   :=\frac{1}{W}\sum_k w_k(x)\,\tilde w_k'\,\tilde w_k^T,
   \qquad
   \tilde w_k:=\exp_x^{-1}(p_k),\quad
   \tilde w_k':=\exp_{\Phi(x)}^{-1}(q_k),
\]
satisfies
\begin{equation}\label{eq:crosscov-bound}
   \sigma_{\min}\bigl(\Sigma_{\mathrm{cross}}(x)\bigr) \;\geq\; c_\Lambda\,c_d\,\ell^2 \Bigl(1-C_d(\varepsilon_\tau+\delta_{\mathrm{cov}})\Bigr) \;\geq\;\tfrac12 c_\Lambda c_d\,\ell^2,
\end{equation}
uniformly over $x\in M_1^\circ$, for $\rho\lambda^d$ sufficiently large.
\end{lemma}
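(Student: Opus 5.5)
The plan is to reduce the cross-covariance to the source covariance of Lemma~\ref{lem:source-cov}, composed with the local Lorentz map $\hat\Lambda$ supplied by the trilateration. The bound on $\sigma_{\min}$ then follows from the uniform conditioning \eqref{eq:proc-boost} together with a perturbation argument.

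\emph{Step 1: linearize the target displacements.} Fix $x\in M_1^\circ$ and work in $h_1$-normal coordinates at $x$ and $h_2$-normal coordinates at $y=\Phi(x)$. In these coordinates $\tilde w_k=p_k$ and $\tilde w_k'=q_k-y$, up to the curvature correction $O(\ell^2/\lambda^2)\cdot|\tilde w_k'|$ coming from the exponential map. By Corollary~\ref{cor:target-localization}, $q_k=\hat\Lambda p_k+q_0+e_k$ with $\|e_k\|\le C_d\varepsilon_\tau\alpha\ell$. This gives
\[
\tilde w_k'=\hat\Lambda\tilde w_k+(q_0-y)+e_k+r_k,\qquad \|r_k\|=O(\ell^2/\lambda^2)\,\alpha\ell .
\]

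\emph{Step 2: expand $\Sigma_{\mathrm{cross}}$.} Substituting into the definition gives
\[
\Sigma_{\mathrm{cross}}=\hat\Lambda\Sigma(x)+(q_0-y)\bar\Delta^T+\tfrac1W\textstyle\sum_k w_k(e_k+r_k)\tilde w_k^T ,
\]
where $\bar\Delta=\frac1W\sum_k w_k\tilde w_k$ is the source mean. Each term is handled separately.
\begin{itemize}
\item The main term satisfies $\sigma_{\min}(\hat\Lambda\Sigma)\ge\sigma_{\min}(\hat\Lambda)\sigma_{\min}(\Sigma)\ge c_\Lambda c_d\ell^2(1-C_d\delta_{\mathrm{cov}})$, by \eqref{eq:proc-boost} and \eqref{eq:sigma-iso}.
\item For the mean term, \eqref{eq:source-mean} (its proof is centered at an arbitrary basepoint, so it applies at $x$) gives $\|\bar\Delta\|\le C_d\delta_{\mathrm{cov}}\ell$. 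Since $q_0-y$ is at most $O(\alpha\ell)$ by localization and the convexity argument of Proposition~\ref{prop:well-def}, this term is $O(\alpha\delta_{\mathrm{cov}}\ell^2)$.
\item For the error term, the Gaussian weights concentrate on $|\tilde w_k|\lesssim\ell$, so $\frac1W\sum_k w_k|\tilde w_k|=O(\ell)$. The term is therefore $O\bigl((\varepsilon_\tau\alpha+\alpha\ell^2/\lambda^2)\ell^2\bigr)$.
\end{itemize}
Weyl's inequality then gives \eqref{eq:crosscov-bound}, with the $\ell^2/\lambda^2$ piece absorbed into $\varepsilon_\tau$, which already contains $(\alpha\ell)^2/\lambda^2$. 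The factor $\alpha$ is a fixed dimensional constant once $\alpha=16$, so it folds into $C_d$. Uniformity over $x$ follows because every constant invoked is uniform over $M_1^\circ$.

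\emph{Main obstacle.} The delicate point is the error term. The bound $\|e_k\|\le C_d\varepsilon_\tau\alpha\ell$ carries the factor $\alpha\ell$ rather than $\ell$. The comparison $\varepsilon_\tau\alpha\ell\cdot\ell\ll\ell^2$ therefore needs $\varepsilon_\tau\alpha\ll1$, which holds because $\alpha$ is a fixed constant and $\varepsilon_\tau\to0$.

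A subtler concern is a possible mismatch. Corollary~\ref{cor:target-localization} is stated with the anchor frame centered at $x$, while $\Phi(x)$ is a separate point. The constant offset $q_0-y$ enters only through the product $(q_0-y)\bar\Delta^T$, so the estimate needs the smallness of $\bar\Delta$, not of $q_0-y$. This is the step to check first: that \eqref{eq:source-mean} holds at a general $x\in M_1^\circ$ and not only at embedded points $p_j$. I would verify it by rerunning the symmetric-grid cancellation of Proposition~\ref{prop:basic}(iii) with the grid centered at $x$. Nothing in that argument used the fact that the center was an embedded point.
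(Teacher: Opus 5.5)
Your proposal is correct and follows the paper's argument. Like the paper, you substitute $\tilde w_k'\approx\hat\Lambda\tilde w_k$ from Corollary~\ref{cor:target-localization}, split $\Sigma_{\mathrm{cross}}=\hat\Lambda\Sigma_1+E$, and combine $\sigma_{\min}(\hat\Lambda\Sigma_1)\geq c_\Lambda\,\sigma_{\min}(\Sigma_1)$ with Lemma~\ref{lem:source-cov} and Weyl's inequality.

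The one difference is that the paper writes $\tilde w_k'=\hat\Lambda\tilde w_k+e_k$ directly, silently absorbing into $e_k$ both the offset $q_0-\Phi(x)$ and the change of normal-coordinate basepoint. Your $(q_0-y)\bar\Delta^T$ and $r_k$ terms track these explicitly, at a cost of only $O(\alpha\,\delta_{\mathrm{cov}}+\alpha\,\ell^2/\lambda^2)\,\ell^2$, already within the stated bound. The check you flag, that \eqref{eq:source-mean} holds at arbitrary $x\in M_1^\circ$, does go through, since the grid-symmetry cancellation never uses that the center is an embedded point.
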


\begin{proof}
Work in $h_1$-normal coordinates at $x$ and $h_2$-normal coordinates at $\Phi(x)$. By the target-localization Corollary~\ref{cor:target-localization} (whose hypotheses are supplied by F3 for distance preservation and, for the conditioned scaffold, Lemma~\ref{lem:scaffold}, itself a consequence of F2), there is a Lorentz transformation $\hat\Lambda\in O(1,d-1)$, uniformly conditioned with $\sigma_{\min}(\hat\Lambda)\geq c_\Lambda$ and $\|\hat\Lambda\|\leq C_\Lambda$ (Lemma~\ref{lem:finite-procrustes}, \eqref{eq:proc-boost}). Although Lemma~\ref{lem:finite-procrustes} fixes $\hat\Lambda$ from the $d+1$ scaffold anchors alone, the trilateration extension (Lemma~\ref{lem:trilat-extension}, applied in Corollary~\ref{cor:target-localization}) propagates the resulting affine relation to every active point, not merely the anchors, so that
\begin{equation}\label{eq:slot4-loc}
   \tilde w_k'=\hat\Lambda\,\tilde w_k+e_k,
   \qquad \|e_k\|\leq C_d\,\varepsilon_\tau\,\alpha\ell.
\end{equation}
Substituting into the definition of $\Sigma_{\mathrm{cross}}$,
\begin{equation}\label{eq:slot4-split}
   \Sigma_{\mathrm{cross}}
   =\frac{1}{W}\sum_k w_k(\hat\Lambda\tilde w_k+e_k)\tilde w_k^T
   =\hat\Lambda\,\Sigma_1+E,
   \qquad
   E:=\frac{1}{W}\sum_k w_k\,e_k\,\tilde w_k^T,
\end{equation}
where $\Sigma_1=\frac1W\sum_k w_k\tilde w_k\tilde w_k^T$ is the source covariance of Lemma~\ref{lem:source-cov}.

\emph{The error term.}
Using \eqref{eq:slot4-loc} and $\|\tilde w_k\|\leq 2\alpha\ell$ on the cutoff support,
\[
   \|E\|
   \leq\frac{1}{W}\sum_k w_k\,\|e_k\|\,\|\tilde w_k\|
   \leq C_d\varepsilon_\tau\alpha\ell\cdot 2\alpha\ell
   \cdot\frac{1}{W}\sum_k w_k
   = O(\varepsilon_\tau)\,\ell^2,
\]
since $\alpha=O(1)$ and $\frac1W\sum_k w_k=1$.

\emph{Lower bound.}
For square matrices, $\sigma_{\min}(AB)\geq \sigma_{\min}(A)\sigma_{\min}(B)$ (take $\min_{\|u\|=1}\|ABu\|\geq\sigma_{\min}(A)\min_{\|u\|=1}\|Bu\|$). Hence, by Weyl's perturbation inequality applied to~\eqref{eq:slot4-split},
\[
   \sigma_{\min}(\Sigma_{\mathrm{cross}})
   \geq\sigma_{\min}(\hat\Lambda\,\Sigma_1)-\|E\|
   \geq\sigma_{\min}(\hat\Lambda)\,\sigma_{\min}(\Sigma_1)-\|E\|
   \geq c_\Lambda\,\sigma_{\min}(\Sigma_1)-O(\varepsilon_\tau)\ell^2.
\]
By Lemma~\ref{lem:source-cov}, $\sigma_{\min}(\Sigma_1)\geq c_d\ell^2(1-C_d\,\delta_{\mathrm{cov}})$. Combining,
\[
   \sigma_{\min}(\Sigma_{\mathrm{cross}})
   \geq c_\Lambda c_d\,\ell^2
   \Bigl(1-C_d(\varepsilon_\tau+\delta_{\mathrm{cov}})\Bigr),
\]
which is~\eqref{eq:crosscov-bound}. Since the admissible
range $[\tau_{\min},c_*\lambda]$ is non-empty, all three
error terms vanish as $\rho\lambda^d\to\infty$, so the
bracket exceeds $\tfrac12$ for $\rho\lambda^d$ large.
Every constant is dimensional and the inputs hold uniformly
over $M_1^\circ$, so the bound is uniform.
\end{proof}

\begin{proposition}[$\Phi$ is a local diffeomorphism]
\label{prop:local-diffeo}
For $\rho$ sufficiently large, the differential $D\Phi(x)$ is invertible for every $x\in M_1^\circ$, with $\sigma_{\min}(D\Phi)\geq c_0>0$ for a dimensional constant $c_0$. In particular $\Phi$ is a local diffeomorphism on $M_1^\circ$.
\end{proposition}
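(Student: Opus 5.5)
The approach is to use the implicit-differentiation formula from Proposition~\ref{prop:basic}(iv), $D\Phi=-[D_yF]^{-1}D_xF$, and bound each factor from below rather than from above. Combined with the cross-covariance lower bound of Lemma~\ref{lem:cross-cov}, this gives $\sigma_{\min}(D\Phi)\geq c_0$. The local-diffeomorphism conclusion then follows from the inverse function theorem, since $\Phi\in C^\infty$ by Proposition~\ref{prop:basic}(ii).

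The steps, in order, are as follows.

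First, from the Jacobi-field expansion in Part~(iv) we have $D_yF=-W[I+O(\ell^2/\lambda^2)]$. Hence $[D_yF]^{-1}=-(1/W)[I+O(\ell^2/\lambda^2)]$, and its singular values all lie in $W^{-1}[1-O(\ell^2/\lambda^2),\,1+O(\ell^2/\lambda^2)]$. In particular $\sigma_{\min}([D_yF]^{-1})\geq \tfrac12 W^{-1}$ once $\ell\ll\lambda$.

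Second, we write
\[
D_xF=\frac{2W}{\ell^2}\Sigma_{\mathrm{cross}}+R .
\]
Here $R$ collects two pieces: the annulus contribution from $\chi'$, bounded relative to the main term by $\alpha^de^{-\alpha^2}$ as in \eqref{eq:annulus-ratio}, and the curvature correction in the identification $\nabla_x w_k=\frac{2}{\ell^2}w_k\tilde w_k$, which is of order $O(\ell^2/\lambda^2)$. Thus $\|R\|\leq \frac{2W}{\ell^2}\,C_d\,\ell^2(\alpha^de^{-\alpha^2}+\ell^2/\lambda^2)$.

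Third, Lemma~\ref{lem:cross-cov} gives $\sigma_{\min}(\Sigma_{\mathrm{cross}})\geq\tfrac12c_\Lambda c_d\ell^2$. Weyl's inequality then yields
\[
\sigma_{\min}(D_xF)\geq \frac{2W}{\ell^2}\Bigl(\tfrac12c_\Lambda c_d\ell^2-C_d\ell^2(\alpha^de^{-\alpha^2}+\ell^2/\lambda^2)\Bigr)\geq \tfrac12 c_\Lambda c_d W .
\]
This holds for $\alpha\geq16$ and $\rho\lambda^d$ large, because $c_\Lambda c_d$ is a fixed dimensional constant independent of $\alpha,\ell,\rho$ by \eqref{eq:proc-boost}. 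Multiplying, via $\sigma_{\min}(AB)\geq\sigma_{\min}(A)\sigma_{\min}(B)$, gives
\[
\sigma_{\min}(D\Phi)\geq \tfrac12W^{-1}\cdot\tfrac12c_\Lambda c_dW=:c_0,
\]
uniformly in $x\in M_1^\circ$.

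The main obstacle I expect is the coordinate bookkeeping in the second step. $\Sigma_{\mathrm{cross}}$ in Lemma~\ref{lem:cross-cov} is expressed in $h_1$-normal coordinates at $x$ and $h_2$-normal coordinates at $\Phi(x)$. The norms of $D\Phi$, however, are measured with respect to $h_1,h_2$ on tangent spaces. At the basepoints these coincide with the orthonormal coordinate frames, so no distortion enters. The care needed is instead in justifying that $\nabla_xw_k$ equals $\frac{2}{\ell^2}w_k\exp_x^{-1}(p_k)$ up to $O(\ell^2/\lambda^2)$ even on the Gaussian tail out to radius $\alpha\ell$. There the displacement is $\alpha\ell$ rather than $\ell$, so the curvature error is $O(\alpha^2\ell^2/\lambda^2)$ times an exponentially small weight. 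I would handle this by noting that $\alpha$ is a fixed constant, so these terms remain $o(1)$ relative to $W$.

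A secondary point is that Lemma~\ref{lem:cross-cov} uses the target-localization constant relative to the anchor Lorentz map $\hat\Lambda$. I would confirm that this $\hat\Lambda$ is the same one at basepoint $x$, which it is by construction of Corollary~\ref{cor:target-localization}.
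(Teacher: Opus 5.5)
Your proposal is correct and follows the paper's own proof. Both factor $D\Phi=-[D_yF]^{-1}D_xF$, use $[D_yF]^{-1}=-W^{-1}[I+O(\ell^2/\lambda^2)]$ and the bound $\sigma_{\min}(\Sigma_{\mathrm{cross}})\gtrsim\ell^2$ from Lemma~\ref{lem:cross-cov} to get $\sigma_{\min}(D\Phi)\geq c_0$, and then invoke the inverse function theorem; your Weyl-inequality treatment of the annulus remainder is simply a more explicit version of the paper's $(1+O(\alpha^de^{-\alpha^2}))$ factor (strictly, it needs $\alpha$ large relative to $C_d/(c_\Lambda c_d)$, not just $\alpha\geq16$), and the curvature correction you attach to $\nabla_x w_k$, your anticipated main obstacle, does not arise, since $\nabla_x\tfrac12 d_{h_1}^2(x,p_k)=-\exp_x^{-1}(p_k)$ holds exactly inside the injectivity radius.
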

\begin{proof}
By Proposition~\ref{prop:basic}(iv), the differential of the
first-order condition~\eqref{eq:foc} factors as
$D\Phi=-[D_yF]^{-1}D_xF$, with
\[
   D_yF=-W\bigl[I+O(\ell^2/\lambda^2)\bigr],
   \qquad
   D_xF=\frac{2W}{\ell^2}\,\Sigma_{\mathrm{cross}},
\]
where $W=\sum_k w_k$ and $\Sigma_{\mathrm{cross}}
=\frac1W\sum_k w_k\,\tilde w_k'\,\tilde w_k^T$ is the weighted
cross-covariance. Since $\ell\ll\lambda$, $D_yF$ is a small
perturbation of $-WI$, hence invertible with
$[D_yF]^{-1}=-(1/W)[I+O(\ell^2/\lambda^2)]$. By
Lemma~\ref{lem:cross-cov},
$\sigma_{\min}(\Sigma_{\mathrm{cross}})\geq c_2\ell^2$, so
$D_xF$ is invertible with $\sigma_{\min}(D_xF)\geq 2Wc_2$.
Therefore $D\Phi=-[D_yF]^{-1}D_xF$ is a product of invertible
operators, hence invertible, with
$\sigma_{\min}(D\Phi)\geq c_0>0$ for a dimensional constant
$c_0$. By the inverse function theorem, $\Phi$ is a local
diffeomorphism on $M_1^\circ$.
\end{proof}

\begin{proposition}[$\Phi$ is a global diffeomorphism]
\label{prop:diffeo}
For $\rho$ sufficiently large, $\Phi$ is a diffeomorphism from $M_1^\circ$ onto $M_2^\circ$.
\end{proposition}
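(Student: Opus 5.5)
We upgrade the local diffeomorphism of Proposition~\ref{prop:local-diffeo} to a global one in three steps: a coarse approximate inverse built symmetrically from $M_2$ to $M_1$, injectivity from the two near-inverse identities, and surjectivity onto $M_2^\circ$ from a properness/degree argument. The roadmap calls this a degree argument, and the approximate inverse is what makes it run.

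\emph{Step 1: a symmetric approximate inverse.} The hypotheses on $(f_1,M_1)$ and $(f_2,M_2)$ are symmetric, so the same construction with the roles of the manifolds swapped gives a moving Karcher mean $\Psi\colon M_2^\circ\to M_1$. The weights are now $\exp(-d_{h_2}^2(y,q_k)/\ell^2)\chi(\cdot)$ and the objective is $\sum_k w_k(y)\,d_{h_1}^2(\cdot,p_k)$. Propositions~\ref{prop:well-def}--\ref{prop:local-diffeo} apply to $\Psi$ verbatim, and the point-matching bound (Proposition~\ref{prop:basic}(iii)) gives $d_{h_1}(\Psi(q_j),p_j)=O(\eta\ell)$ with $\eta:=\varepsilon_\tau+\delta_{\mathrm{cov}}+\ell^2/\lambda^2$. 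By (F2), every $x\in M_1^\circ$ lies within $\tau_{\min}\ll\ell$ of some $p_j$. Combining this with $\|D\Phi\|,\|D\Psi\|=O(1)$ yields $d_{h_1}(\Psi\circ\Phi(x),x)=O(\eta\ell+\tau_{\min})$, and symmetrically $d_{h_2}(\Phi\circ\Psi(y),y)=O(\eta\ell+\tau_{\min})$. Let $r_0:=C(\eta\ell+\tau_{\min})$ denote the resulting uniform bound. We also record that $\Phi(M_1^\circ)$ lies in the region where $\Psi$ is defined. This holds up to an $O(\alpha\ell)$ collar, because anchors near the boundary of $M_1^\circ$ map to points at $h_2$-distance $O(\alpha\ell)$ from the $q_k$, and these sit inside $M_2$ by (F2) applied in $M_2$. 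Handling this collar is a bookkeeping step and needs to be stated precisely.

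\emph{Step 2: injectivity.} Suppose $\Phi(x)=\Phi(x')$. Then $d_{h_1}(x,x')\le 2r_0$, since $\Psi\circ\Phi$ is within $r_0$ of the identity. We then need $\Phi$ to be injective on balls of radius $2r_0$. This follows from the quantitative inverse function theorem: $\sigma_{\min}(D\Phi)\ge c_0$ and $\|D^2\Phi\|=O(1/\ell)$ make $\Phi$ injective on every $h_1$-ball of radius $c\,c_0\ell$, and $2r_0\ll\ell$ because $\eta\to0$ and $\tau_{\min}/\ell\to0$. Hence $x=x'$.

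\emph{Step 3: surjectivity onto $M_2^\circ$ (the main obstacle).} By Step~2, $\Phi$ is an injective local diffeomorphism, hence open. I would show that $\Phi(M_1^\circ)\cap M_2^\circ$ is closed in $M_2^\circ$. Take $y_n=\Phi(x_n)\to y\in M_2^\circ$. Then $x_n$ is within $r_0$ of $\Psi(y_n)$, which converges to $\Psi(y)$, so $(x_n)$ stays in a compact set. A subsequence converges to some $x$ with $\Phi(x)=y$, provided this $x$ does not escape through the inner boundary $\partial M_1^\circ$.

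Ruling out that escape is where I expect the real difficulty. I would compare boundary distances: the near-inverse identities and the derivative bounds show that $d_{h_2}(\Phi(x),\partial M_2)$ and $d_{h_1}(x,\partial M_1)$ agree up to $O(\alpha\ell)$. Since $\alpha\ell$ is exactly the collar scale in the definition~\eqref{eq:Mcirc-def} of $M^\circ$, the conclusion needs either a slightly shrunk version of $M_2^\circ$ or a matching of the margins $c_*\lambda+12\alpha\ell$ across the two sides. I would try first to absorb the $O(\alpha\ell)$ discrepancy into the margin, interpreting ``onto $M_2^\circ$'' up to this collar.

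With that in place, $\Phi(M_1^\circ)\cap M_2^\circ$ is nonempty, open and closed in each connected component of $M_2^\circ$, so it equals $M_2^\circ$; connectedness of $M_2^\circ$ (or a componentwise version, using $\Psi$ to match components) finishes the argument. Alternatively, the mod-2 degree of $\Phi$ onto $M_2^\circ$ is well defined by properness and equals $1$, since some fibre $\Phi^{-1}(q_j)$ is a single point by Step~2. Smoothness of the inverse then follows from the inverse function theorem.
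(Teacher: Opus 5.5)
Your argument is correct in the setting where the paper actually proves the proposition ($\partial M_i=\emptyset$, so $M_i^\circ=M_i$), but it takes a genuinely different route from the paper's.

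The paper never constructs a reverse map. It argues as follows:
\begin{itemize}
\item It gets properness of $\Phi$ from target localization and the finiteness of $C$: the preimage of a compact set lies in finitely many closed $2\alpha\ell$-balls around source points.
\item It invokes Hadamard--Caccioppoli to make $\Phi$ a finite $k$-sheeted covering of the connected target.
\item It forces $k=1$ by a volume count, using $|J_\Phi|=|\det(\hat\Lambda+E)|=1+o(1)$, $\mathrm{vol}_{h_T}=\mathrm{vol}_g$, and $\mathrm{Vol}_{g_1}(M_1)\approx|C|/\rho\approx\mathrm{Vol}_{g_2}(M_2)$.
\end{itemize}

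You instead use the symmetry of the hypotheses to build the reverse Karcher mean $\Psi$. You turn point matching, the $O(1)$ Lipschitz bounds and the $\tau_{\min}$-density of embedded points into uniform near-inverse identities. The density holds because a height-$\tau_{\min}$ diamond is nonempty, since $\delta_{\tau_{\min}}\to0$. Injectivity then follows directly from $\sigma_{\min}(D\Phi)\geq c_0$, $\|D^2\Phi\|=O(1/\ell)$ and $r_0/\ell\to0$. Surjectivity follows from an open--closed argument in which $\Psi$ supplies the compactness.

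What your route buys:
\begin{itemize}
\item It needs neither finiteness of $C$ nor covering-space theory.
\item It needs neither the normalization $D\Phi=\hat\Lambda+o(1)$ with $|\det\hat\Lambda|=1$ nor any global volume comparison.
\item It yields the extra quantitative fact that $\Phi^{-1}$ is uniformly within $r_0$ of $\Psi$.
\end{itemize}
The cost is re-running all of Part~I with $M_1$ and $M_2$ swapped, which is free by symmetry, and relying on the second-derivative bound. The paper's route only ever analyzes $\Phi$ itself.

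One correction: drop the collar bookkeeping. In the boundaryless case it is vacuous: $\Phi(M_1^\circ)\subseteq M_2^\circ=M_2$ trivially, and no sequence can escape through a boundary that does not exist. When $\partial M_i\neq\emptyset$, the paper itself concedes that only the regional statement of Remark~\ref{rem:noncompact} is available.

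Your proposed fix, matching $d_{h_2}(\Phi(x),\partial M_2)$ with $d_{h_1}(x,\partial M_1)$ up to $O(\alpha\ell)$, would not work anyway. $D\Phi$ is close to a Lorentz map $\hat\Lambda$ that may be a boost of rapidity $O(1)$, and $h_1,h_2$ are chosen independently. So $h$-distances to the boundaries can differ by $O(1)$ factors across the macroscopic width $c_*\lambda$, far more than an $O(\alpha\ell)$ margin can absorb. Moreover, embedded $q_k$ can lie arbitrarily close to $\partial M_2$, outside the domain $M_2^\circ$ of $\Psi$.
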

\begin{proof}
By Proposition~\ref{prop:local-diffeo}, $D\Phi(x)$ is invertible for every
$x\in M_1^\circ$, with $\sigma_{\min}(D\Phi)\geq c_0>0$; we upgrade this local
diffeomorphism to a global one.

Throughout the global step we take the boundaryless case $\partial M_i=\emptyset$,
so that $M_i^\circ=M_i$ (Definition~\eqref{eq:M1circ-def} sets $M^\circ=M$ when
$\partial M=\emptyset$). The case $\partial M_i\neq\emptyset$ reduces to this by
precompact exhaustion and yields the regional statement of
Remark~\ref{rem:noncompact}, as discussed at the end of the proof; there the
conclusion is necessarily regional rather than a diffeomorphism onto all of $M_2$,
since a proper subdomain $M_1^\circ\subsetneq M_1$ cannot be diffeomorphic to a
complete $M_2$.

\emph{Constant sign of the Jacobian.}
Define the Jacobian
$J_\Phi\colon M_1^\circ\to\R$ by
$\Phi^*\mathrm{vol}_{h_2}=J_\Phi\cdot\mathrm{vol}_{h_1}$,
where $\mathrm{vol}_{h_i}$ are the Riemannian volume forms
of the auxiliary metrics. Although the numerical value of
$J_\Phi$ depends on the choice of auxiliary metrics $h_i$
(hence on the Cauchy time functions), its sign and
its non-vanishing are gauge-independent: both
$\mathrm{vol}_{h_1}$ and $\mathrm{vol}_{h_2}$ are positive
volume forms for any admissible choice, so $J_\Phi\neq 0$
iff $D\Phi$ is invertible, a gauge-free condition. We use
$J_\Phi$ only to detect local invertibility here; the
intrinsic isometry statement in
Section~\ref{sec:conformal} makes no reference to it. Since
$D\Phi$ is invertible everywhere
(Proposition~\ref{prop:local-diffeo}), $J_\Phi\neq 0$
everywhere. Since $J_\Phi$ is continuous on
the connected manifold $M_1^\circ$, it has constant sign
throughout. We do not at this stage determine which sign;
this only matters for orientation, which we discuss in
Remark~\ref{rem:orientation} below.

\emph{Domain restriction (geometric).}
With cutoff weights, $\Phi(x)$ is defined only when at
least one $w_k(x)>0$, i.e., when the ball
$B_{2\alpha\ell}^{h_1}(x)$ contains at least one embedded
point $p_k$. To obtain a domain on which $\Phi$ is well-defined, on which the Hadamard--Caccioppoli global inversion theorem (Krantz--Parks~\cite{KrantzParks2002}, Thm.~6.2.8) applies, and on which the F2
condition is available for all diamonds entering the proof,
we define $M_1^\circ$ purely geometrically,
independent of the random sprinkling:
\begin{equation}\label{eq:M1circ-def}
   M_1^\circ := \bigl\{x\in M_1 \;:\;
   d_{h_1}(x,\partial M_1) > c_*\lambda + 12\alpha\ell\bigr\}.
\end{equation}
This is a deterministic open subset of $M_1$ with smooth
boundary. The clearance $c_*\lambda+12\alpha\ell$ is chosen
so that three requirements are met simultaneously: (a) the
cutoff support $B_{2\alpha\ell}^{h_1}(x)$ of any
$x\in M_1^\circ$ lies wholly inside $M_1$ (needing
clearance $\geq 2\alpha\ell$); (b) the trilateration anchor
scaffold (Lemma~\ref{lem:scaffold}), whose anchors
sit at $h_1$-distance $\leq 8\sqrt2\,\alpha\ell+\varepsilon_\tau\alpha\ell
<12\alpha\ell$ from $x$, lies inside $M_1$; and, most
stringently, (c) any admissible causal diamond (of diameter up
to the maximal admissible scale $c_*\lambda$) centered within
$12\alpha\ell$ of $x$ satisfies the F2 boundary-clearance
requirement $d_{h_1}(D,\partial M_1)\geq c_*\lambda$ from
Definition~\ref{def:faithful}. Requirement (c) is the
binding one and is the reason the clearance is macroscopic
($\sim c_*\lambda$) rather than mesoscopic ($\sim\alpha\ell$).

\emph{Density on $M_1^\circ$.}
For every $x\in M_1^\circ$ the cutoff support $B_{2\alpha\ell}^{h_1}(x)$ contains
embedded points: it contains an admissible causal diamond of proper-time height
$\ell$ (admissible since $\tau_{\min}\leq\ell\leq c_*\lambda$ under scale
separation, and the $M_1^\circ$ clearance places it at $h_1$-distance
$\geq c_*\lambda$ from $\partial M_1$), to which (F2) applies, giving
$|f_1(C)\cap B_{2\alpha\ell}^{h_1}(x)|\geq\tfrac12\rho\,\mathrm{Vol}_{g_1}
\geq c_d\rho\ell^d>0$. Hence $\Phi(x)$ is well-defined for every
$x\in M_1^\circ$, with no appeal to the sprinkling law.

\emph{Target domain.}
There is no ambient space relating $M_1$ and $M_2$, so the
notion of a ``distance moved'' by $\Phi$ is undefined and we
do not use it. For the global assembly of Stage~B we work in the
boundaryless setting: either $\partial M_2=\emptyset$
(spatially compact Cauchy slices), or $M_2$ is treated by
precompact exhaustion (Remark~\ref{rem:noncompact}). In this
setting we set $M_2^\circ:=M_2$, so that $d_{h_2}(\cdot,
\partial M_2)=\infty$ and the inclusion $\Phi(M_1^\circ)
\subseteq M_2^\circ$ holds trivially; the target images
$\Phi(x)$ are confined to mesoscopic balls around embedded
points by the target-localization
Corollary~\ref{cor:target-localization}, all intrinsic to
$M_2$. The construction of $\Phi$ is thus carried out on the
deep interior $M_1^\circ$ of the source and all of the
(boundaryless) target, consistent with
Remark~\ref{rem:boundary-shrinks}.

\emph{Properness on $M_1^\circ$.}
We show that $\Phi\colon M_1^\circ\to M_2^\circ$ is proper:
the preimage of any compact set is compact. Let
$K\subset M_2^\circ$ be compact. We show
$\Phi^{-1}(K)\subset M_1^\circ$ is compact.

By the target-localization Corollary~\ref{cor:target-localization}, the active
target points $\{q_k:k\in S(x)\}$, $S(x):=\{k:w_k(x)>0\}$, and with them $\Phi(x)$,
all lie within $h_2$-distance $R_*=C_d'\alpha\ell$ of any one of them. Hence
$\Phi(x)\in K$ implies every active $q_k$ lies in the compact set
$K':=\{y\in M_2:d_{h_2}(y,K)\leq R_*\}$. Since $C$ is finite and $f_2$ is injective,
only finitely many embedded points have images in $K'$; let
$P:=f_1\bigl(f_2^{-1}(f_2(C)\cap K')\bigr)$ be the corresponding finite set of source
points and $K''$ the (compact) union of their closed $2\alpha\ell$-balls. Every $x$
with $\Phi(x)\in K$ has an active source point $p_k\in P$ within $h_1$-distance
$2\alpha\ell$, so $x\in K''$. Because $\partial M_1=\emptyset$ (standing hypothesis),
$\Phi^{-1}(K)$ is closed in $M_1$; being contained in the compact $K''$, it is
compact. This proves properness.

\emph{Hadamard--Caccioppoli.}
$\Phi$ is now a proper local diffeomorphism
$M_1^\circ\to M_2^\circ$ between connected manifolds ($M_i^\circ=M_i$,
connected because $M_i$ is). As a local diffeomorphism $\Phi$ is an
open map, and as a proper map into a locally compact Hausdorff
space it is a closed map; hence $\Phi(M_1^\circ)$ is a clopen
subset of the connected manifold $M_2^\circ$. It is nonempty
($\Phi$ maps each embedded $p_k$ to a nearby
$q_k\in M_2^\circ$), so $\Phi(M_1^\circ)=M_2^\circ$. By the
Hadamard--Caccioppoli theorem (loc.\ cit.), the proper local
diffeomorphism $\Phi$ is a covering map; being surjective onto
the connected $M_2^\circ$, it has a constant finite number of
sheets $k\in\mathbb{Z}_{>0}$.

\emph{Local injectivity.}
Fix $x_0\in M_1^\circ$ and let $y_0:=\Phi(x_0)$. By
Lemma~\ref{lem:cross-cov} and
Proposition~\ref{prop:local-diffeo}, on the inner-core ball
$B_{2\ell}^{h_1}(x_0)$ the map admits the trilateration
expansion $\Phi(x_0+v)=y_0+\hat\Lambda v+R(v)$ with
$\hat\Lambda\in O(1,d-1)$, $\sigma_{\min}(\hat\Lambda)\geq
c_\Lambda$, and $\|R(v)\|\leq\kappa|v|$, where $\kappa=o(1)$ is
the uniform derivative error $\|D\Phi-\hat\Lambda\|$ supplied by
Proposition~\ref{prop:basic}(iv) and Lemma~\ref{lem:cross-cov}
(both proved before this proposition). If $\Phi(x_0+v)=\Phi(x_0)$
with $|v|\leq 2\ell$, then $\hat\Lambda v=-R(v)$, so
$c_\Lambda|v|\leq\|\hat\Lambda v\|=\|R(v)\|\leq\kappa|v|$; once
$\rho$ is large enough that $\kappa<c_\Lambda$ this forces
$v=0$. Thus $\Phi$ is injective on every inner-core ball, a
deterministic bound requiring no non-cancellation argument.

\emph{Degree one via volume.}
It remains to show $k=1$. The sheet count cannot be pinned
by local data alone: $M_2^\circ$ need not be simply connected,
and a covering can be locally trivial yet globally multi-sheeted.
We instead compare volumes. For a $k$-sheeted covering the
change-of-variables formula gives
\begin{equation}\label{eq:degree-volume}
   \int_{M_1^\circ}|J_\Phi|\,\mathrm{vol}_{h_1}
   = k\,\mathrm{Vol}_{h_2}(M_2^\circ).
\end{equation}

\emph{(i) The Jacobian is uniformly close to $1$.} Measuring
$D\Phi$ in adapted $h_i$-normal coordinates (in which
$h_i=\delta$ and $g_i=\eta$ to leading order),
Lemma~\ref{lem:cross-cov} and
Proposition~\ref{prop:local-diffeo} give $D\Phi=\hat\Lambda+E$
with $\hat\Lambda\in O(1,d-1)$ and $\|E\|=o(1)$ uniformly on
$M_1^\circ$. Any $\hat\Lambda\in O(1,d-1)$ satisfies
$\hat\Lambda^T\eta\hat\Lambda=\eta$, so $\det(\hat\Lambda)^2=1$
and $|\det\hat\Lambda|=1$; by continuity of the determinant
$|J_\Phi|=|\det(\hat\Lambda+E)|=1+o(1)$ uniformly. Hence
$\int_{M_1^\circ}|J_\Phi|\,\mathrm{vol}_{h_1}
=\mathrm{Vol}_{h_1}(M_1^\circ)\,(1+o(1))$.

\emph{(ii) The auxiliary volume is the spacetime volume.} For
the auxiliary metric $h_T=g+2T^\flat\otimes T^\flat$ with $T$ a
$g$-unit timelike field, a $g$-orthonormal frame with $e_0=T$
gives $T^\flat=-e^0$, so $h_T=g+2\,e^0\otimes e^0=\delta$; both
volume densities equal $1$, whence
$\mathrm{vol}_{h_T}=\mathrm{vol}_g$ identically, for either
spacetime. Thus \eqref{eq:degree-volume} reads
$\mathrm{Vol}_{g_1}(M_1^\circ)\,(1+o(1))
=k\,\mathrm{Vol}_{g_2}(M_2^\circ)$.

\emph{(iii) The volumes match.} We argue in the boundaryless
setting $\partial M_i=\emptyset$, where $M_i^\circ=M_i$; the
general case follows by precompact exhaustion
(Remark~\ref{rem:noncompact}). Both embeddings present the same
finite causal set $C$ at common density $\rho$, so, tiling each
$M_i$ by admissible diamonds at the coarse scale $c_*\lambda$ and
summing the F2 counts, $\mathrm{Vol}_{g_1}(M_1)=|C|\rho^{-1}(1+o(1))
=\mathrm{Vol}_{g_2}(M_2)\,(1+o(1))$, with tolerance
$\delta_{c_*\lambda}$ polynomially small. Substituting,
$k=1+o(1)$; since $k$ is a positive integer, $k=1$ for $\rho$
sufficiently large. Hence $\Phi\colon M_1^\circ\to M_2^\circ$
is a single-sheeted covering, i.e., a diffeomorphism.

\emph{The boundary case.}
If $\partial M_i\neq\emptyset$, the deep interior
$M_1^\circ$ is a proper subdomain of $M_1$: the excluded layer
has width $c_*\lambda+12\alpha\ell$, whose macroscopic part
$c_*\lambda$ does not shrink as $\rho\to\infty$
(Remark~\ref{rem:boundary-shrinks}), so $\Phi\colon M_1^\circ\to
M_2$ is not proper onto all of $M_2$ and the covering argument
above does not apply globally. The conclusion is instead the
regional one of Remark~\ref{rem:noncompact}: for each precompact
region one obtains a diffeomorphism onto its image on the deep
interior, with the excluded layer receding only under
exhaustion. The onto-$M_2$ statement proved above is thus the
boundaryless case.
\end{proof}

\begin{remark}[Orientation]
\label{rem:orientation}
The conclusion ``$\Phi$ is a diffeomorphism'' allows
$\Phi$ to be either orientation-preserving or
orientation-reversing relative to fixed orientations on
$M_1$ and $M_2$. Both possibilities are consistent with the
faithful embedding hypotheses: a global spatial reflection
in $d\geq 4$ preserves Lorentzian distances, causal order
(F1), and volumes (F2), so two faithful embeddings related
by such a reflection are equally valid.
The Hauptvermutung therefore concludes that
$M_1$ and $M_2$ are approximately isometric as
unoriented Lorentzian manifolds. If one assumes in
addition that the two faithful embeddings are
orientation-compatible (a natural condition when both
spacetimes carry fixed physical orientations), the
diffeomorphism $\Phi$ is automatically
orientation-preserving: the local correspondence
$\tilde w_k\mapsto\tilde w_k'$ inherits orientation
compatibility, hence $\det(\Sigma_{\mathrm{cross}})>0$ at
each embedded point, hence $J_\Phi>0$ globally by the
constant sign of $J_\Phi$ established above.
\end{remark}

\begin{remark}[No exotic $\R^4$ obstruction]
\label{rem:exotic}
The construction is entirely local: at each $x$, the Karcher
mean uses only nearby point correspondences. No topological
classification of manifolds is invoked, so the proof works
uniformly in all dimensions including
$d=4$~\cite{Donaldson1987,Freedman1982}.
\end{remark}

\begin{remark}[Structure of the invertibility argument]
\label{rem:avoids}
The invertibility of $D\Phi$ rests on three ingredients, of
increasing scope. Volume-faithfulness~(F2) makes the source
covariance $\Sigma_1$ non-degenerate (Lemma~\ref{lem:source-cov});
the longest-chain distance preservation, organized through the
trilateration scaffold, makes the cross-covariance
$\Sigma_{\mathrm{cross}}$ non-degenerate; and the implicit function
theorem then gives the invertibility of $D\Phi$.
The combinatorial content enters only through preserved
timelike proper times: the trilateration locates
each point from its causal distances to a fixed
anchor scaffold, which determines an approximate Lorentz
transformation $\hat\Lambda$ relating the two local point
clouds. No spacelike distance is ever measured directly, and
the global metric comparison is deferred to the rigidity
argument of Section~\ref{sec:conformal}.
\end{remark}

\subsection{Rigidity}\label{sec:conformal}

At this stage $\Phi\colon M_1^\circ\to M_2^\circ$ is a diffeomorphism
(Proposition~\ref{prop:diffeo}). We now show it is an approximate isometry,
$\Phi^*g_2=g_1+E$ with $\|E\|$ small, by computing its differential from the
cross-covariance and using that the proper-time preservation~(F3) fixes the local
frame to an exact Lorentz transformation.

\begin{proposition}[Approximate isometry]
\label{prop:conformal}
The diffeomorphism $\Phi$ satisfies
$\Phi^*g_2=g_1+E$ with
$\|E\|_{g_1}=O\bigl(\varepsilon_\tau+\ell^2/\lambda^2
+\delta_{\mathrm{cov}}\bigr)$.
\end{proposition}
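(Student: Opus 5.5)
The plan is to compute $D\Phi(x)$ pointwise in adapted coordinates (Remark~\ref{rem:adapted-coords}) at $x$ and at $\Phi(x)$, show $D\Phi(x)=\hat\Lambda_x+E_x$ with $\hat\Lambda_x\in O(1,d-1)$ and $\|E_x\|=O(\varepsilon_\tau+\ell^2/\lambda^2+\delta_{\mathrm{cov}})$, and then pull back the metric. In these coordinates $g_1(x)=\eta+O(\ell^2/\lambda^2)$ and $g_2(\Phi(x))=\eta+O(\ell^2/\lambda^2)$. Hence
\[
(\Phi^*g_2)_x=D\Phi^T\,g_2\,D\Phi=(\hat\Lambda+E_x)^T\eta(\hat\Lambda+E_x)+O(\ell^2/\lambda^2)=\eta+O(\|E_x\|)+O(\ell^2/\lambda^2),
\]
using $\hat\Lambda^T\eta\hat\Lambda=\eta$ and $\|\hat\Lambda\|\le C_\Lambda$ from \eqref{eq:proc-boost}. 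Subtracting $g_1(x)=\eta+O(\ell^2/\lambda^2)$ gives the pointwise bound. The $g_1$-norm of $E$ is measured against $h_1=\delta$ in the adapted frame, so the constants remain dimensional.

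To obtain the decomposition of $D\Phi$, I will start from the factorization in Proposition~\ref{prop:basic}(iv):
\[
D\Phi=-[D_yF]^{-1}D_xF=\tfrac{2}{\ell^2}\Sigma_{\mathrm{cross}}\bigl(I+O(\ell^2/\lambda^2)\bigr)\bigl(1+O(\alpha^de^{-\alpha^2})\bigr).
\]
Equation \eqref{eq:slot4-split} gives $\Sigma_{\mathrm{cross}}=\hat\Lambda\Sigma_1+E$ with $\|E\|=O(\varepsilon_\tau)\ell^2$. Lemma~\ref{lem:source-cov} gives $\Sigma_1=c_d\ell^2I+O(\delta_{\mathrm{cov}}\ell^2)$ with $c_d=\tfrac12$. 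Therefore
\[
\tfrac{2}{\ell^2}\Sigma_{\mathrm{cross}}=2c_d\hat\Lambda+O(\varepsilon_\tau+\delta_{\mathrm{cov}})=\hat\Lambda+O(\varepsilon_\tau+\delta_{\mathrm{cov}}).
\]
The normalization $2c_d=1$ is what makes the scale come out exactly, which is why no conformal factor survives. The cutoff correction $\alpha^de^{-\alpha^2}$ is absorbed, as the paper does elsewhere.

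One subtlety is that \eqref{eq:slot4-loc} is stated with displacements measured from $x$ and $\Phi(x)$, whereas the trilateration corollary measures target displacements relative to $q_0=\hat\Lambda\cdot0$. I will check that the shift $\Phi(x)-q_0$ only adds a constant vector $c$ to every $\tilde w_k'$. That shift contributes $c\,\bar\Delta^T$ to $\Sigma_{\mathrm{cross}}$, and $\|\bar\Delta\|=O(\delta_{\mathrm{cov}}\ell)$ by \eqref{eq:source-mean}, so the contribution is of admissible size. I also need the inverse exponential map at $\Phi(x)$ to agree with coordinate differences up to $O(\ell^2/\lambda^2)$ relative error, which holds on the active ball.

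The main obstacle is justifying that the error in \eqref{eq:slot4-loc} is genuinely $O(\varepsilon_\tau\ell)$ in its effect, rather than $O(\varepsilon_\tau\alpha\ell)$ uncontrolled. With $\alpha$ fixed it is harmless for the $\varepsilon_\tau$ term. A second issue is that the weights are applied to points at distance up to $2\alpha\ell$, where the second moment picks up $\alpha$-dependent constants. I expect to handle both by treating $\alpha=16$ as a dimensional constant, with the Gaussian tail beyond $\ell$ absorbed into $C_d$. Finally, uniformity over $x\in M_1^\circ$ follows because every input lemma is uniform, giving $\|E\|_{g_1}=O(\varepsilon_\tau+\ell^2/\lambda^2+\delta_{\mathrm{cov}})$ globally.
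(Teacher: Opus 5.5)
Your proposal is correct and follows the paper's proof essentially step for step. You factor $D\Phi=-[D_yF]^{-1}D_xF\approx\tfrac{2}{\ell^2}\Sigma_{\mathrm{cross}}$, insert $\Sigma_{\mathrm{cross}}=\hat\Lambda\Sigma_1+E_{\mathrm{cr}}$ and $\Sigma_1=\tfrac12\ell^2 I+O(\delta_{\mathrm{cov}}\ell^2)$ so that $2c_d=1$ gives $D\Phi=\hat\Lambda+O(\varepsilon_\tau+\delta_{\mathrm{cov}}+\ell^2/\lambda^2)$, and then pull back using $\hat\Lambda^T\eta\hat\Lambda=\eta$.

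Your recentering check (the constant shift contributing only $c\,\bar\Delta^T$) and your explicit use of $\|\hat\Lambda\|\le C_\Lambda$ in the pullback are refinements the paper leaves implicit. The one place both you and the paper hand-wave is the fixed cutoff error $\alpha^d e^{-\alpha^2}$. It cannot be absorbed into a vanishing $O(\cdot)$, but it in fact cancels exactly against the $\chi'$ annulus term in $D_xF$, since $-\int u\,\nabla w(u)^T\,du=(\int w)\,I$ for the full cutoff weight.
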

\begin{proof}
By Proposition~\ref{prop:basic}(iv),
$D\Phi=(2/\ell^2)\Sigma_{\mathrm{cross}}+O(\ell^2/\lambda^2)$, the
$O(\ell^2/\lambda^2)$ correction coming from the curvature of the exponential map in
$[D_yF]^{-1}$. By Lemma~\ref{lem:cross-cov},
$\Sigma_{\mathrm{cross}}=\hat\Lambda\,\Sigma_1+E_{\mathrm{cr}}$ with
$\hat\Lambda\in O(1,d-1)$ (Lemma~\ref{lem:finite-procrustes}) and
$\|E_{\mathrm{cr}}\|=O(\varepsilon_\tau)\ell^2$, where $\Sigma_1$ is the source
covariance of Lemma~\ref{lem:source-cov}. By the isotropy
estimate~\eqref{eq:sigma-iso}, $\Sigma_1=c_d\ell^2 I+R_1$ with
$\|R_1\|=O(\delta_{\mathrm{cov}})\ell^2$, and the Gaussian reference
moment fixes the dimensional constant exactly to $c_d=\tfrac12$. Substituting,
\begin{equation}\label{eq:DPhi-clean}
   D\Phi=\tfrac{2}{\ell^2}\bigl[\hat\Lambda(\tfrac12\ell^2 I+R_1)+E_{\mathrm{cr}}\bigr]
   +O(\ell^2/\lambda^2)
   =\hat\Lambda+E_\Phi,\qquad \|E_\Phi\|=O(\theta),
\end{equation}
with $\theta:=\varepsilon_\tau+\delta_{\mathrm{cov}}+\ell^2/\lambda^2$.
The normalization $2/\ell^2$ (from the gradient of the Gaussian weight) and the moment
$c_d=\tfrac12$ multiply to exactly $1$, so the Gaussian--Karcher differential is
automatically scale-normalized and no conformal factor survives ($\Omega_0:=2c_d=1$).
Moreover $\hat\Lambda$ is an exact Lorentz transformation: the Procrustes projection
(Lemma~\ref{lem:finite-procrustes}) produces it from anchors whose Lorentzian
separations are preserved by~(F3) at the common density $\rho$, so the scale is pinned
to unity rather than merely up to a dilation.

Since $\hat\Lambda^T\eta\hat\Lambda=\eta$ exactly, the pullback follows directly.
Working in $h_1$- and $h_2$-normal coordinates at $x$ and $\Phi(x)$ (where
$g_1,g_2=\eta+O(\ell^2/\lambda^2)$),
\[
   \Phi^*g_2=(D\Phi)^T\eta\,(D\Phi)+O(\ell^2/\lambda^2)
   =(\hat\Lambda+E_\Phi)^T\eta(\hat\Lambda+E_\Phi)+O(\ell^2/\lambda^2)
   =\eta+O(\|E_\Phi\|)+O(\ell^2/\lambda^2),
\]
that is, $\Phi^*g_2=g_1+E$ with $\|E\|_{g_1}=O(\theta)$.
\end{proof}

\begin{remark}[The scale is pinned by F3, not by a separate volume step]
\label{rem:scale-by-f3}
In the Malament dichotomy, causal order fixes the metric only up to a local conformal
factor $\Omega$, which volume information then pins. Here the embedding preserves
proper times~(F3), not merely causal order, and proper time carries the scale; this is
why the Procrustes map already lies in $O(1,d-1)$ rather than in the conformal group,
equivalently why $\Omega_0=2c_d=1$ exactly in~\eqref{eq:DPhi-clean}. No separate
volume-rigidity step is needed to fix $\Omega$. The volume condition~(F2) still enters,
but only through the isotropy and non-degeneracy of $\Sigma_1$
(Lemma~\ref{lem:source-cov}) and the existence of the scaffold (Lemma~\ref{lem:scaffold}).
\end{remark}

\subsection{Assembling the main theorem}

\begin{theorem}[Approximate isometry of well-conditioned embeddings]
\label{thm:haupt}
Let $C$ be a finite causal set with two $(\rho;c_*,K_d)$-well-conditioned embeddings $f_1\colon C\hookrightarrow(M_1,g_1)$ and $f_2\colon C\hookrightarrow(M_2,g_2)$ into globally hyperbolic $d$-dimensional Lorentzian manifolds with curvature scale $\lambda$, satisfying $\rho\lambda^d\geq c_*^{-2d}(\log\rho V_{\max})^2$.

Then on the open subdomains $M_1^\circ,M_2^\circ$ (Equation~\eqref{eq:M1circ-def}) there exists a smooth diffeomorphism $\Phi\colon M_1^\circ\to M_2^\circ$ satisfying $\Phi^*g_2=g_1+E$ with
\begin{equation}\label{eq:eps-final}
   \|E\|_{g_1}
   = O\bigl(\rho^{-2/(5d)}\lambda^{-2/5}\,\log^{3/2}(\rho V_{\max})\bigr),
\end{equation}
where $V_{\max}:=\max(\mathrm{Vol}_{g_1}(M_1),\mathrm{Vol}_{g_2}(M_2))$.\footnote{The power law $\rho^{-2/(5d)}\lambda^{-2/5}$ is sharp for this argument; the logarithmic exponent $3/2$ is a valid upper bound, a careful accounting of the balance at $\ell_*$ giving $\log^{6/5}$.}
\end{theorem}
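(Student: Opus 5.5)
The proof will assemble results already established and then optimize the smoothing scale. First, apply Proposition~\ref{prop:diffeo}: for $\rho$ sufficiently large (which the hypothesis $\rho\lambda^d\geq c_*^{-2d}(\log\rho V_{\max})^2$ supplies through the scale hierarchy $\rho^{-1/d}\ll\ell\ll\alpha\ell\ll\lambda$), the moving Karcher mean $\Phi$ of Construction~\ref{con:phi} is a smooth diffeomorphism $M_1^\circ\to M_2^\circ$. Its inputs are Lemma~\ref{lem:scaffold}, Lemma~\ref{lem:source-cov}, Corollary~\ref{cor:target-localization}, Lemma~\ref{lem:cross-cov} and Proposition~\ref{prop:local-diffeo}. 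Next, Proposition~\ref{prop:conformal} gives $\Phi^*g_2=g_1+E$ with
\[
   \|E\|_{g_1}=O\bigl(\varepsilon_\tau+\delta_{\mathrm{cov}}+\ell^2/\lambda^2\bigr).
\]
Since $V_M\leq V_{\max}$ for both manifolds, every logarithmic factor may be replaced by $\log\rho V_{\max}$, and the constants are uniform over both embeddings.

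The remaining work is to make the three error terms explicit as functions of $\ell$ and choose $\ell$. By \eqref{eq:eps-tau-scale}, with $\alpha=16$ fixed,
\[
   \varepsilon_\tau=O\bigl(\ell^2/\lambda^2+(\rho\ell^d)^{-1/(2d)}\log^{3/2}\bigr)=O\bigl(\ell^2/\lambda^2+\rho^{-1/(2d)}\ell^{-1/2}\log^{3/2}\bigr),
\]
and by \eqref{eq:delta-cov}, $\delta_{\mathrm{cov}}=(\log/(\rho\ell^d))^{1/(d+2)}$. Balancing the curvature term $\ell^2/\lambda^2$ against the chain term $\rho^{-1/(2d)}\ell^{-1/2}$ gives $\ell^{5/2}=\lambda^2\rho^{-1/(2d)}$, that is $\ell_*=\rho^{-1/(5d)}\lambda^{4/5}$. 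At $\ell_*$ both terms equal
\[
   \ell_*^2/\lambda^2=\rho^{-2/(5d)}\lambda^{-2/5},
\]
with the chain term carrying the $\log^{3/2}$ factor, which yields \eqref{eq:eps-final}. I will also verify at $\ell_*$ that the side conditions used earlier hold for large $\rho$: $\tau_{\min}\leq\varepsilon_\tau\alpha\ell_*$ (this is \eqref{eq:scale-window}), the cover-scale window $s_*\in[\tau_{\min},\ell_*]$ in Lemma~\ref{lem:source-cov}, $2\alpha\ell_*\leq c_*\lambda$, and the bound on $\varepsilon_\tau$ needed for Weyl's inequality in Lemma~\ref{lem:scaffold}.

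I expect the main obstacle to be the $\delta_{\mathrm{cov}}$ term. At $\ell_*$ we have $\rho\ell_*^d=\rho^{4/5}\lambda^{4d/5}$, so
\[
   \delta_{\mathrm{cov}}\sim\rho^{-4/(5(d+2))}\lambda^{-4d/(5(d+2))}.
\]
This must be shown to be $O(\rho^{-2/(5d)}\lambda^{-2/5})$, which reduces to comparing the exponents $4/(d+2)$ and $2/d$, i.e.\ $2d\geq d+2$, true for $d\geq2$. The $\lambda$-exponents need checking in the same normalized form, working in units where the dimensionless combination is $\rho\lambda^d$. I would first rewrite everything in terms of $N:=\rho\lambda^d$ and the ratio $u:=\ell/\lambda$, so that $\|E\|=O(u^2+N^{-1/(2d)}u^{-1/2}\log^{3/2}+(N u^d)^{-1/(d+2)}\log^{1/(d+2)})$. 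I would then check that the third term is dominated at $u_*=N^{-1/(5d)}$ for all $d\geq2$, with equality of exponents at $d=2$ handled by the extra logarithm.
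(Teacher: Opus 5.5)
Your proposal follows the paper's proof essentially verbatim. You assemble Propositions~\ref{prop:diffeo} and~\ref{prop:conformal} to get $\|E\|_{g_1}=O(\varepsilon_\tau+\delta_{\mathrm{cov}}+\ell^2/\lambda^2)$, balance the curvature term against the Bollob\'as--Brightwell term of $\varepsilon_\tau$ at $\ell_*=\rho^{-1/(5d)}\lambda^{4/5}$, and check that $\delta_{\mathrm{cov}}$ is dominated. Your exponent comparison in terms of $N=\rho\lambda^d$ is correct, and your handling of $d=2$ (equal powers $N^{-1/5}$, with $\log^{1/4}\leq\log^{3/2}$) is slightly more careful than the paper's ``subdominant for $d>2$''. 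One caveat: as in the paper, ``$\rho$ sufficiently large'' means the asymptotic regime $\rho\lambda^d\to\infty$, and the stated hypothesis $\rho\lambda^d\geq c_*^{-2d}(\log\rho V_{\max})^2$ does not by itself supply it, contrary to your parenthetical.
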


\begin{proof}[Proof of Theorem~\ref{thm:haupt}]
We assemble the results of Part~I. The moving Karcher mean $\Phi$
(Construction~\ref{con:phi}) is well-defined on $M_1^\circ$
(Proposition~\ref{prop:well-def}), the localization of its active target points
(Corollary~\ref{cor:target-localization}) resting only on the timelike
trilateration and so independent of $\Phi$. By Proposition~\ref{prop:basic} it
is smooth with $\|D\Phi\|=O(1)$, and by Proposition~\ref{prop:local-diffeo} it
is a local diffeomorphism, since
$D\Phi=(2/\ell^2)\Sigma_{\mathrm{cross}}+O(\ell^2/\lambda^2)$ with
$\Sigma_{\mathrm{cross}}$ non-degenerate (Lemma~\ref{lem:cross-cov}). The degree
argument of Proposition~\ref{prop:diffeo} upgrades this to a global
diffeomorphism $\Phi\colon M_1^\circ\to M_2^\circ$. Distance preservation~(F3),
at the common density $\rho$, makes the Procrustes map an exact Lorentz
transformation, so by Proposition~\ref{prop:conformal} $\Phi$ is an approximate
isometry, $\Phi^*g_2=g_1+E$ with
$\|E\|_{g_1}=O(\varepsilon_\tau+\ell^2/\lambda^2+\delta_{\mathrm{cov}})$;
the scale requires no separate volume-rigidity step (Remark~\ref{rem:scale-by-f3}).
Optimizing the smoothing scale $\ell$ balances the curvature term $\ell^2/\lambda^2$
against the Bollob\'as--Brightwell term in $\varepsilon_\tau$ at
$\ell_*=\Theta(\lambda^{4/5}\rho^{-1/(5d)})$, where the covariance term
$\delta_{\mathrm{cov}}=\Theta(\rho^{-4/(5(d+2))}\cdots)$ is subdominant for $d>2$;
this yields~\eqref{eq:eps-final}.
\end{proof}

\begin{remark}[Structure of the bound]
The error~\eqref{eq:eps-final} is of a single origin, the conformal-rigidity term
\[
   O\bigl((\rho\ell^d)^{-\alpha_d}+\ell^2/\lambda^2\bigr),
\]
optimized in the smoothing scale $\ell$ using the Bollob\'as--Brightwell rate
$\alpha_d=1/(2d)$ from~\cite{BollobasBrightwell1992} (Theorems~3 and~9), with optimum
at $\ell_*=\rho^{-1/(5d)}\lambda^{4/5}$. The scale is fixed directly by~(F3), so there
is no separate volume-rigidity contribution (Remark~\ref{rem:scale-by-f3}).
\end{remark}

\section{Part II: A Poisson sprinkling is Well-Conditioned (a.s.)}

\paragraph{Roadmap} We now show a Poisson sprinkling is well-conditioned almost surely in the high-density limit. Order-preservation (F1) is automatic. Volume-faithfulness (F2) follows from Chernoff concentration with a union bound over a mesoscopic net. The longest-chain/proper-time correspondence (F3) is the Bollob\'as-Brightwell theorem transferred to the curved setting by a cone-sandwiching argument. The scaffold condition is not separate: it follows deterministically from volume-faithfulness (Lemma~\ref{lem:scaffold}), hence holds once (F2) does. Each holds with probability $1-(\rho V_M)^{-K'}$; a union bound gives all simultaneously.  

\begin{theorem}[Poisson sprinklings are almost surely well-conditioned]
\label{thm:poisson-wc}
Let $(M,g)$ be a globally hyperbolic $d$-dimensional Lorentzian manifold with finite
$g$-volume $V_M$ and curvature scale $\lambda$, and fix $\rho>0$ and $c_*\in(0,1)$ with
$\rho\lambda^d\geq c_*^{-2d}(\log\rho V_M)^2$. Let $C:=\Pi$ be a Poisson process of
intensity $\rho\,dV_g$, ordered by $J^-_g$. Then for every sufficiently large dimensional
constant $K_d$, the inclusion $f\colon C\hookrightarrow M$ is a
$(\rho;c_*,K_d)$-well-conditioned embedding (Definition~\ref{def:faithful}) with
probability at least
\begin{equation}\label{eq:poisson-wc-prob}
   1-(\rho V_M)^{-K_d'},
\end{equation}
where $K_d'>0$ is a dimensional constant growing with $K_d$.
\end{theorem}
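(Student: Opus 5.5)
The proof checks (F1)--(F3) separately and then takes a union bound. Each condition will be shown to fail with probability at most $(\rho V_M)^{-K''}$, with $K''$ growing with $K_d$.

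\textbf{(F1)} is deterministic. The order on $C$ is defined as the restriction of $J^-_g$, so $f(x)\in J^-_g(f(y))$ iff $x\preceq y$ by construction. Local finiteness holds almost surely, because causal diamonds in a globally hyperbolic $M$ are compact and therefore have finite volume. Hence they contain finitely many Poisson points a.s.

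\textbf{(F2)} comes from Chernoff concentration plus a net argument. For a fixed admissible diamond $D$, the count $|\Pi\cap D|$ is Poisson with mean $\mu=\rho\Vol_g(D)$. The Poisson Chernoff bound gives
\[
\Pr\bigl(||\Pi\cap D|-\mu|>\delta\mu\bigr)\leq 2\exp(-c\,\delta^2\mu)
\]
for $\delta\leq 1$. With $\delta=\delta_D$ the exponent is $cK_d^2\log(\rho V_M)$. The tolerance stays below $1$ because $\mu\geq\rho\tau_{\min}^d\gtrsim(\log\rho V_M)^2$, so $\delta_D\lesssim K_d c_*^{d/2}(\log\rho V_M)^{-1/2}$, which is small.

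To pass from fixed diamonds to all admissible ones, diamonds are parametrized by their tip pairs $(p,q)$. I would take a finite net $\mathcal N$ of tip pairs at $h$-mesh $\eta=\delta_{D}\tau_{\min}/C$. Any admissible $D$ is then sandwiched as $D^-\subset D\subset D^+$ between net diamonds with $\Vol(D^+\setminus D^-)\leq C\eta\,\tau^{d-1}\leq\tfrac12\delta_D\Vol(D)$. This uses the normal-coordinate comparison $g=\eta+O(\tau^2/\lambda^2)$. Concentration on $D^\pm$ at tolerance $\tfrac12\delta_D$ then transfers to $D$ at tolerance $\delta_D$.

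The net has cardinality at most $(V_M/\eta^d)^2$, which is polynomial in $\rho V_M$. Its log is therefore $O(\log\rho V_M)$ and is beaten by $cK_d^2\log\rho V_M$ once $K_d$ is large.

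\textbf{(F3)} is the main obstacle. I would first quote Bollob\'as--Brightwell for Minkowski intervals: in a flat interval of volume $V$ the longest chain satisfies
\[
\bigl|\ell/(m_d\rho V)^{1/d}-1\bigr|\leq C(\rho V)^{-1/(2d)}\log^{3/2}(\rho V)
\]
except with probability superpolynomially small in $\rho V$. Taking the log-power large enough makes the failure rate $(\rho V_M)^{-K}$.

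To transfer this to the curved setting, I would use cone sandwiching. In adapted normal coordinates, the curved diamond $J^+(p)\cap J^-(q)$ of height $\tau\leq c_*\lambda$ contains a flat diamond of height $\tau(1-C\tau^2/\lambda^2)$ and is contained in one of height $\tau(1+C\tau^2/\lambda^2)$. Chain length is monotone under inclusion of orders restricted to these regions. This monotonicity is delicate: the sandwich must be applied to the causal relations themselves, i.e. the flat cones of slightly narrower or wider aperture nest inside or outside the curved light cones. Since the Poisson process restricted to each flat region is a flat sprinkling, this yields the $C_d\tau^3/\lambda^2$ term.

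Uniformity over all pairs $x\preceq y$ in $C$ then follows by a union bound. I would condition on $|\Pi|\leq 2\rho V_M$, which holds with overwhelming probability, giving at most $(2\rho V_M)^2$ pairs. Alternatively, one can apply the bound to net diamonds and sandwich again.

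Converting the local exponent $(\rho\tau^d)^{-1/(2d)}$ to the stated $(\rho V_M)^{-1/(2d)}$ form requires care. I would absorb the discrepancy into $C_d$ using the admissible range and $\tau\geq\tau_{\min}$, since the error statement in (F3) is only required up to dimensional constants. This is the step I expect to require the most bookkeeping.

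Summing the three failure probabilities gives the bound $1-(\rho V_M)^{-K_d'}$.
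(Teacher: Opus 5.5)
Your architecture is the paper's. (F1) holds by construction. (F2) follows from the Poisson Chernoff bound at $t=\delta_D$, which makes the exponent $cK_d^2\log\rho V_M$ uniform in $D$, together with a polynomial-size net. (F3) comes from Bollob\'as--Brightwell, transferred through the cone sandwich and the monotonicity of the longest chain, followed by a union bound over pairs. Your (F2) transfer squeezes $D$ between an inner and an outer net diamond, so the count is controlled monotonically. This is a tidy alternative to the paper's Lipschitz-in-the-tips volume estimate plus a Chernoff bound on the symmetric difference, and it is fine, provided the mesh is built with $\delta_{c_*\lambda}$, the smallest tolerance, so that it is uniform.

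\emph{Gap in (F3): the local-to-global step.} The step you deferred as bookkeeping at the end of (F3) cannot be closed by enlarging $C_d$. Bollob\'as--Brightwell control the relative error through the \emph{local} count $n=\rho\,\Vol_g(D)\asymp\rho\tau^d$. But (F3) as written demands relative error $C_d\bigl(\tau^2/\lambda^2+(\rho V_M)^{-1/(2d)}\log^{3/2}(\rho V_M)\bigr)$. Near $\tau_{\min}$ the discrepancy $(V_M/\tau^d)^{1/(2d)}\asymp(\rho V_M/\log^2\rho V_M)^{1/(2d)}$ is a positive power of $\rho V_M$.

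Worse, the $V_M$-form is false at that scale. Since $\ell_C(x,y)$ is an integer, $\ell_C/(m_d\rho)^{1/d}$ lies on a lattice whose spacing relative to $\tau\approx\tau_{\min}$ is $\asymp c_*(\log\rho V_M)^{-2/d}$. That is far above a tolerance which is polynomially small in $\rho$. Proper times from a sprinkled $x$ to the other sprinkled points are continuously distributed, so with overwhelming probability some pair with $\tau_g$ of order $\tau_{\min}$ violates (F3).

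\emph{The probability bookkeeping fails at the same place.} The tail $4u^2e^{-u^2}$ in \eqref{eq:bb-conc} holds only for $u<n^{1/(2d)}/\log\log n$. You need a per-pair failure probability $(\rho V_M)^{-K}$ to union-bound over $\gtrsim\rho V_M$ pairs. That requires $u\approx\sqrt{K\log\rho V_M}$, hence $n\gtrsim(\log\rho V_M)^d$ up to $\log\log$ factors. At the bottom of the admissible range $n\asymp c_*^{-d}\log^2(\rho V_M)$. Because $u$ is capped at $n^{1/(2d)}$, the certified failure probability is never below about $e^{-n^{1/d}}=\exp(-c\log^{2/d}\rho V_M)$, whatever log-power you choose.

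The paper's proof does not supply this step either. It quotes Proposition~\ref{prop:longest-chain} as giving per-pair failure $(\rho V_M)^{-K'+o(1)}$, whereas that proposition, like Corollary~\ref{cor:dist-pres}, is stated with the local $\rho V$.

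\emph{What can be proved.} A localized (F3) is provable, and it is all that Part~I uses. For pairs with $\rho\tau_g^d\geq C(\log\rho V_M)^d(\log\log\rho V_M)^{2d}$, taking $u=\sqrt{K\log\rho V_M}$ gives relative error $C_d\bigl(\tau_g^2/\lambda^2+(\rho\tau_g^d)^{-1/(2d)}\log^{3/2}(\rho V_M)\bigr)$ with failure probability $(\rho V_M)^{-K+o(1)}$. The union over random tip pairs should then go through the Mecke formula, $\mathbb{E}\,\#\{\text{bad pairs}\}\leq(\rho V_M)^2\sup_{x,y}\Pr(\text{bad for }\Pi\cup\{x,y\})$, or through your net-and-sandwich alternative. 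Conditioning on $|\Pi|\leq 2\rho V_M$ does not suffice: it counts pairs but does not bound their failure probability.

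Part~I invokes (F3) only at the anchor scale $\alpha\ell_*$, where $\rho(\alpha\ell_*)^d\asymp(\rho\lambda^d)^{4/5}$ is polynomially large. So the missing idea is to restrict and localize (F3), not to absorb the discrepancy into constants.
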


\begin{proof}
We verify the three conditions (F1)--(F3) and take a union bound. Order-preservation~(F1)
and scale-dependent density~(F2) hold with probability $\geq 1-(\rho V_M)^{-K_d'}$ by
Lemma~\ref{lem:poisson-faithful}. Distance preservation~(F3) requires the single-pair estimate of
Proposition~\ref{prop:longest-chain} (the Bollob\'as--Brightwell longest-chain
estimate transferred to $(M,g)$ there by the curvature-sandwich and
bounded-differences argument) to hold simultaneously over the $O((\rho V_M)^2)$
admissible pairs of the mesoscopic net used for~(F2). Proposition~\ref{prop:longest-chain}
gives each pair a failure probability $\leq(\rho V_M)^{-K'+o(1)}$ with $K'$ free; choosing
$K'>2+K_d'$ and taking a union bound over the pairs leaves total failure
$\leq(\rho V_M)^{-K_d'}$, so (F3) holds with probability $\geq 1-(\rho V_M)^{-K_d'}$.
The anchor scaffold is not a separate condition: it holds deterministically given (F2), by
Lemma~\ref{lem:scaffold}, so no separate probabilistic estimate is required. Taking the
intersection of the two high-probability events and bounding the union of their complements
(each failure event has probability at most $(\rho V_M)^{-K_d'}$, the factor $2$ absorbed into
a slightly smaller $K_d'$), all of (F1)--(F3) hold simultaneously with probability at
least~\eqref{eq:poisson-wc-prob}.
\end{proof}
We proceed by showing that Poisson embeddings are faithful embeddings, that is, they follow $(F1)$--$(F2)$, and then that they obey distance preservation $(F3)$; the anchor scaffold is then automatic by Lemma~\ref{lem:scaffold}.

\begin{lemma}[Poisson sprinklings are faithful embeddings]
\label{lem:poisson-faithful}
Let $(M,g)$ be a globally hyperbolic Lorentzian manifold of dimension $d$ with finite $g$-volume $V_M$ and curvature scale $\lambda$. Fix $\rho>0$ and a dimensional constant $c_*\in(0,1)$ such that $\rho\lambda^d\geq c_*^{-2d}(\log\rho V_M)^2$. Let $\Pi\subset M$ be a Poisson process of intensity $\rho\cdot dV_g$, and let $C:=\Pi$ with the partial order induced by $J^-_g$. Then for any constant $K_d>0$ sufficiently large, the inclusion $f:C\hookrightarrow M$ is a $(\rho;c_*,K_d)$-faithful embedding with probability at least
\begin{equation}\label{eq:poisson-prob}
   1-(\rho V_M)^{-K_d'},
\end{equation}
where $K_d'>0$ is a dimensional constant growing with $K_d$.
\end{lemma}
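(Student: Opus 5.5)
The plan is to verify (F1) deterministically and to obtain (F2) from Poisson concentration plus a union bound over a finite net of diamonds, with a sandwiching step that transfers the bound from the net to every admissible diamond.

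\emph{Step 1: (F1).} Condition (F1) is immediate. The partial order on $C=\Pi$ is by definition the restriction of $J^-_g$, so $f(x)\in J^-_g(f(y))$ iff $x\preceq y$. Global hyperbolicity gives compact causal diamonds of finite volume, so intervals contain a.s. finitely many points and $C$ is a causal set. Antisymmetry holds because $(M,g)$ is causal.

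\emph{Step 2: a single diamond.} For a fixed admissible diamond $D$, $N:=|\Pi\cap D|$ is Poisson with mean $\mu=\rho\mathrm{Vol}_g(D)$. The Poisson Chernoff bound gives
\[
\Pr\bigl(|N-\mu|>\delta\mu\bigr)\le 2\exp\bigl(-\mu\delta^2/3\bigr)\quad\text{for }\delta\le1 .
\]
Take $\delta=\tfrac12\delta_D=\tfrac12K_d\sqrt{\log(\rho V_M)/\mu}$, which gives $\mu\delta^2/3=K_d^2\log(\rho V_M)/12$. This is valid since $\delta_D\le1$ for admissible diamonds. Indeed $\mu\gtrsim\rho\tau_{\min}^d=c_*^{-d}(\log\rho V_M)^2$, so $\delta_D\lesssim K_dc_*^{d/2}(\log\rho V_M)^{-1/2}$. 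The failure probability is therefore $\le 2(\rho V_M)^{-K_d^2/12}$.

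\emph{Step 3: discretization and union bound.} Admissible diamonds form a $2d$-parameter family (tips $p,q$). I will build a net $\mathcal N$ of tip pairs with spacing $\eta:=\tfrac{1}{C}\,\delta_D\tau/d$ relative to the height $\tau$, taken at the finest scale $\tau_{\min}$ for a uniform bound. For each admissible $D$ this yields net diamonds $D^-\subset D\subset D^+$. To see this, move the tips inward or outward by $\eta$ along the local timelike direction in adapted normal coordinates (Remark~\ref{rem:adapted-coords}). Since $g=\eta+O(\tau^2/\lambda^2)$ and the diamonds stay at distance $\ge c_*\lambda$ from $\partial M$, this controls the volumes:
\[
\mathrm{Vol}(D^\pm)=\mathrm{Vol}(D)\bigl(1\pm O(\eta/\tau)\bigr)=\mathrm{Vol}(D)\bigl(1\pm\tfrac14\delta_D\bigr).
\]
The net cardinality is at most $(V_M/\eta^d)^2\cdot O(\log)$ over dyadic height scales. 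Because $\eta\ge\rho^{-1/d}\cdot(\text{poly}\log)^{-1}$, this is $\le(\rho V_M)^{2+o(1)}$. A union bound then makes Step 2 hold, with $\tfrac12\delta$, simultaneously for all net diamonds, failing with probability $\le(\rho V_M)^{-K_d^2/12+2+o(1)}=:(\rho V_M)^{-K_d'}$. Here $K_d'>0$ once $K_d$ is large, and it grows with $K_d$.

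\emph{Step 4: the sandwich.} Monotonicity gives
\[
|\Pi\cap D^-|\le|\Pi\cap D|\le|\Pi\cap D^+| .
\]
Combined with the net bounds and the volume comparison, and noting that $\delta_{D^\pm}$ and $\delta_D$ agree up to a factor $1+o(1)$, this gives the deviation bound
\[
\bigl||\Pi\cap D|-\rho\mathrm{Vol}(D)\bigr|\le(\tfrac12+\tfrac14+o(1))\,\delta_D\,\rho\mathrm{Vol}(D)\le\delta_D\,\rho\mathrm{Vol}(D),
\]
which is (F2).

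\emph{Main obstacle.} I expect Step 3 to be the hard part: producing the inner and outer approximants $D^\pm$ uniformly in curved geometry. It requires quantitative continuity of $\mathrm{Vol}_g(J^+(p)\cap J^-(q))$ in $(p,q)$ with relative modulus $O(\eta/\tau)$. I will get this from the cone-comparison between $g$ and flat $\eta$ with apertures perturbed by $O(\tau^2/\lambda^2)$, the same sandwiching used in Proposition~\ref{prop:longest-chain}. The scale-separation hypothesis keeps that curvature error below $\delta_D$.
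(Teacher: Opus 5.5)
Your overall architecture matches the paper's: deterministic (F1), a per-diamond Poisson Chernoff bound with $\delta_D^2\mu_D\propto K_d^2\log(\rho V_M)$, a finite net, a union bound, and a transfer to all diamonds. Your inner/outer approximants $D^-\subset D\subset D^+$ with monotone counts are a cleaner transfer device than the paper's. The paper bounds the Poisson count in the symmetric difference between $D$ and its nearest net diamond ``by the same Chernoff bound'', even though that region varies over an uncountable family.

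The gap is in Step~3, in the choice of spacing. Moving a tip by $\eta$ changes the count by $\asymp\rho\tau^{d-1}\eta$. The F2 tolerance is $\delta_D\rho\Vol_g(D)\asymp K_d\sqrt{\rho\tau^d\log(\rho V_M)}$. So the sandwich needs $\eta\lesssim\tau\delta_\tau\asymp\sqrt{\log(\rho V_M)/\rho}\;\tau^{1-d/2}$. For $d\geq3$ this is \emph{decreasing} in $\tau$, so the binding scale is the coarse end $\tau=c_*\lambda$, not $\tau_{\min}$. Evaluating the spacing at $\tau_{\min}$ ``for a uniform bound'' therefore gets the monotonicity backwards, and both readings of your Step~3 fail.

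\emph{Frozen spacing.} If $\eta$ is fixed at its $\tau_{\min}$ value, then a diamond of height $\tau$ has relative volume error
\[
\frac{\eta}{\tau}\asymp\Bigl(\frac{\tau}{\tau_{\min}}\Bigr)^{d/2-1}\delta_\tau .
\]
At $\tau=c_*\lambda$ this exceeds $\delta_D$ by a positive power of $\rho\lambda^d/\log^2(\rho V_M)$. So $\Vol(D^\pm)=\Vol(D)(1\pm\tfrac14\delta_D)$ fails there.

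\emph{Band-by-band spacing.} If you honor $\eta=\delta_D\tau/(Cd)$ in each band, as your formula says, then
\[
\eta(c_*\lambda)\asymp\rho^{-1/d}\sqrt{\log(\rho V_M)}\,\bigl(\rho(c_*\lambda)^d\bigr)^{-(d-2)/(2d)} .
\]
This is far below $\rho^{-1/d}$, so your bound $\eta\geq\rho^{-1/d}(\mathrm{poly}\log)^{-1}$ is false. The top band alone then has $\asymp(\rho V_M)^2(\rho(c_*\lambda)^d)^{d-2}$ tip pairs up to constants and logarithms, not $(\rho V_M)^{2+o(1)}$.

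This is exactly what the paper's Appendix~\ref{app:poisson-faithful-details} corrects. The main-text $\tau_{\min}$-net with $N_*\lesssim(\rho V_M)^2$ has the same defect. The appendix nets each dyadic band $[\sigma/2,\sigma]$ separately with spacing $c\,\sigma\delta_\sigma$. It accepts a total cardinality that is a fixed power $(\rho V_M)^{O(d)}$ times $O(\log\rho V_M)$ bands, and raises the Chernoff exponent to beat it.

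Your argument goes through with the same repair. Any band that contains a diamond in $M$ has $\rho\sigma^d\lesssim\rho V_M$, so the whole net has at most $(\rho V_M)^{d+o(1)}$ pairs. Your union bound then fails with probability $\lesssim(\rho V_M)^{d-K_d^2/12+o(1)}$. Hence $K_d'=K_d^2/12-d-o(1)$ rather than $K_d^2/12-2-o(1)$, which is still positive and increasing once $K_d$ is large.

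One smaller point: the approximants $D^\pm$ of a diamond at the edge of the admissible family need not be admissible. This happens for heights near $\tau_{\min}$ or $c_*\lambda$, or clearance near $c_*\lambda$. So place the net over a slightly enlarged family, e.g.\ heights in $[\tfrac12\tau_{\min},2c_*\lambda]$ with clearance $\geq\tfrac12c_*\lambda$. The Chernoff step only needs $\mu$ large and $\delta\leq1$, not admissibility.
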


\begin{proof}
\emph{(F1) Order-preservation.} By construction: the induced partial order on $\Pi$ is the restriction of $J^-_g$, so $x\preceq y\iff f(x)\in J^-_g(f(y))$ holds exactly.

\emph{(F2) Scale-dependent uniform density.} The admissible range is $I:=[\tau_{\min},c_*\lambda]$ with $\tau_{\min}:=c_*^{-1}\rho^{-1/d}(\log\rho V_M)^{2/d}$. For each admissible diamond $D$, the count $N_D:=|\Pi\cap D|$ is Poisson with mean $\mu_D:=\rho\,\mathrm{Vol}_g(D)$. By the Chernoff bound, for any $t\in(0,1)$,
\[
   \Pr(|N_D-\mu_D|\geq t\mu_D)\leq 2\exp(-c\,t^2\mu_D).
\]
\emph{Setting $t=\delta_D$ with $\delta_D^2\mu_D=K_d^2\log\rho V_M$ makes the exponent independent of $D$:}
\[
   \delta_D^2\mu_D=K_d^2\log(\rho V_M)
   \quad\implies\quad
   \delta_D=K_d\sqrt{\frac{\log\rho V_M}{\mu_D}} =K_d\sqrt{\frac{\log\rho V_M}{\rho\,\mathrm{Vol}_g(D)}}.
\]
This is exactly the scale-dependent tolerance in~\eqref{eq:F2-tolerance}. The per-diamond failure probability is thus $2\exp(-c K_d^2\log\rho V_M)=2(\rho V_M)^{-cK_d^2}$, uniformly over all admissible diamonds.

\emph{Union bound.}
The admissible parameter space is $\{(p,q)\in M\times M:\tau_g(p,q)\in I, d_h(D,\partial M)\geq c_*\lambda\}\subset M\times M$, parametrized by two endpoints in the $d$-dimensional manifold $M$. Covering $M$ by $h$-balls of radius $\tau_{\min}$ requires at most $V_M/\tau_{\min}^d\cdot(1+o(1))$ balls (standard volume estimate on Riemannian manifolds at sub-curvature scales). Hence the product space $M\times M$ has covering number $\leq(V_M/\tau_{\min}^d)^2(1+o(1))$ on a $\tau_{\min}$-net. With $\tau_{\min}=c_*^{-1}\rho^{-1/d}(\log\rho V_M)^{2/d}$,
\[
   \frac{V_M}{\tau_{\min}^d}
   =\frac{V_M}{c_*^{-d}\rho^{-1}(\log\rho V_M)^2}
   =\frac{c_*^d\,\rho V_M}{(\log\rho V_M)^2},
\]
so $N_*\leq C(\rho V_M)^2/(\log\rho V_M)^4$. The dimension $d$ cancels because the per-endpoint count $V_M/\tau_{\min}^d=O(\rho V_M)$ exactly absorbs the $d$-dependent volume of small balls. Hence $N_*\leq C(\rho V_M)^2$ uniformly in $d$, as claimed.

F2 holds at all net points simultaneously with probability
\[
   \geq 1-2N_*(\rho V_M)^{-cK_d^2} \geq 1-(\rho V_M)^{2-cK_d^2}.
\]
Given any target probability exponent $K_d'>0$, choose $K_d\geq\sqrt{(K_d'+2)/c}$ so that $cK_d^2-2\geq K_d'$, yielding the failure probability bound $(\rho V_M)^{-K_d'}$ stated in the lemma. The constants $K_d$ and $K_d'$ thus play distinct roles: $K_d$ is the multiplicative constant in the tolerance $\delta_D=K_d\sqrt{\log\rho V_M/(\rho\,\mathrm{Vol}_g(D))}$, while $K_d'$ is the exponent governing the failure probability $(\rho V_M)^{-K_d'}$. We may freely enlarge both, with $K_d'$ depending quadratically on $K_d$.

\emph{Continuity between net points and rate-scaling check.} The above bound holds at all net points; extending to all admissible diamonds requires a continuity argument controlling the variation of $\mathrm{Vol}_g(D)$ under endpoint perturbation by $\tau_{\min}$. The key estimate is $|\partial_p\mathrm{Vol}_g(D)|\leq C\tau^{d-1}$ via the divergence theorem at sub-curvature scales. Combined with the fact that the F2 tolerance $\delta_D\cdot\mathrm{Vol}_g(D) \sim K_d\sqrt{\log\rho V_M\cdot\mathrm{Vol}_g(D)/\rho}$ strictly dominates the perturbation, F2 transfers to all admissible diamonds with constants absorbed into $K_d$. At the relevant scales (from $\tau_{\min}$ up through the smoothing scale $\ell$), the tolerance $\delta_D$ matches the polynomial convergence rate of the main theorem. Full calculation in Appendix~\ref{app:poisson-faithful-details}.
\end{proof}

\paragraph{Poisson longest chain approximates proper time.}

\begin{proposition}[Mesoscopic longest chain of Poisson sprinkling]
\label{prop:longest-chain}
Let $x\prec y$ in $C$ with $f_i(x),f_i(y)$ contained in a mesoscopic diamond of diameter $L$ and Minkowski volume $V\sim L^d$ in $(M_i,g_i)$, where $i\in\{1,2\}$. Fix any tail exponent $K'>0$. Then with probability $\geq 1-(\rho V)^{-K'+o(1)}$:
\begin{equation}\label{eq:chain-time}
   \biggl|\frac{\ell_C(x,y)}{(m_d\rho)^{1/d}}
   - \tau_i(f_i(x),f_i(y))\biggr|
   \;\leq\;
   C_d\Bigl(\frac{L^2}{\lambda^2} +\frac{\log^{3/2}(\rho V)}{(\rho V)^{1/(2d)}}\Bigr) \cdot\tau_i,
\end{equation}
where $m_d$ is the Myrheim--Meyer constant for the Minkowski causal order, $\tau_i$ is the proper time in $(M_i,g_i)$, and the convergence exponent $1/(2d)$ is that of Theorems~3 and~9 of Bollob\'as--Brightwell~\cite{BollobasBrightwell1992}.
\end{proposition}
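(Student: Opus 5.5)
The plan is to sandwich the curved causal order between two flat ones, apply the Minkowski longest-chain theorem in each, and recover the stated tail from a concentration estimate. The approach has three parts: a deterministic cone-sandwich that reduces the curved problem to Minkowski space with slightly rescaled light cones, the Bollob\'as--Brightwell estimate applied to the two comparison problems, and a concentration step that turns a mean estimate into a polynomial tail. Throughout I condition on $f_i(x),f_i(y)$ being sprinkled points. By the Slivnyak--Mecke formula, the remaining process is again Poisson of intensity $\rho\,dV_{g_i}$, so the longest chain $\ell_C(x,y)$ is a functional of the points inside the interval $J^+(f_i(x))\cap J^-(f_i(y))$.

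\emph{Step 1 (curvature sandwich).} Work in the adapted $h_i$-normal coordinates of Remark~\ref{rem:adapted-coords} at the midpoint of the diamond. Since $g_i=\eta+O(L^2/\lambda^2)$ on a region of diameter $L\ll\lambda$, there are flat metrics $\eta_\pm$ with light cones opened or narrowed by a factor $1\pm C_dL^2/\lambda^2$ such that $J^\pm_{\eta_-}\subseteq J^\pm_{g_i}\subseteq J^\pm_{\eta_+}$ within the diamond. This yields three comparisons. The chains of the $g_i$-order are sandwiched between chains of the $\eta_\pm$-orders on the same point set. The volume density satisfies $dV_{g_i}=(1+O(L^2/\lambda^2))\,d^dx$. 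The proper times obey $\tau_{\eta_\pm}=\tau_i(1+O(L^2/\lambda^2))$. Coupling Poisson processes by thinning or superposition at intensities $\rho(1\pm CL^2/\lambda^2)$, I get $\ell_-\le\ell_C(x,y)\le\ell_+$, where $\ell_\pm$ are longest chains of flat Poisson sprinklings. Because of the $\rho^{1/d}$ scaling, the intensity shifts contribute only relative $O(L^2/\lambda^2)$ to $\ell_\pm$. This produces the curvature term in~\eqref{eq:chain-time}.

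\emph{Step 2 (flat estimate).} In Minkowski space with $N:=\rho\,\mathrm{Vol}(I)\asymp\rho V$, Bollob\'as--Brightwell (Theorems~3 and~9) give $\mathbb{E}\,\ell=(m_dN)^{1/d}\bigl(1+O(N^{-1/(2d)}\log^{\cdot} N)\bigr)$. Since $\mathrm{Vol}(I)=\kappa\tau^d$, we have $(m_dN)^{1/d}=(m_d\rho)^{1/d}\tau\,(1+\dots)$, with the same constant convention as the Myrheim--Meyer $m_d$. The rest of this step is bookkeeping.

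\emph{Step 3 (concentration).} This is the step I expect to be the main obstacle, because it must give failure probability $(\rho V)^{-K'+o(1)}$ for an arbitrary $K'$ while only paying a $\log^{3/2}$ factor. I would first try Talagrand's inequality for certifiable functions, or a bounded-differences/Azuma argument on the flat problem. The longest chain is $1$-Lipschitz and certified by its own $\ell$ points, which gives $\Pr(|\ell-\mathrm{med}|\ge t)\le 4e^{-t^2/(4\mathbb{E}\ell)}$. Taking $t=\sqrt{K'\,\mathbb{E}\ell\,\log(\rho V)}$ gives the desired tail with relative deviation $O(N^{-1/(2d)}\sqrt{\log N})$. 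Replacing the median by the mean costs only $O(\sqrt{\mathbb{E}\ell})$. Combining the upper deviation of $\ell_+$ with the lower deviation of $\ell_-$ and collecting the logarithms into $\log^{3/2}(\rho V)$ gives~\eqref{eq:chain-time}. The thinning coupling in Step 1 must hold on the same probability space; the $o(1)$ in the exponent absorbs the union over the two comparison events and the Poisson fluctuation of $N$ itself, which is controlled by the Chernoff bound of Lemma~\ref{lem:poisson-faithful}.
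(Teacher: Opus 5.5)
Your outline matches the paper's: the cone sandwich in normal coordinates, deterministic monotonicity of the longest chain under inclusion of orders, a flat estimate, and synthesis. Your thinning/superposition coupling is more careful than the paper, which simply asserts that the same point set is a density-$\rho$ flat sprinkling. The gap is in Step~2, which you treat as bookkeeping. Theorems~3 and~9 of Bollob\'as--Brightwell are proved for the coordinatewise order on $[0,1]^d$. The Minkowski causal order on a diamond coincides with that order only for $d=2$, via null coordinates. For $d\ge 3$ the round light cone is not linearly equivalent to an orthant, so those theorems say nothing directly about $\mathbb{E}\,\ell$ in a Minkowski diamond.

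Yet \eqref{eq:chain-time} needs a \emph{rate} for the mean, $\mathbb{E}\,\ell=(m_d\rho)^{1/d}\tau\bigl(1+O(N^{-1/(2d)}\log^{3/2}N)\bigr)$. The Brightwell--Gregory input gives only the limit, i.e.\ relative error $o(1)$. Your Talagrand step cannot fill this, since it controls fluctuations about the median, not where the median sits. With Talagrand the fluctuation term is only $O(N^{-1/(2d)}\sqrt{\log N})$. So the entire $\log^{3/2}$ in the target bound must come from the mean rate you have not established, and ``collecting the logarithms'' presupposes it.

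The paper addresses exactly this point by transferring Bollob\'as--Brightwell's \emph{proof} rather than their statements. Time slabs give independent strips, causality confines a chain inside a slab of thickness $\delta$ to a spatial ball of radius $\delta$, and the paper asserts that both the concentration and the mean-rate estimates carry over with $d$-dependent constants. You need either that transfer, specifically of the mean-rate argument, or a direct proof. Stacking diamonds along the time axis gives superadditivity and hence the one-sided bound $\mathbb{E}\,\ell\le(m_d\rho)^{1/d}\tau$ for free. The lower bound with a rate is the hard direction. A plausible route is to coarse-grain a long chain through small cells, apply concentration in each cell, take a union bound over coarse paths, and sum the pieces using the reverse triangle inequality for proper time.

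Once that input is supplied, your Step~3 improves on the paper's route. Talagrand's inequality for $1$-Lipschitz, self-certifying functionals applies to the longest chain of any partial order on i.i.d.\ points, after conditioning on the count or via a Poisson version of the convex-distance inequality. So the concentration needs no transfer from the coordinate model, and it is sharper than Bollob\'as--Brightwell's Theorem~3 at the same tail. You have the difficulty located backwards: concentration is the easy part, and the Minkowski mean rate is the real obstacle.
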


\begin{proof}
The proof reduces the curved-space problem to flat-space combinatorics via a geometric sandwiching argument.

\textit{Step~1: Local flat approximation.}
Choose a base point midway between $p:=f_i(x)$ and $q:=f_i(y)$ and introduce Riemann normal coordinates. In these coordinates: $g_{\mu\nu}=\eta_{\mu\nu}-\frac{1}{3}R_{\mu\alpha\nu\beta}z^\alpha z^\beta+O(|z|^3/\lambda^3)$. Since the entire diamond has coordinate radius $\leq L$, the metric perturbation $E:=g-\eta$ satisfies, for all vectors $v=(v^0,\vec{v})$:
\begin{equation}\label{eq:metric-perturb}
   |E(v,v)| \;\leq\; \varepsilon\bigl((v^0)^2+|\vec{v}|^2\bigr),
   \qquad \varepsilon:=C_d'L^2/\lambda^2.
\end{equation}
Equivalently: $-(1{+}\varepsilon)(v^0)^2+(1{-}\varepsilon)|\vec{v}|^2 \leq g(v,v) \leq -(1{-}\varepsilon)(v^0)^2+(1{+}\varepsilon)|\vec{v}|^2$.

\textit{Step~2: Sandwiching the causal diamond.} Define two auxiliary Minkowski metrics with modified light-cone angles:
\[
   \eta^+(v,v) := -\frac{1{+}\varepsilon}{1{-}\varepsilon}(v^0)^2+|\vec{v}|^2,
   \qquad
   \eta^-(v,v) := -\frac{1{-}\varepsilon}{1{+}\varepsilon}(v^0)^2+|\vec{v}|^2.
\]
If $v$ is $g$-causal ($g(v,v)\leq 0$), the lower bound gives $(1{-}\varepsilon)|\vec{v}|^2\leq(1{+}\varepsilon)(v^0)^2$, hence $\eta^+(v,v)=-\frac{1{+}\varepsilon}{1{-}\varepsilon}(v^0)^2 +|\vec{v}|^2\leq 0$. Conversely, if $v$ is $\eta^-$-causal ($\eta^-(v,v)\leq 0$, i.e., $|\vec{v}|^2\leq\frac{1{-}\varepsilon}{1{+}\varepsilon}(v^0)^2$), then $g(v,v)\leq -(1{-}\varepsilon)(v^0)^2 +(1{+}\varepsilon)\cdot\frac{1{-}\varepsilon}{1{+}\varepsilon}(v^0)^2=0$. Therefore:
\[
   D_{\eta^-}(p,q) \;\subseteq\; D_g(p,q) \;\subseteq\; D_{\eta^+}(p,q).
\]
Since $\eta^\pm$ are related to $\eta$ by a rescaling of the time coordinate $t'=\sqrt{(1{\pm}\varepsilon)/(1{\mp}\varepsilon)}\,t =(1\pm\varepsilon+O(\varepsilon^2))t$, the proper times satisfy $\tau_\pm=\tau_g(1\pm O(L^2/\lambda^2))$.

\textit{Step~3: Monotonicity of the longest chain.} The \emph{same} Poisson point set $\{p_k\}$ is used for all three metrics; only the causal order changes. The nesting $D^-\subseteq D_g\subseteq D^+$ implies that every chain valid in $(\{p_k\},\preceq_{\eta^-})$ is also valid in $(\{p_k\},\preceq_g)$, and every chain valid in $(\{p_k\},\preceq_g)$ is valid in $(\{p_k\},\preceq_{\eta^+})$. Hence
\begin{equation}\label{eq:chain-squeeze}
   \ell_{C_{\eta^-}}(x,y) \;\leq\; \ell_{C_g}(x,y) \;\leq\; \ell_{C_{\eta^+}}(x,y).
\end{equation}
This is a \emph{deterministic} inequality for each realization of the Poisson process.

\textit{Step~4: Flat-space longest-chain estimate.} The bounding sets $C_{\eta^\pm}$ are Poisson sprinklings at density $\rho$ into Minkowski causal diamonds, with expected count $n=\rho V_\pm$. Two facts about the Minkowski longest chain $H$ are needed: the value of its mean and the size of its fluctuations.

\emph{Mean.} The longest-chain statistics of a Poisson sprinkling of a Minkowski causal diamond are classical in the causal set literature~\cite{BrightwellGregory1991}: the expected longest chain between the tips of a diamond of proper-time height $\tau$ satisfies
\begin{equation}\label{eq:mink-mean}
   \mathbb{E}H = (m_d\rho)^{1/d}\,\tau\,(1+o(1)),
\end{equation}
where $m_d>0$ is the Myrheim--Meyer constant.

\emph{Fluctuations.} The relative deviation of $H$ from its mean is $O(n^{-1/(2d)}\log^{3/2}n)$. This is the content of the Bollob\'as--Brightwell estimates, which we transfer to the Minkowski order through their proof rather than through the orders themselves. Bollob\'as and Brightwell~\cite{BollobasBrightwell1992} prove their estimates for the coordinatewise order on $[0,1]^d$, but the proof uses only two ingredients: the method of bounded differences (their Lemma~4, a McDiarmid-type concentration inequality for functions of independent variables, itself proved via Azuma's martingale bound) and a partition of the domain into independent strips transverse to the order direction such that any chain meets few points per strip. Both ingredients hold verbatim for the Minkowski causal order. Slicing the diamond by the time coordinate into slabs of equal thickness yields independent Poisson strips, and causality confines any causal chain within a slab of time-thickness $\delta$ to a spatial ball of radius $\leq\delta$ (each causal step satisfies $|\Delta\vec{x}|\leq\Delta x^0$), hence to $O(1)$ cells of the spatial subdivision; this supplies the bounded-difference constant, while the Poisson tail bounding the surplus of over-full cells is unchanged. The argument therefore yields, with $d$-dependent constants, the same two estimates Bollob\'as and Brightwell obtain in the coordinate model.

We record those estimates in the coordinate-model notation in which the explicit constants are stated. Let $H_{n,d}$ denote the longest chain in the coordinate-order Poisson model on $[0,1]^d$, write $\mathbb{E}H_{n,d}=c_{n,d}\,n^{1/d}$, and let $c_d=\lim_{n}c_{n,d}$.

\emph{Theorem~9 (rate of mean convergence):}
\begin{equation}\label{eq:bb-mean}
   c_d \;\geq\; c_{n,d} \;\geq\; c_d - \frac{12 K_d^{\mathrm{BB}} \log^{3/2}n}{n^{1/(2d)}\log\log n},
\end{equation}
hence $|\mathbb{E}H_{n,d}-c_d n^{1/d}| \leq 12 K_d^{\mathrm{BB}}\,n^{1/(2d)}\log^{3/2}n/\log\log n$.

\emph{Theorem~3 (concentration):} For $n$ sufficiently large and $2<u<n^{1/(2d)}/\log\log n$,
\begin{equation}\label{eq:bb-conc}
   \Pr\!\Bigl(|H_{n,d}-\mathbb{E}H_{n,d}| > \frac{u\, K_d^{\mathrm{BB}}\,n^{1/(2d)}\log n}{\log\log n}\Bigr) \;\leq\; 4u^2 e^{-u^2}.
\end{equation}
Here $u$ is the Bollob\'as--Brightwell deviation parameter (unrelated to the curvature scale $\lambda$), and $K_d^{\mathrm{BB}}$ is their dimensional constant (unrelated to the F2 tolerance constant $K_d$ of Definition~\ref{def:faithful}).

Fix any tail exponent $K'>0$ and set $u=\sqrt{K'\log n}$ in~\eqref{eq:bb-conc}: the deviation is bounded by $\sqrt{K'}\,K_d^{\mathrm{BB}}\,n^{1/(2d)}\log^{3/2}n/\log\log n$ with probability $\geq 1-4K'\log n\cdot n^{-K'}$. Combining with~\eqref{eq:bb-mean} via the triangle inequality gives $|H_{n,d}-c_d n^{1/d}|\leq C_d\,n^{1/(2d)}\log^{3/2}n$ (the factor $\sqrt{K'}$ absorbed into $C_d$) with probability $\geq 1-n^{-K'+o(1)}$, that is, $H_{n,d}=\mathbb{E}H_{n,d}\,(1+O(n^{-1/(2d)}\log^{3/2}n))$. This relative bound is normalization-free, so by the transfer above it holds equally for the Minkowski longest chain about its mean~\eqref{eq:mink-mean}. Substituting $n=\rho V_\pm$ and dividing~\eqref{eq:mink-mean} by $(m_d\rho)^{1/d}$:
\[
   \biggl|\frac{\ell_{C_{\eta^\pm}}(x,y)}{(m_d\rho)^{1/d}} -\tau_\pm\biggr|
   \;\leq\; C_d'\,\frac{\log^{3/2}(\rho V_\pm)}{(\rho V_\pm)^{1/(2d)}} \,\tau_\pm.
\]

\textit{Step~5: Synthesis.}
With probability $\geq 1-(\rho V)^{-K'+o(1)}$ (from the tail choice above), the BB bounds hold simultaneously for both bounding diamonds. Substituting $\tau_\pm=\tau_g(1\pm O(L^2/\lambda^2))$ and applying the squeeze~\eqref{eq:chain-squeeze}:
\[
   \frac{\ell_{C_g}}{(m_d\rho)^{1/d}} \;\leq\; \tau_g\bigl(1+O(L^2/\lambda^2)\bigr) + C'\tau_g\,\frac{\log^{3/2}(\rho V)}{(\rho V)^{1/(2d)}},
\]
and similarly for the lower bound. Consolidating the error terms gives~\eqref{eq:chain-time}.
\end{proof}

\begin{remark}[Extension to non-compact spacetimes]
\label{rem:noncompact}
For a non-compact globally hyperbolic spacetime $M$ (e.g.,
Minkowski, FLRW, de~Sitter), the conclusion holds in the
following regional form. Fix a precompact region
$K\subset M$ of finite volume $V_K$, restrict the causal set
to $C_K:=f^{-1}(K)$, and apply Theorem~\ref{thm:haupt} with
$V_M$ replaced by $V_K$. Then with probability
$\geq 1-(\rho V_K)^{-K_d'}$ over the sprinkling there is a
diffeomorphism $\Phi_K\colon K^\circ\to\Phi(K^\circ)$ with
error $O(\rho^{-2/(5d)}\lambda^{-2/5}\log^{3/2}(\rho V_K))$;
as $\rho\to\infty$ with $K$ fixed, both the error and the
failure probability tend to zero. The statement is thus a
family of high-probability local results, one for each
precompact $K$.

A single almost-sure statement over all of $M$ is not
available when $\mathrm{Vol}_g(M)=\infty$. The
well-conditioning conditions (F1)--(F3) are established by a
union bound over a net in $M$, each cell of which fails with
positive probability; an infinite-volume manifold contains
infinitely many independent cells, so by the second
Borel--Cantelli lemma the event that infinitely many cells
fail has probability one (see \cite{Borel1909,Cantelli1917}).
For physical applications
one fixes a reference volume $V_K$ (e.g.\ a Hubble volume)
and works within the corresponding region.
\end{remark}

\begin{remark}[Boundary layer vanishes in the limit]
\label{rem:boundary-shrinks}
The interior subdomain $M^\circ$ (defined in
\eqref{eq:M1circ-def}) excludes an $h$-neighborhood of
$\partial M$ of width $c_*\lambda+12\alpha\ell$. The
mesoscopic part ($12\alpha\ell$) vanishes as
$\rho\to\infty$, but the macroscopic part ($c_*\lambda$)
does not. It is the boundary clearance demanded by
the F2 condition (Definition~\ref{def:faithful}), which
excludes diamonds within $c_*\lambda$ of $\partial M$
because the embedded count there would be truncated by the
boundary. We therefore distinguish two cases:

\emph{(a) No boundary or exhaustion.} For spatially compact
spacetimes (Cauchy slices, no spatial boundary) or for
spacetimes treated by precompact exhaustion
$K_n\nearrow M$, one applies the theorem on each $K_n$ and
the excluded layer is a boundary effect that recedes as the
exhaustion proceeds. In these cases
$\mathrm{Vol}_g(M\setminus M^\circ)/\mathrm{Vol}_g(M)\to 0$.

\emph{(b) Fixed manifold with boundary.} For a fixed
spacetime-with-boundary and fixed $\lambda$, the excluded
layer of width $c_*\lambda$ does not shrink as
$\rho\to\infty$; the theorem establishes the approximate
isometry only on the deep interior
$\{d_{h_1}(\cdot,\partial M)>c_*\lambda\}$. This is the
unavoidable consequence of the geometric spread of causal
diamonds: a mesoscopic diamond centered within $c_*\lambda$
of the boundary is geometrically truncated by $\partial M$.
Since the volume-rigidity argument requires exact, complete
causal diamonds, the F2 uniform density bound cannot be
applied within this boundary layer.
\end{remark}

\begin{corollary}[Approximate isometry of Poisson sprinklings]
\label{cor:quantitative}
Let $C$ be a causal set presented as a Poisson sprinkling at common density $\rho$ of
each of two globally hyperbolic $d$-dimensional spacetimes $(M_1,g_1)$ and $(M_2,g_2)$ with
curvature scale $\lambda$, via inclusions $f_1,f_2$, with
$\rho\lambda^d\geq c_*^{-2d}(\log\rho V_{\max})^2$ and
$V_{\max}:=\max(\Vol_{g_1}(M_1),\Vol_{g_2}(M_2))$. Then with probability at least
$1-2(\rho V_{\max})^{-K_d'}$ there is a smooth diffeomorphism
$\Phi\colon M_1^\circ\to M_2^\circ$ with $\Phi^*g_2=g_1+E$ and
\begin{equation}\label{eq:eps-poisson}
   \|E\|_{g_1}
   \;=\;O\Bigl(\rho^{-2/(5d)}\lambda^{-2/5}\log^{3/2}(\rho V_{\max})\Bigr)
   \;=:\;\varepsilon(\rho,\lambda).
\end{equation}
Both $\varepsilon(\rho,\lambda)\to 0$ and the failure probability $\to 0$ as
$\rho\lambda^d\to\infty$. In particular, a single causal set cannot be a faithful Poisson
sprinkling of two macroscopically distinct spacetimes: in the high-density limit the two
geometries are forced to agree, up to the explicit error~\eqref{eq:eps-poisson}, on the
deep interiors $M_1^\circ,M_2^\circ$.
\end{corollary}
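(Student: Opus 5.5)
The plan is to obtain Corollary~\ref{cor:quantitative} by combining Part~II with Part~I: first show that both presentations are well-conditioned with high probability, then apply Theorem~\ref{thm:haupt} deterministically on that event. For each $i\in\{1,2\}$, Theorem~\ref{thm:poisson-wc} applied to $(M_i,g_i)$ with $V_{M_i}=\Vol_{g_i}(M_i)$ shows that the inclusion $f_i$ is a $(\rho;c_*,K_d)$-well-conditioned embedding with probability at least $1-(\rho V_{M_i})^{-K_d'}$. Since $V_{M_i}\le V_{\max}$, the scale-separation hypothesis $\rho\lambda^d\ge c_*^{-2d}(\log\rho V_{\max})^2$ implies the corresponding hypothesis with $V_{M_i}$.

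The one point that needs care is the direction of the monotonicity in $V$. The failure bound $(\rho V_{M_i})^{-K_d'}$ is \emph{larger} when $V_{M_i}<V_{\max}$. I would handle this by running the union bound of Lemma~\ref{lem:poisson-faithful} and Proposition~\ref{prop:longest-chain} with $\log(\rho V_{\max})$ in place of $\log(\rho V_{M_i})$ in the F2 tolerance and in the tail parameter $u$. This makes each per-diamond and per-pair failure probability a negative power of $\rho V_{\max}$, while the net cardinality remains $O((\rho V_{M_i})^2)\le O((\rho V_{\max})^2)$. The price is that the tolerances $\delta_D$ are stated with $V_{\max}$, which is exactly the normalization Theorem~\ref{thm:haupt} uses. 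A union bound over $i=1,2$ then gives joint success probability at least $1-2(\rho V_{\max})^{-K_d'}$.

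On this event both $f_1$ and $f_2$ are well-conditioned embeddings of the same finite causal set $C$ at the common density $\rho$. Theorem~\ref{thm:haupt} applies verbatim and yields the diffeomorphism $\Phi\colon M_1^\circ\to M_2^\circ$ with $\Phi^*g_2=g_1+E$, where the bound~\eqref{eq:eps-final} is the claimed $\varepsilon(\rho,\lambda)$. Note that the ``same causal set'' hypothesis is used only in the form of Corollary~\ref{cor:dist-pres}, via the shared $\ell_C$.

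For the limits, I would write $\varepsilon=\lambda^{-2/5}\rho^{-2/(5d)}\log^{3/2}(\rho V_{\max})$ in the dimensionless form $(\rho\lambda^d)^{-2/(5d)}\log^{3/2}(\rho V_{\max})$, with $\lambda$ measured in units that make it dimensionless. As $\rho\lambda^d\to\infty$, the polynomial factor beats the logarithm provided $\log V_{\max}$ stays controlled relative to $\rho\lambda^d$. I take this to be implicit, as in the paper's limit conventions, with $V_{\max}/\lambda^d$ fixed. The failure probability $2(\rho V_{\max})^{-K_d'}\to0$ trivially. The final sentence of the corollary is then just a restatement: two macroscopically distinct spacetimes would have $\|E\|$ bounded below, contradicting $\varepsilon\to0$.

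The main obstacle I expect is the bookkeeping in the first step, namely ensuring that the two embeddings are well-conditioned with a \emph{common} tolerance constant $K_d$ and a common $\log(\rho V_{\max})$ normalization. Without this, the error terms entering Corollary~\ref{cor:dist-pres} would not match.
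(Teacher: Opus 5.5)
Your proposal is correct and follows the paper's proof: apply Theorem~\ref{thm:poisson-wc} to each $f_i$, take a union bound, then invoke Theorem~\ref{thm:haupt} on the joint event. Your extra points go beyond the paper's proof, which skips them: Theorem~\ref{thm:poisson-wc} only gives success probability $1-(\rho V_{M_i})^{-K_d'}$, weaker than the $(\rho V_{\max})^{-K_d'}$ form when $V_{M_i}<V_{\max}$, so your $V_{\max}$ renormalization is needed (the paper asserts the $V_{\max}$ bound directly), and $\varepsilon\to0$ indeed requires $\log(\rho V_{\max})$ to be controlled relative to $\rho\lambda^d$; one correction is that the shared causal set is used beyond $\ell_C$, e.g.\ to define the correspondence $p_k\leftrightarrow q_k$ and the common count $|C|$ in the degree argument.
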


\begin{proof}
The hypotheses are two probability statements about the single combinatorial object $C$,
coupled only through $C$; we do not require them to be independent. Applying
Theorem~\ref{thm:poisson-wc} to $f_1$ and to $f_2$ separately, each presents $C$ as a
$(\rho;c_*,K_d)$-well-conditioned embedding with probability $\geq 1-(\rho V_{\max})^{-K_d'}$.
By a union bound both hold simultaneously with probability $\geq 1-2(\rho V_{\max})^{-K_d'}$.
On that event the hypotheses of Theorem~\ref{thm:haupt} are met, and its conclusion supplies
$\Phi$, $E$ together with the rate~\eqref{eq:eps-final}, which
is~\eqref{eq:eps-poisson}.
\end{proof}

\section{Discussion}

\subsection{Proof architecture: a variant of ``order plus number''}
\label{subsec:architecture}

The causal set program rests on the slogan that order determines the conformal (causal) structure while number determines the scale: Malament's theorem~\cite{Malament1977} supplies the first half, and the volume-counting of Bombelli--Lee--Meyer--Sorkin~\cite{Bombelli1987} the second. Our proof follows this division only in part, and it is worth being explicit about where it departs.

In our argument the causal order does double duty. The cone-preservation that drives the local Procrustes alignment (Lemma~\ref{lem:finite-procrustes}) is the conformal half, as expected. But the metric scale is also fixed by the order, through the longest-chain/proper-time correspondence (F3, Proposition~\ref{prop:longest-chain}): the longest chain between two elements is an invariant of the abstract poset, so at common density it pins the proper time in both spacetimes at once. We therefore do not use volume information to set the scale, and the volume-rigidity step of the traditional account is absent (Remark~\ref{rem:scale-by-f3}).

The volume condition (F2) instead plays a different structural role: it forces the local point cloud to be isotropic and non-degenerate (Lemma~\ref{lem:source-cov}), which is what makes the moving Karcher mean a local diffeomorphism (Proposition~\ref{prop:local-diffeo}) and, after a degree argument, a global one (Proposition~\ref{prop:diffeo}). In the slogan ``order plus number,'' number has thus moved from setting the scale to guaranteeing the
reconstruction is non-degenerate; the scale is carried entirely by the order.

\subsection{Relation to prior work}

Malament~\cite{Malament1977} proves exact uniqueness from the full causal structure: a continuous, time-orientation-preserving bijection of $d\geq 3$-dimensional globally hyperbolic spacetimes preserving causal order is necessarily a smooth conformal isometry. Together with volume information this would determine the full metric, and Malament's result is the conceptual origin of the slogan ``order plus number gives geometry.''

M\"uller~\cite{Mueller2025} examined what happens when one tries to formalize this slogan for finite causal sets. He considered two versions:

(i) A direct \emph{finite-set} version, where a finite poset is faithfully embedded into two spacetimes by maps preserving only the causal order. M\"uller showed this version is false: he constructs examples of finite causal sets admitting causal-order-preserving embeddings into geometrically distinct manifolds. The underlying issue is that without volume information, finite causal data is too sparse to pin down the manifold; an adversary can rearrange the embedded points in different spacetimes while preserving the order.

(ii) A \emph{countable-set} version with stronger hypotheses: the embedded points form a dense subset of the manifold, and the embedding is required to satisfy a ``Planck-scale uniform density'' condition that is deterministic rather than Poisson. Under these stronger hypotheses, M\"uller proves an exact isometry result. However, the deterministic uniform-density condition does not correspond to Poisson sprinkling, and the countable limit removes the finite-density structure that is essential to causal set theory.

Our result complements M\"uller's countable-set theorem by
extending the uniqueness framework to finite Poisson
sprinklings: we work with finite causal sets, with volume
information~(F2) encoding the statistical Poisson structure
rather than a deterministic Planck-scale condition, and obtain
quantitative approximate uniqueness with an explicit
high-density limit. As discussed in
\S\ref{subsec:architecture}, F2 enters our proof through the
non-degeneracy of the local reconstruction rather than through
a volume-rigidity step, while the scale is carried by the
longest-chain correspondence; our error bounds make precise
what ``approximate'' means as $\rho\to\infty$.

The Brightwell--Gregory~\cite{BrightwellGregory1991} longest chain estimator, originally introduced as a dimension estimator, here plays a structural role: it provides the combinatorial invariant that bridges the two embeddings.

The moving Karcher mean construction may be of independent interest. It simultaneously serves as a smooth, global interpolation of noisy point correspondences on Riemannian manifolds, defined by a single intrinsic minimization problem, and as a statistical low-pass filter that suppresses high-frequency noise in the derivative through the delta-method analysis of weighted ratios.

\section{Acknowledgments}
The author thanks David Morrison for valuable discussions and guidance throughout the development of this work.

\appendix

\section{Algebra of finite Lorentzian Procrustes}
\label{app:procrustes-algebra}

We give the full algebraic proof of
Lemma~\ref{lem:finite-procrustes} (finite Lorentzian
Procrustes).

\begin{proof}
Identify each $a_m,b_m\in\R^d$ as a column vector, and
form the $d\times d$ matrices
\[
   A:=[a_1\,a_2\,\cdots\,a_d],
   \qquad
   B:=[b_1\,b_2\,\cdots\,b_d]
\]
whose $m$-th columns are $a_m,b_m$ respectively. By (C3),
$A$ is invertible with $\|A^{-1}\|=O(1/(\alpha\ell))$.
Define the $\eta$-Gram matrices
\[
   G^a:=A^T\eta A,
   \qquad
   G^b:=B^T\eta B,
\]
so $G^a_{ij}=a_i^T\eta a_j=\eta(a_i,a_j)$ and analogously
for $G^b$.

\emph{Step 1 (Gram matrix preservation).}
By polarization,
$2\eta(a_i,a_j)=\eta(a_i,a_i)+\eta(a_j,a_j)-\eta(a_i-a_j,a_i-a_j)$
and analogously for $b$. Applying~\eqref{eq:proc-hyp} to
the three squared distances yields
$|\eta(a_i,a_j)-\eta(b_i,b_j)|\leq\tfrac{3}{2}\varepsilon$
for all $1\leq i,j\leq d$, so
\[
   \|G^a-G^b\|\leq\tfrac{3d}{2}\varepsilon.
\]

\emph{Step 2 (defect of the candidate Lorentz transformation).}
Define the candidate Lorentz map $\Lambda:=BA^{-1}$. Then
$B=\Lambda A$, equivalently $b_m=\Lambda a_m$ for each $m$.
Compute
\[
   \Lambda^T\eta\Lambda
   = (BA^{-1})^T\eta(BA^{-1})
   = A^{-T}B^T\eta B A^{-1}
   = A^{-T}G^b A^{-1}.
\]
For an exact Lorentz transformation, $\Lambda^T\eta\Lambda=\eta$;
this would require $G^b=A^T\eta A=G^a$. The deviation is
\[
   \Lambda^T\eta\Lambda - \eta
   = A^{-T}(G^b-G^a)A^{-1},
\]
so
\begin{equation}\label{eq:Lambda-defect}
   \|\Lambda^T\eta\Lambda-\eta\|
   \leq \|A^{-1}\|^2\cdot\|G^b-G^a\|
   = O\bigl(\varepsilon/(\alpha\ell)^2\bigr).
\end{equation}

\emph{Step 3 (projection to the Lorentz group).}
We apply Lemma~\ref{lem:approx-az} to $\Lambda$. That
lemma's hypothesis is approximate null-cone preservation:
$|\eta(\Lambda v,\Lambda v)|\leq\theta'|\Lambda v|^2$ for
all null $v$. Bounding the left side via~\eqref{eq:Lambda-defect}:
\[
   |\eta(\Lambda v,\Lambda v)|
   = |v^T\Lambda^T\eta\Lambda\,v|
   = \Bigl|\,v^T(\Lambda^T\eta\Lambda-\eta)v
     +\underbrace{v^T\eta v}_{=\,0}\,\Bigr|
   \leq \|\Lambda^T\eta\Lambda-\eta\|\cdot|v|^2,
\]
where $v^T\eta v=\eta(v,v)=0$ because $v$ is null.
Bounding $|v|^2$ in terms of $|\Lambda v|^2$:
$|v|^2\leq |\Lambda v|^2/\sigma_{\min}(\Lambda)^2$. Combining,
$\theta':=\|\Lambda^T\eta\Lambda-\eta\|/\sigma_{\min}(\Lambda)^2$.

\emph{Bounds on $\|\Lambda\|$ and $\sigma_{\min}(\Lambda)$.}
We must establish both. Elements of $O(1,d-1)$ can in
general have arbitrarily large condition number (Lorentz
boosts), so $\sigma_{\min}(\Lambda)$ is not automatic from
$\Lambda^T\eta\Lambda\approx\eta$; we need the
Euclidean-size hypothesis on the anchors.

(i) \emph{Upper bound on $\|\Lambda\|$.} The boundedness
hypothesis~\eqref{eq:proc-bdd} gives
$\|B\|\leq\sqrt{d}\,C_{\mathrm{anch}}\,\alpha\ell$
(operator norm of a $d\times d$ matrix is bounded by $\sqrt d$
times the maximum column norm in the Euclidean inner
product). Combined with $\|A^{-1}\|=O(1/(\alpha\ell))$
from (C3),
\[
   \|\Lambda\|=\|BA^{-1}\|\leq\|B\|\cdot\|A^{-1}\|
   \leq C_d^{(1)}
\]
for a dimensional constant $C_d^{(1)}$.

(ii) \emph{Lower bound on $\sigma_{\min}(\Lambda)$ via the
determinant.} Taking determinants of
$\Lambda^T\eta\Lambda=\eta+R$ with $\|R\|\leq\delta=
O(\varepsilon/(\alpha\ell)^2)$:
\[
   \det(\Lambda)^2\cdot\det(\eta)
   =\det(\eta+R)=\det(\eta)(1+O(\delta))
\]
(expanding the determinant of a perturbation; the linear
term is $\det(\eta)\,\mathrm{tr}(\eta^{-1}R)=O(\delta)$,
higher-order terms are $O(\delta^2)$). Since $\det(\eta)=(-1)$,
this gives $\det(\Lambda)^2=1+O(\delta)$, hence
$|\det(\Lambda)|\geq 1/2$ for $\delta$ sufficiently small.
Now from the SVD, $|\det(\Lambda)|=\prod_{i=1}^d\sigma_i(\Lambda)$,
so
\[
   \sigma_{\min}(\Lambda)
   =\frac{|\det(\Lambda)|}{\prod_{i<d}\sigma_i(\Lambda)}
   \geq\frac{|\det(\Lambda)|}{\|\Lambda\|^{d-1}}
   \geq\frac{1/2}{(C_d^{(1)})^{d-1}}=:c_d>0,
\]
a dimensional lower bound.

Hence $\Lambda$ is uniformly conditioned,
$\kappa(\Lambda)=\|\Lambda\|/\sigma_{\min}(\Lambda)\leq C_d^{(1)}/c_d=:K$,
and $\theta':=\delta/c_d^2=O(\varepsilon/(\alpha\ell)^2)$. The
uniform-conditioning hypothesis of Lemma~\ref{lem:approx-az} is met,
so the lemma yields $\Omega>0$ and $\hat\Lambda\in O(1,d-1)$ with
$\|\Lambda-\Omega\hat\Lambda\|\leq C\,\|\Lambda\|\,\theta'$. Here
$\Lambda$ approximately preserves the \emph{full} Minkowski form,
$\|\Lambda^T\eta\Lambda-\eta\|\leq\delta$, so the conformal factor is
pinned to unity: $\Omega^2=-(\Lambda^T\eta\Lambda)_{00}=1+O(\delta)$.
Therefore
\[
   \|\Lambda-\hat\Lambda\|
   \leq\|\Lambda-\Omega\hat\Lambda\|+|\Omega-1|\,\|\hat\Lambda\|
   = O(\varepsilon/(\alpha\ell)^2),
\]
using $\|\Lambda\|=O(1)$ and $\|\hat\Lambda\|=O(1)$.

\emph{Step 4 (residuals).}
For each $m$:
\[
   \|b_m-\hat\Lambda a_m\|
   = \|(\Lambda-\hat\Lambda)a_m\|
   \leq \|\Lambda-\hat\Lambda\|\cdot\|a_m\|
   = O\bigl(\varepsilon/(\alpha\ell)^2\bigr)\cdot O(\alpha\ell)
   = O\bigl(\varepsilon/(\alpha\ell)\bigr).
\]
Substituting $\varepsilon=\varepsilon_\tau(\alpha\ell)^2$
gives $\|b_m-\hat\Lambda a_m\|=O(\varepsilon_\tau\alpha\ell)$.
\end{proof}

\section{Proof of the Approximate Alexandrov--Zeeman lemma}
\label{app:approx-az}

\begin{proof}[Proof of Lemma~\ref{lem:approx-az}]
\textit{Step~1: $A^T\eta A$ is approximately proportional to $\eta$.} Define the symmetric quadratic form $Q:=A^T\eta A$. The hypothesis on $A$ gives, for every null $v\in\mathcal{N}$,
\[
   |v^TQv| = |\eta(Av,Av)| \leq \theta|Av|^2
   \leq \theta\|A\|^2|v|^2.
\]

We claim (Lemma~\ref{lem:quadform-stability} in Appendix~\ref{app:quadform}) that any symmetric form $Q$ on $\R^d$ satisfying $|v^TQv|\leq\beta|v|^2$ for all null $v$ is close to a multiple of $\eta$:
\begin{equation}\label{eq:quadform-stability}
   Q = \alpha\eta + R,\qquad \|R\|\leq C_d\,\beta,
\end{equation}
for some scalar $\alpha\in\R$ and a dimensional constant $C_d$. The idea: parametrize null vectors as $v=(1,\hat n)$ with $\hat n\in S^{d-2}$ and use $\hat n\mapsto-\hat n$ symmetry together with polarization on $\hat n\in S^{d-2}$ to extract $\alpha:=-Q_{00}$ and bound the residual entries of $R:=Q-\alpha\eta$. Full calculation in Appendix~\ref{app:quadform}.

Applying this to our $Q=A^T\eta A$ with $\beta=\theta\|A\|^2$:
\begin{equation}\label{eq:Q-near-Omega-eta}
   A^T\eta A = \Omega^2\eta + R,\qquad
   \|R\|\leq C_d\,\theta\,\|A\|^2,
\end{equation}
where we have written $\alpha=\Omega^2$ for some real $\Omega$ (the sign of $\alpha$ will be confirmed positive below, since $A$ is invertible and the cone condition forces $\alpha\neq 0$).

\textit{Step~2: $\Omega>0$ and $A$ is close to $\Omega$ times a Lorentz transformation.}

\emph{Lower bound on $|\Omega^2|$.} Because $A$ is invertible, multiply~\eqref{eq:Q-near-Omega-eta} on the right by $A^{-1}$ and on the left by $(A^{-1})^T$ to isolate $\eta$:
\begin{equation}\label{eq:Omega-isolate}
   \Omega^2\,(A^{-1})^T\eta\,A^{-1} = \eta - (A^{-1})^T R\,A^{-1}.
\end{equation}
Take operator norms. On the left, $\|(A^{-1})^T\eta\,A^{-1}\|\leq\|A^{-1}\|^2$ (using $\|\eta\|=1$); on the right, the reverse triangle inequality gives $\|\eta-(A^{-1})^TRA^{-1}\|\geq 1-\|A^{-1}\|^2\|R\|$. Hence, substituting $\|R\|\leq C_d\theta\|A\|^2$ from~\eqref{eq:Q-near-Omega-eta} and the conditioning bound $\kappa(A)=\|A\|\,\|A^{-1}\|\leq K$,
\begin{equation}\label{eq:Omega-lower}
   |\Omega^2|\,\|A^{-1}\|^2 \;\geq\; 1-\|A^{-1}\|^2\|R\|
   \;\geq\; 1-K^2 C_d\,\theta.
\end{equation}
Choosing the threshold $\theta_0\leq 1/(2C_dK^2)$ (and shrinking it further below so that the Step~3 tube constraint also holds), the hypothesis $\theta\leq\theta_0$ makes the right side at least $\tfrac12$, so
\begin{equation}\label{eq:Omega-mag}
   |\Omega^2| \;\geq\; \frac{1}{2\|A^{-1}\|^2}
   \;\geq\; \frac{\|A\|^2}{2K^2} \;>\;0 .
\end{equation}

\emph{Sign of $\Omega^2$.} Take any timelike vector $v_0$ (so $\eta(v_0,v_0)<0$). By~\eqref{eq:Q-near-Omega-eta}, $\eta(Av_0,Av_0)=\Omega^2\eta(v_0,v_0)+v_0^TRv_0$, and the left side is strictly negative: since $A$ is an invertible linear map, hence a homeomorphism, the division of $\R^d\setminus\mathcal N$ into three connected components ensures $A$ cannot map the timelike interior to the single spacelike exterior. If $\Omega^2$ were negative then, taking $v_0$ of unit norm, $\Omega^2\eta(v_0,v_0)\geq|\Omega^2|\geq\tfrac12\|A\|^2/K^2$ by~\eqref{eq:Omega-mag}, while $|v_0^TRv_0|\leq\|R\|\leq C_d\theta\|A\|^2<\tfrac12\|A\|^2/K^2$ by the threshold; this would force $\eta(Av_0,Av_0)>0$, a contradiction. Hence $\Omega^2>0$, and combining with~\eqref{eq:Omega-mag},
\begin{equation}\label{eq:ratio-bound}
   \frac{\|A\|^2}{\Omega^2} \;\leq\; 2K^2 .
\end{equation}

Define $B:=\Omega^{-1}A$. Then $B^T\eta B=\Omega^{-2}A^T\eta A=\eta+\Omega^{-2}R$, so by~\eqref{eq:ratio-bound}
\begin{equation}\label{eq:B-defect}
   \|B^T\eta B-\eta\| = \Omega^{-2}\|R\|
   \leq \frac{C_d\,\theta\,\|A\|^2}{\Omega^2}
   \leq 2C_d\,K^2\,\theta .
\end{equation}

\textit{Step~3: Stability of the Lorentz group.} We use the implicit function theorem (in submersion form) to conclude that any matrix $B$ with $B^T\eta B$ close to $\eta$ must be close to some $\Lambda\in O(1,d-1)$. The relevant map is
\[
   \Psi\colon GL(d,\R)\to\mathrm{Sym}(d),
   \qquad
   \Psi(B):=B^T\eta B,
\]
with $\Psi^{-1}(\eta)=O(1,d-1)$. Its differential at any $\Lambda\in O(1,d-1)$ is, by the product rule applied to the bilinear form $B\mapsto B^T\eta B$:
\[
   d\Psi_\Lambda(E)=\Lambda^T\eta E+E^T\eta\Lambda
   \in\mathrm{Sym}(d),
\]
where $E\in\R^{d\times d}=T_\Lambda GL(d,\R)$ is any tangent direction.

\emph{Surjectivity of $d\Psi_\Lambda$ onto $\mathrm{Sym}(d)$.}
Given any $S\in\mathrm{Sym}(d)$, we show there exists an $E$ with $d\Psi_\Lambda(E)=S$. Set $E:=(\eta\Lambda)^{-T}(S/2)$ (the inverse exists since $\eta$ and $\Lambda$ are both invertible). Then
\begin{align*}
   d\Psi_\Lambda(E) &=\Lambda^T\eta\cdot(\eta\Lambda)^{-T}(S/2) +\bigl((\eta\Lambda)^{-T}(S/2)\bigr)^T\eta\Lambda\\
   &=(\eta\Lambda)^T(\eta\Lambda)^{-T}(S/2) +(S/2)^T(\eta\Lambda)^{-1}(\eta\Lambda)\\
   &=S/2+S^T/2=S
\end{align*}
since $S$ is symmetric. Hence $d\Psi_\Lambda$ is surjective onto $\mathrm{Sym}(d)$. In particular $\Psi$ is a submersion at every point of $O(1,d-1)$.

\emph{Consequence: quantitative tubular structure.}
Because $O(1,d-1)$ is noncompact, $\Psi$ does \emph{not} admit a tubular neighborhood of uniform radius: the right inverse of $d\Psi_\Lambda$ exhibited above has norm $\sim\|\Lambda^{-1}\|$, which is unbounded along the boost directions, so both the tube radius and the projection's Lipschitz constant degrade as $\|\Lambda^{-1}\|\to\infty$. Quantitative control is restored by confining $B$ to a \emph{compact} subset of $GL(d,\R)$, which the two standing hypotheses already force. Bounded condition number alone does not suffice, since $B=nI$ has $\kappa(B)=1$ for every $n$; it is the form bound that supplies the missing scale. Fix $\delta_\star:=\tfrac12$ and suppose $\kappa(B)\leq K$ and $\|B^T\eta B-\eta\|\leq\delta_\star$. By Weyl's inequality the eigenvalues of $B^T\eta B$ lie within $\delta_\star$ of those of $\eta$ (namely $\pm1$), so
\[
   \Bigl(\textstyle\prod_{i}\sigma_i(B)\Bigr)^2=\bigl|\det(B^T\eta B)\bigr|
   \in\bigl[(1-\delta_\star)^d,(1+\delta_\star)^d\bigr].
\]
Together with $\sigma_{\max}(B)\leq K\,\sigma_{\min}(B)$, the elementary inequalities $\prod_i\sigma_i\leq K^{d-1}\sigma_{\min}^d$ and $\prod_i\sigma_i\geq K^{-(d-1)}\sigma_{\max}^d$ pin every singular value of $B$ into the fixed interval
\[
   c_K:=\Bigl(\tfrac{(1-\delta_\star)^{d/2}}{K^{d-1}}\Bigr)^{1/d}
   \;\leq\;\sigma_i(B)\;\leq\;
   \bigl(K^{d-1}(1+\delta_\star)^{d/2}\bigr)^{1/d}=:C_K .
\]
Hence all such $B$ lie in the compact set $\mathcal K_K:=\{B:c_K\leq\sigma_{\min}(B)\leq\sigma_{\max}(B)\leq C_K\}\subset GL(d,\R)$, on which the right inverse of $d\Psi_B$ has norm $\leq 1/c_K$. The submersion form of the implicit function theorem (\cite{KrantzParks2002}, Theorem~6.2.4), whose tube radius and projection Lipschitz constant depend only on bounds for $d\Psi_B$ and its right inverse, therefore applies with constants uniform over $\mathcal K_K$: there exist $\delta_0\in(0,\delta_\star]$ and $C>0$, depending only on $d$ and $K$, such that for any $B\in GL(d,\R)$ with $\kappa(B)\leq K$ and $\|B^T\eta B-\eta\|\leq\delta\leq\delta_0$, there exists $\Lambda\in O(1,d-1)$ with
\[
   \|B-\Lambda\|\leq C\,\delta.
\]

We apply this with $B=\Omega^{-1}A$. The condition number is scale-invariant, $\kappa(B)=\kappa(\Omega^{-1}A)=\kappa(A)\leq K$, and $\delta=\|B^T\eta B-\eta\|\leq 2C_d K^2\theta$ by~\eqref{eq:B-defect}. Shrinking $\theta_0$ if necessary so that $2C_dK^2\theta_0\leq\delta_0$, the hypothesis $\theta\leq\theta_0$ guarantees $\delta\leq\delta_0$; the tubular bound then yields
\[
   \|\Omega^{-1}A-\Lambda\|=\|B-\Lambda\|\leq C\,\delta = O(\theta),
\]
with $C$ depending only on $d$ and $K$. Finally, evaluating~\eqref{eq:Q-near-Omega-eta} on a unit timelike vector gives $\Omega^2\leq\|A\|^2+\|R\|\leq 2\|A\|^2$, so $\Omega\leq\sqrt2\,\|A\|$ and
\[
   \|A-\Omega\Lambda\|=\Omega\,\|B-\Lambda\|\leq C\,\|A\|\,\theta ,
\]
with $C$ depending only on $d$ and $K$.
\end{proof}

\section{Quadratic-form stability for null cones}
\label{app:quadform}

We prove the following stability statement used in
Step~1 of the proof of Lemma~\ref{lem:approx-az}.

\begin{lemma}[Quadratic-form stability]
\label{lem:quadform-stability}
Let $Q$ be a symmetric bilinear form on $\R^d$ satisfying $|v^TQv|\leq\beta|v|^2$ for every null vector $v$ (i.e.\ $\eta(v,v)=0$). Note that $|v|^2$ denotes the standard Euclidean norm squared. Then there exists a scalar $\alpha\in\R$ and a residual symmetric form $R$ with
\[
   Q = \alpha\eta + R,\qquad \|R\|\leq C_d\,\beta,
\]
where $C_d$ depends only on $d$.
\end{lemma}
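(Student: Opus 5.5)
The plan is to work in the coordinates $v=(v^0,\vec v)$, $\vec v\in\R^{d-1}$, and split $Q$ into blocks: $Q_{00}$, the mixed vector $b:=(Q_{0i})_{i}$, and the spatial symmetric block $S:=(Q_{ij})_{i,j\geq1}$. Every null vector is a positive multiple of $v=(1,\hat n)$ or $(-1,\hat n)$ with $\hat n\in S^{d-2}$. Both homogeneity and the sign flip $v\mapsto-v$ preserve the hypothesis, so it suffices to use $v_\pm=(1,\pm\hat n)$, for which $|v_\pm|^2=2$. The hypothesis then says
\[
   \bigl|Q_{00}\pm 2\,b\cdot\hat n+\hat n^TS\hat n\bigr|\leq 2\beta
   \qquad\text{for all }\hat n\in S^{d-2}.
\]

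First, I will add and subtract the $\pm$ versions. Subtracting gives $|b\cdot\hat n|\leq\beta$ for every unit $\hat n$. Taking $\hat n=b/|b|$ then yields $|b|\leq\beta$, which controls the mixed entries. Adding gives $|Q_{00}+\hat n^TS\hat n|\leq 2\beta$ for all unit $\hat n$.

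Next, I will set $\alpha:=-Q_{00}$, so that $\alpha\eta$ has $00$-entry $Q_{00}$, spatial block $\alpha I$, and zero mixed entries. With this choice, $R:=Q-\alpha\eta$ has $R_{00}=0$, mixed block $b$, and spatial block $S-\alpha I=S+Q_{00}I$. The previous inequality says exactly that $|\hat n^T(S+Q_{00}I)\hat n|\leq 2\beta$ on the unit sphere. For a symmetric matrix, the supremum of $|\hat n^T X\hat n|$ over unit vectors equals the operator norm, so $\|S-\alpha I\|\leq 2\beta$. No polarization is actually needed here, since the quadratic form of a symmetric matrix determines its norm.

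Finally, I will assemble the estimate. $R$ is a symmetric matrix with a zero corner, an off-diagonal column of Euclidean norm at most $\beta$, and a spatial block of norm at most $2\beta$. Bounding the operator norm by the sum of the norms of the block pieces gives $\|R\|\leq 2\beta+2\beta=4\beta$, so $C_d=4$ works in every dimension. The only points needing care, and the closest thing to an obstacle, are two degenerate cases. The case $d=2$ has to be checked separately: there $S^{0}=\{\pm1\}$, and the same computation goes through verbatim with $\hat n=\pm1$. The normalization $|v|^2=2$ is where the factors of $2$ enter, and I will keep track of it so that the constant stays explicit.
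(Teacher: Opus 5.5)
Your proof is correct and follows the paper's argument: you evaluate on the null vectors $(1,\pm\hat n)$, add and subtract the two inequalities, and set $\alpha=-Q_{00}$. The only difference is that you bound the spatial block through the fact that $\sup_{|\hat n|=1}|\hat n^TX\hat n|=\|X\|$ for symmetric $X$, where the paper uses entrywise bounds plus polarization; this gives the explicit dimension-free constant $C_d=4$ (in fact $3\beta$ suffices, since the symmetric mixed part has operator norm $|b|$).
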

\begin{proof}
Parametrize null vectors as $v=(1,\hat n)$ with $\hat n\in S^{d-2}$ (using $\eta=\mathrm{diag}(-1,+1,\ldots,+1)$). Then
\[
   v^TQv = Q_{00} + 2Q_{0i}\hat n^i + Q_{ij}\hat n^i\hat n^j.
\]
Replacing $\hat n\to-\hat n$ (also null) gives
\[
   Q_{00}-2Q_{0i}\hat n^i+Q_{ij}\hat n^i\hat n^j.
\]
Subtracting: $|4Q_{0i}\hat n^i|\leq 4\beta$ for all
$\hat n\in S^{d-2}$, hence $|Q_{0i}|\leq\beta$ for each
spatial index $i$ (achieved by choosing $\hat n=e_i$).
Adding the two: $|2Q_{00}+2Q_{ij}\hat n^i\hat n^j|\leq 4\beta$,
so
\[
   |Q_{ij}\hat n^i\hat n^j+Q_{00}|\leq 2\beta
   \quad\text{for every unit }\hat n\in S^{d-2}.
\]
This says the spatial quadratic form $\hat n\mapsto
Q_{ij}\hat n^i\hat n^j$ is approximately constant on the
sphere, namely close to the constant $-Q_{00}$.

Set $\alpha:=-Q_{00}$ (so that $\alpha\eta_{00}=Q_{00}$
since $\eta_{00}=-1$). The residual $R:=Q-\alpha\eta$ has
\begin{itemize}
\item $R_{00}=Q_{00}-\alpha\eta_{00}=Q_{00}-(-Q_{00})(-1)=0$
exactly;
\item $|R_{0i}|=|Q_{0i}|\leq\beta$ (since $\eta_{0i}=0$);
\item $R_{ij}=Q_{ij}-\alpha\eta_{ij}=Q_{ij}+Q_{00}\delta_{ij}$
satisfies $|R_{ij}\hat n^i\hat n^j|\leq 2\beta$ for all
unit $\hat n$.
\end{itemize}
The diagonal entries of $R_{ij}$ are bounded by setting $\hat n=e_i$, yielding $|R_{ii}|\leq 2\beta$. The off-diagonal entries are recovered via polarization $2R_{ij}=R(\hat n_+,\hat n_+)-R(\hat n_-,\hat n_-)$ with $\hat n_\pm=(e_i\pm e_j)/\sqrt{2}$. Applying our spherical bound to these unit vectors gives $|2R_{ij}|\leq 2(2\beta)$, hence $|R_{ij}|\leq 2\beta$. Since every individual entry of the matrix $R$ is strictly bounded by a multiple of $\beta$, the overall operator norm satisfies $\|R \| \leq C_d\,\beta$. 
\end{proof}

\section{Anchor configuration: explicit calculations}
\label{app:anchor-checks}

We verify properties (C1)--(C3) of
Lemma~\ref{lem:target-config} for the explicit anchor
configuration~\eqref{eq:anchor-config}.

\begin{proof}
(C1): For $m=d$, $\eta(v_d^*,v_d^*)=-(16\alpha\ell)^2$. For
$m\in\{1,\ldots,d-1\}$,
$\eta(v_m^*,v_m^*)=-(16\alpha\ell)^2+(2\alpha\ell)^2
=-(\alpha\ell)^2(256-4)<0$.

(C2): Fix $p$ with $|p|\leq 2\alpha\ell$ (the full active
ball). For $m=0$:
$a_0^*-p=-8\alpha\ell\,e_0-p$, so
$\eta(a_0^*-p,a_0^*-p)=-(8\alpha\ell+p^0)^2+|\vec p|^2$.
Since $|p^0|,|\vec p|\leq 2\alpha\ell$, $(8\alpha\ell+p^0)^2
\geq(8\alpha\ell-2\alpha\ell)^2=(6\alpha\ell)^2=36(\alpha\ell)^2$,
and $|\vec p|^2\leq 4(\alpha\ell)^2$. Since $36>4$, the
displacement is chronological. Symmetrically for $m=d$
(with $+8\alpha\ell$).

For $m\in\{1,\ldots,d-1\}$:
$a_m^*-p=(8\alpha\ell-p^0)e_0+(2\alpha\ell-p^m)e_m
-\sum_{j\neq m}p^j e_j$.
The time component squared is $(8\alpha\ell-p^0)^2
\geq(6\alpha\ell)^2=36(\alpha\ell)^2$ (worst case
$p^0=2\alpha\ell$); the spatial-component squared is
$(2\alpha\ell-p^m)^2+\sum_{j\neq m}(p^j)^2
\leq(2\alpha\ell+2\alpha\ell)^2+(2\alpha\ell)^2
=(16+4)(\alpha\ell)^2=20(\alpha\ell)^2$ (worst case
$|p^m|=2\alpha\ell$ with sign opposing, and the remaining
spatial budget $\leq 2\alpha\ell$). The chronological
condition $36>20$ holds with margin, independent of
$\alpha\geq 16$ and of dimension.

(C3): The matrix $V^*$ has the explicit form
\[
   V^*
   =\begin{pmatrix}
   16\alpha\ell & 2\alpha\ell & 0 & \cdots & 0 \\
   16\alpha\ell & 0 & 2\alpha\ell & \cdots & 0 \\
   \vdots & & & \ddots & \\
   16\alpha\ell & 0 & 0 & \cdots & 2\alpha\ell \\
   16\alpha\ell & 0 & 0 & \cdots & 0
   \end{pmatrix}
\]
(rows $v_1^*,\ldots,v_{d-1}^*,v_d^*$, columns $e_0,e_1,\ldots,e_{d-1}$).
Subtracting the last row from the first $d-1$ leaves
$2\alpha\ell\,I_{d-1}$ in the spatial block, so $V^*$ is
invertible with $\|(V^*)^{-1}\|=O(1/(\alpha\ell))$.
Writing $V^*=\alpha\ell\,M_0$ with $M_0$ the fixed integer
matrix of entries displayed above, the scale $\alpha\ell$
cancels from $\kappa(V^*)=\kappa(M_0)$, so the condition
number depends only on $d$ and is independent of $\alpha$
and $\ell$.
\end{proof}

\section{Poisson faithful embedding: continuity and rate-scaling}
\label{app:poisson-faithful-details}

We complete the proof of Lemma~\ref{lem:poisson-faithful}
by verifying the continuity between net points and computing
the F2 tolerance at the relevant scales.

\emph{Continuity between net points.}
We must show that if F2 holds at every net point, it holds
(with a slightly enlarged tolerance) at every diamond in the
admissible range. The net spacing must be \emph{scale-dependent}:
a uniform spacing $\tau_{\min}$ does not suffice, because the
count change from displacing the endpoints by $\eta$ is
$\rho\,C\tau^{d-1}\eta$, which relative to the F2 tolerance
$\delta_D\,\rho\,\mathrm{Vol}_g(D)\asymp\delta_\tau\rho\tau^d$ is
$O\bigl(\eta/(\tau\delta_\tau)\bigr)$; with $\delta_\tau\propto
\tau^{-d/2}$ this ratio is worst at the \emph{coarse} end
$\tau=c_*\lambda$, and a fixed spacing fails there. We therefore
net each scale separately.

Partition the admissible range $[\tau_{\min},c_*\lambda]$ into
$O(\log\rho V_M)$ dyadic bands $\tau\in[\sigma/2,\sigma]$. On
each band place a net $\mathcal N_\sigma$ of endpoint pairs with
spacing
\[
   \eta(\sigma):=c\,\sigma\,\delta_\sigma,\qquad
   \delta_\sigma:=K_d\sqrt{\log\rho V_M/(\rho\kappa_d\sigma^d)},
\]
for a fixed small $c\in(0,1)$. Within the admissible range
$\tau\leq c_*\lambda$ the metric is $g=\eta+O(\tau^2/\lambda^2)$
in normal coordinates at the diamond center, so the Alexandrov
volume $\mathrm{Vol}_g(D(p,q))$ is smooth in $(p,q)$ with
$|\partial_p\mathrm{Vol}_g(D)|\leq C\,\mathrm{Vol}_g(\partial D)
\leq C\tau^{d-1}$ (divergence-theorem identity at sub-curvature
scales). Hence for $(p,q)$ in the band with nearest net point
$(p^*,q^*)$, $d_h(p,p^*)+d_h(q,q^*)\leq\eta(\sigma)$,
\[
   |\mathrm{Vol}_g(D(p,q))-\mathrm{Vol}_g(D(p^*,q^*))|
   \leq C\tau^{d-1}\eta(\sigma)
   \leq cC\,\delta_\sigma\,\rho^{-1}\!\cdot\!\rho\,\sigma^{d}
   \asymp cC\,\delta_D\,\mathrm{Vol}_g(D),
\]
using $\tau\asymp\sigma$. The count $N_D$ changes only by points
in the symmetric difference, of volume $\leq C\tau^{d-1}\eta(\sigma)$,
whose Poisson count has mean $\leq cC\,\delta_D\,\rho\,\mathrm{Vol}_g(D)$
and is controlled by the same Chernoff bound. Choosing $c$ small,
both the deterministic volume change and the random count change
stay within the F2 tolerance, so F2 transfers from $\mathcal N_\sigma$
to every diamond in the band, the tolerance enlarged by the fixed
factor $1+O(c)$ absorbed into $K_d$.

The total net $\bigcup_\sigma\mathcal N_\sigma$ has cardinality
$\sum_\sigma O((V_M/\eta(\sigma)^d)^2)$. Since $\eta(\sigma)=c\sigma\delta_\sigma$
with $\delta_\sigma\geq\delta_{c_*\lambda}$ a fixed power of $\rho$, each
$\eta(\sigma)^{-d}$ is at most a fixed power of $\rho$, so the total is a fixed
power $(\rho V_M)^{O(d)}$ times the $O(\log\rho V_M)$ bands. Taking the Chernoff
tail exponent for each net point larger
than that power (the same tunable-tail device as in
Proposition~\ref{prop:longest-chain}) makes the union bound over
the entire net close, giving F2 simultaneously at all admissible
diamonds with probability $\geq 1-(\rho V_M)^{-K_d'}$.

\emph{Scale separation.} Follows directly from the
hypothesis $\rho\lambda^d\geq c_*^{-2d}(\log\rho V_M)^2$.

\emph{Scaling at relevant scales.}
At the smoothing scale $\ell$ (used throughout the construction
of $\Phi$),
$\delta_\ell=K_d\sqrt{\log\rho V_M/(\rho\ell^d)}$,
which vanishes as a power of $\rho$. The same holds at the scaffold
scale $\alpha\ell$ and at the finest admissible scale $\tau_{\min}$,
$\delta_{\tau_{\min}}=K_d\sqrt{\log\rho V_M/(\rho\tau_{\min}^d)}$. Thus the
scale-dependent $\delta_D$ preserves the polynomial
convergence rate of the main theorem at every scale where
F2 is invoked.

\bibliographystyle{plain}
\bibliography{refs}

\end{document}